\renewcommand{\#}{^\sharp}
\newcommand{\sqsub}{\sqsubseteq}
\newcommand{\rel}{\rightharpoondown}
\newcommand{\liff}{\leftrightarrow}
\newcommand{\id}{\mbox{\rm id}}
\newcommand{\tuple}[1]{{(\,}
{#1}{\,)}}
\newcommand{\dom}[1]{{\lfloor}{#1}
{\rfloor}}
\newcommand{\Slash}[1]{\mbox{{\ooalign{\hfil $#1$ \hfil\crcr\lower.5ex\hbox{{\small \,$_/$}}}}}}
\newtheorem{corollary}{Corollary}
\newtheorem{proposition}{Proposition}
\newtheorem{example}{Example}
\newtheorem{lemma}{Lemma}
\begin{document}
\begin{frontmatter}
\title{
Kleisli, Parikh and Peleg Compositions and Liftings for Multirelations
}
\author[fur]{Hitoshi Furusawa} 
\ead{furusawa@sci.kagoshima-u.ac.jp}
\author[kaw]{Yasuo Kawahara}
\ead{kawahara@i.kyushu-u.ac.jp}
\author[str]{Georg Struth} 
\ead{g.struth@sheffield.ac.uk}
\author[tsu]{Norihiro Tsumagari} 
\ead{tsumagari@ed.sojo-u.ac.jp}
\address[fur]{Department of Mathematics and Computer Science, Kagoshima University}
\address[kaw]{Professor Emeritus, Kyushu University}
\address[str]{Department of Computer Science, The University of Sheffield}
\address[tsu]{Center for Education and Innovation, Sojo University}
\begin{abstract}
  Multirelations provide a semantic domain for computing
  systems that involve two dual kinds of nondeterminism. This paper
  presents relational formalisations of Kleisli, Parikh and Peleg
  compositions and liftings of multirelations.  These liftings are
  similar to those that arise in the Kleisli category of the powerset
  monad. We show that Kleisli composition of multirelations is
  associative, but need not have units.  Parikh composition may
  neither be associative nor have units, but yields a category on the
  subclass of up-closed multirelations. Finally, Peleg composition has
  units, but need not be associative; a category is obtained when
  multirelations are union-closed.
\end{abstract}
\begin{keyword}
%% keywords here, in the form: keyword \sep keyword
algebras of multirelations \sep liftings of multirelations \sep associativity of compositions of multirelations \sep relational calculus. 
%% MSC codes here, in the form: \MSC code \sep code
%% or \MSC[2008] code \sep code (2000 is the default)
\end{keyword}
\end{frontmatter}

\section{Introduction}
Multirelations are binary relations between sets and powersets of
sets; they therefore have the type pattern $X\times \wp(Y)$.  Applications include
reasoning about games with cooperation \cite{Parikh1983,vGL08} and
reasoning about computing systems with alternation
\cite{Chandra1981,Goldblatt1992,Peleg87a,Peleg87b} or dual angelic and
demonic nondeterminism \cite{Back1998}, including alternating automata
(cf.~\cite{Furusawa:2015:CDA}).

This article studies three kinds of composition of multirelations
$R\subseteq X\times \wp(Y)$ and $S\subseteq Y\times \wp(Z)$: the
\emph{Kleisli composition} $R\circ S$, which is defined by
\begin{equation*}
  (a,A)\in R\circ S\liff\exists B.\,(a,B)\in R\land A=\bigcup S(B),
\end{equation*}
where $S(B)=\{C\in \wp(Z)\mid\,\exists b\in B.\,(b,C)\in
S\}$;  the \emph{Parikh composition}
$R\diamond S$, which is defined by
  \begin{equation*}
(a,A)\in R\diamond S\liff\exists B.\, (a,B)\in R\land 
\tuple{\forall b\in B.\,(b,A)\in S};
  \end{equation*}
  and the \emph{Peleg composition}
  $R\ast S$, which is defined by
\begin{equation*}
  (a,A)\in R\ast S\liff\exists B.\, (a,B)\in R\land 
\tuple{\exists f.\,(\forall b\in B.\,(b,f(b))\in S) 
\land A=\bigcup f(B)},
\end{equation*}
where $f(B)$ denotes the image of $B$ under $f$. Although
multirelations are just relations of a particular type pattern, these
three compositions differ from the standard composition
$R; S$ of binary relations, which is defined by
$(a,b)\in R; S \leftrightarrow \exists c. (a,c)\in R \wedge (c,b)\in
S$. 
In the rest of the article, the composition of binary relations is referred to as \emph{relational composition}.

To our knowledge, the Kleisli composition of multirelations has not
been studied previously. It is inspired not by applications, but by
the eponymous operation on the Kleisli category of the powerset
monad~\cite{maclane:71}.  The Parikh composition arises in the
multirelational semantics of Parikh's game logic~\cite{Parikh1983}.
Finally, the Peleg composition occurs in the multirelational semantics
of Peleg's concurrent dynamic logic \cite{Peleg87a,Peleg87b}.  It has
been discussed further by Goldblatt \cite{Goldblatt1992} and studied
in detail by Furusawa and Struth \cite{Furusawa:2015:CDA,FS2015}.

Our main contribution is the study of lifting or extension operations
on multirelations within an algebraic calculus of (multi)relations,
which is introduced in Section 2.  These liftings translate Kleisli,
Parikh and Peleg's nonstandard multirelational compositions back to
relational compositions on lifted binary relations. The
approach is inspired by the well known Kleisli lifting or Kleisli
extension on homsets, which translates a nonstandard composition of
arrows in a monad back to a standard composition in the associated
Kleisli category, for instance a function composition. By this
analogy, the lifting of multirelations seems therefore natural for
studying algebras of multirelations in the setting of enriched
categories, but over $\mathit{Rel}$ instead of $\mathit{Set}$ (Section
3).

More precisely, for multirelations $R\subseteq X \times \wp(Y)$ and
$S\subseteq Y \times \wp(Z)$, we wish to lift $S$ to a relation
$\lambda (S)\subseteq \wp(Y)\times \wp(Z)$ so that we can  translate a
Kleisli, Parikh and Peleg composition $R\bullet S$ back to a relational
composition $R;\lambda(S)$.  Hence we aim at defining
$\lambda:Y\times \wp(Z)\to \wp(Y)\times \wp(Z)$ in such a way that the
three multirelational compositions satisfy the identity
\begin{equation*}
  R\bullet S = R; \lambda(S).
\end{equation*}
It is straightforward to check that the \emph{Kleisli lifting}
$S_\circ$, the \emph{Parikh lifting} $S_\diamond$ and the \emph{Peleg
  lifting} $S_\ast$ of multirelation $S$, which are defined by
\begin{align*}
  (B,A)\in S_\circ &\liff A=\bigcup S(B),\\
  (B,A)\in S_\diamond &\liff\forall b\in B.\,(b,A)\in S,\\
  (B,A)\in S_\ast &\liff \exists f.\,(\forall b\in B.\,(b,f(b))\in S) 
\land A=\bigcup f(B),
\end{align*}
respectively, satisfy the identities $R\circ S= R; S_\circ$,
$R\diamond S= R; S_\diamond$ and $R\ast S= R; S_\ast$, as
expected. Moreover, we show that associativity of relational
composition translates to $\circ$, $\diamond$ or $\ast$ if and only if
the associated lifting satisfies
\begin{equation*}
\lambda(R;\lambda(S))= \lambda(R);\lambda(S).
\end{equation*}
This is, of course, one of the defining identities of Kleisli
extensions. It is known that the compositions $\diamond$ and $\ast$
are not associative for general multirelations. Hence the previous
identity for $\lambda$ fails. We are then looking for specific classes
of or constraints on multirelations that satisfy this identity and
hence have associative Parikh or Peleg composition. In a similar
fashion, by following the standard definition of Kleisli extension,
we investigate the presence of units of composition.  A general aim
is therefore the identification of categories of multirelations with
respect to Kleisli, Parikh or Peleg composition.

For the class of up-closed multirelations under Parikh composition,
Martin and Curtis \cite{Martin2013} have already established
categorical foundations, including a multirelational lifting.  For
this class, Parikh composition is associative~\cite{Parikh1983}.  We
complement their investigation by negative results in the absence of
up-closure, presenting counterexamples both to associativity and the
existence of units (Section 5).

Kleisli composition, by contrast, is always associative, but we show
that units need not exist.  Hence multirelations under Kleisli
composition need not form categories (Section 4).

Finally, we study Peleg composition in more detail (Section 6,
7). Here it is known that associativity may fail, but that units
always exist~\cite{Furusawa:2015:CDA}. Using relation-algebraic
reasoning we prove associativity of composition for the subclass of
union-closed multirelations, and show, accordingly, that this subclass
forms a category (Section 8).

A further contribution is that we express all lifting operations
within the language of relation algebra, and can therefore derive the
results in this article by equational reasoning.  They could therefore
be checked by proof assistants such as Coq.  We believe that such
relation-algebraic proofs are generally much simpler than
set-theoretic ones, which, in the presence of Peleg composition, are
essentially higher-order. In addition, the introduction of lifting
constructions allows us to investigate categories of multirelations in
a principled  uniform way.

Many of the results in this article have already been presented in a
previous conference version \cite{FKST15}.  Beyond the presentation of
additional proofs and explanations,  a significant difference lies in
the systematic study of units and therefore of categories of
multirelations with respect to Kleisli, Parikh and Peleg composition. 

Despite of our category-theoretical motivations, however, we obtain
our results in relation-algebraic style. In fact, we mention neither
allegories \cite{allegory:1990} nor Dedekind categories
\cite{olivier:80}, which are categorical frameworks suitable for
relations. One main reason is that we use the strict point axiom
(PA$_*$) in our development.  In its presence, every Dedekind category
(equivalently, every locally complete division allegory)
becomes isomorphic to some full subcategory of 
$\mathit{Rel}$ \cite{point-axiom-J}. Another one is that our approach
to composition and lifting does not fit within the standard 
monadic framework.
We cannot associate suitable Kleisli triples
$(\wp,\eta,\mu)$ with natural transformations $\eta$ and $\mu$ with
all identities and compositions of multirelations in
$\mathit{Rel}$. 

%%%%%%%%%%%%%%%%%%%%%%%%%%%%%%%%%%%%%%%%%%%%%%%%%%%%%%%%%%%%

\section{Preliminaries}\label{sec-pre}

This section presents the calculus of relations used in this article.
Many properties used can be found in standard textbooks; hence we list
them without proof.  The typed relation algebras described by Freyd
and Scedrov~\cite{allegory:1990}, Bird and de Moor~\cite{Bd1997} and
Schmidt~\cite{schmidt2011relational} are most closely related, but use
slightly different notation. 

We use $I$ to denote any singleton set.
A (binary) relation $\alpha$ from set $X$ to set $Y$, 
written $\alpha:X\rel Y$, is a subset $\alpha\subseteq X\times Y$. 
The empty relation $0_{XY}:X\rel Y$ and 
the universal relation $\nabla_{XY}:X\rel Y$
are defined as $0_{XY}=\emptyset$ 
and $\nabla_{XY}=X\times Y$, respectively.
%%%%%%
The inclusion, union and intersection of relations 
$\alpha, \alpha':X\rel Y$ are  
denoted by $\alpha \sqsubseteq \alpha'$, $\alpha \sqcup \alpha'$ 
and $\alpha \sqcap \alpha'$, respectively. 
%%%%%%
The converse of relation $\alpha:X\rel Y$ 
is denoted by $\alpha{\#}$.
The identity relation $\{(x,x)\mid x\in X\}$ 
over $X$ is denoted by $\id_X$. 
For relation $\alpha:X\rel Y$, the partial identity 
$\{(x,x)\mid \exists y.~(x,y)\in\alpha\}$ is denoted by
$\dom\alpha$ and called \emph{domain} relation of $\alpha$. 
The standard composition of relations
(which includes functions) is denoted
by juxtaposition. For example, the composite
of relation $\alpha:X\rel Y$ followed
by $\beta:Y\rel Z$ is denoted by
$\alpha\beta$, and of course the composition
of functions $f:X\to Y$ and $g:Y\to Z$
by $fg$. In addition, the traditional notation
$f(x)$ is written $xf$ as a composite
of functions $x:I\to X$ and $f:X\to Y$.

Remember that a relation
$\alpha$ is \emph{univalent} iff $\alpha\#\alpha\sqsub \id_Y$, 
and 
it is \emph{total} iff $\id_X\sqsub\alpha\alpha\#$. 
So, $\alpha$ is a \emph{partial function} (\emph{pfn}, for
short) iff $\alpha\#\alpha\sqsub \id_Y$, and a 
(\emph{total}) \emph{function} 
(\emph{tfn}, for short) iff $\alpha\#\alpha\sqsub \id_Y$ 
and $\id_X\sqsub\alpha\alpha\#$.
Moreover, a singleton set $I$ satisfies
$0_{II}\neq\id_I=\nabla_{II}$ and
$\nabla_{XI}\nabla_{IX}=\nabla_{XX}$
for all sets $X$. 
A tfn $x:I\to X$ is called $I$-\emph{point} of $X$ 
and is denoted by $x\,\dot\in\,X$.
It is easy to see that $xx\#=x\nabla_{XI}
=\id_I$. For a relation $\rho:I\rel X$
and an $I$-point $x:I\to X$, 
we write
$x\,\dot\in\,\rho$ instead of $x\sqsub\rho$.

Some proofs below refer to
the axiom of subobjects (Sub)
and the Dedekind formula (DF), that is,
\[
\begin{array}{ll}
\mbox{(Sub)}\, &\forall\rho:I\rel X~\exists j:S\to X.~
\tuple{\rho=\nabla_{IS}j}\land\tuple{jj\#=\id_S},\\
\mbox{(DF)}\, & \alpha \beta \sqcap \gamma \sqsubseteq  
\alpha (\beta \sqcap\alpha\#\gamma).
\end{array}
\] 
In fact, the subset $S\subseteq X$ and tfn $j:S\to X$ from 
(Sub) are $S=\{x\mid(*,x)\in\rho\}$ and $j=\{(x,x)\mid(*,x)\in\rho\}$.
Note that (DF) is equivalent to 
\begin{center}
(DF$_*$)
$\alpha \beta \sqcap \gamma \sqsubseteq (\alpha \sqcap \gamma
\beta\#)(\beta \sqcap \alpha\#\gamma)$.
\end{center} 
Also note that the equation
$\nabla_{ZY}(\nabla_{YX}\alpha\sqcap\id_Y)=\nabla_{ZX}\alpha$ follows
from (DF).  See \cite{schmidt2011relational} for more details on basic
properties in the calculus of relations.

\subsection{Subidentities and domain relations}
First, we list some simple properties of subidentities without proof.
\begin{proposition}\label{nabla-basic}
Let $\alpha:X\rel Y$ be a relation and
$v,v'\sqsub\id_Y$. 
\begin{enumerate}
\item $\alpha\sqcap\nabla_{XY}v=\alpha v$.
\item $v\sqsub v'\liff\nabla_{YY}v\sqsub
\nabla_{YY}v'$.
\item $v=v'\liff\nabla_{YY}v=\nabla_{YY}v'$.
\end{enumerate}
\end{proposition}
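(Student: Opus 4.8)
The plan is to reduce all three parts to the single auxiliary identity
\[
\id_Y\sqcap\nabla_{YY}v=v\qquad\text{for }v\sqsub\id_Y ,
\]
which I would establish first. The inclusion $v\sqsub\id_Y\sqcap\nabla_{YY}v$ is immediate, since $v\sqsub\id_Y$ by hypothesis and $v=\id_Y v\sqsub\nabla_{YY}v$. For the reverse inclusion I would apply (DF$_*$) to the decomposition $\nabla_{YY}v$, obtaining $\nabla_{YY}v\sqcap\id_Y\sqsub(\nabla_{YY}\sqcap\id_Y v\#)(v\sqcap\nabla_{YY}\#\id_Y)$. Taking converses of $v\sqsub\id_Y$ yields $v\#\sqsub\id_Y$, hence $\nabla_{YY}\sqcap v\#=v\#$ and $v\#v\sqsub\id_Y v=v$, while trivially $v\sqcap\nabla_{YY}\#\id_Y=v$; combining these bounds gives $\nabla_{YY}v\sqcap\id_Y\sqsub v$.

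For part (a), the inclusion $\alpha v\sqsub\alpha\sqcap\nabla_{XY}v$ is routine: $\alpha v\sqsub\alpha\id_Y=\alpha$, and $\alpha v\sqsub\nabla_{XY}v$ by monotonicity of composition. For the converse I would use (DF) on $\alpha\id_Y\sqcap\nabla_{XY}v$, which gives $\alpha\sqcap\nabla_{XY}v\sqsub\alpha(\id_Y\sqcap\alpha\#\nabla_{XY}v)$. Since $\alpha\#\nabla_{XY}v\sqsub\nabla_{YY}v$, the auxiliary identity yields $\id_Y\sqcap\alpha\#\nabla_{XY}v\sqsub\id_Y\sqcap\nabla_{YY}v=v$, so the right-hand term is below $\alpha v$ and the remaining inclusion $\alpha\sqcap\nabla_{XY}v\sqsub\alpha v$ follows.

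For part (b), the direction $v\sqsub v'\To\nabla_{YY}v\sqsub\nabla_{YY}v'$ is just monotonicity of composition. Conversely, from $\nabla_{YY}v\sqsub\nabla_{YY}v'$ I would intersect both sides with $\id_Y$ and apply the auxiliary identity to each side, which gives $v\sqsub v'$ at once. Part (c) then follows from (b) by antisymmetry of $\sqsub$: $v=v'$ iff $v\sqsub v'$ and $v'\sqsub v$ iff $\nabla_{YY}v\sqsub\nabla_{YY}v'$ and $\nabla_{YY}v'\sqsub\nabla_{YY}v$ iff $\nabla_{YY}v=\nabla_{YY}v'$.

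The only genuine calculation is the auxiliary identity, and there the single point requiring care is selecting which factor of $\nabla_{YY}v$ to feed into (DF$_*$) and then using $v\#\sqsub\id_Y$ to collapse both parenthesised factors; everything else is monotonicity together with the trivial absorptions $\id_Y v=v$ and $v\sqcap\nabla_{YY}=v$. I therefore do not expect any real obstacle: the argument is entirely equational and could be discharged mechanically.
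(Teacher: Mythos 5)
Your proof is correct. The paper states this proposition without proof (it appears under ``we list them without proof''), so there is no argument of the authors' to compare against; your reduction of all three parts to the identity $\id_Y\sqcap\nabla_{YY}v=v$, established via (DF$_*$) together with $v\#\sqsub\id_Y$, is exactly the standard Dedekind-formula argument one would supply here, and parts (a)--(c) follow as you describe.
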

%\begin{proof}
%\begin{enumerate}
%\item
%follows from
%\[\begin{array}{ccll}
%\alpha v
%&\sqsub& \alpha\sqcap\nabla_{XY}v
%& \{~v\sqsub\id_Y~\} \\
%&\sqsub& (\alpha v\#\sqcap\nabla_{XY})v
%& \{~\mbox{(DF)}~\} \\
%&=& \alpha v\#v \\
%&=& \alpha v.
%\end{array}\]
%\item follows from
%\[\begin{array}{ccll}
%v\sqsub v'
%&\to& \nabla_{YY}v\sqsub\nabla_{YY}v' \\
%&\to& \id_Y\sqcap\nabla_{YY}v
%\sqsub\id_Y\sqcap\nabla_{YY}v' \\
%&\to& v\sqsub v' & \{~\mbox{(a)}~\} \\
%\end{array}\]
%\item is a just corollary of (b).
%\end{enumerate}
%\end{proof}

The domain relation $\dom\alpha\sqsub\id_X$ 
of a relation $\alpha:X\rel Y$ can be defined explicitly as 
\[
 \dom\alpha=\alpha\alpha\#
\sqcap\id_X=\nabla_{XY}\alpha\#
\sqcap\id_X.
\]
\begin{proposition}\label{domain-basic}
Let $\alpha,\alpha':X\rel Y$ and $\beta
:Y\rel Z$ be relations. 
\begin{enumerate}
\item $\alpha=\dom\alpha\alpha$. 
\item $\dom{\alpha\beta}\sqsub\dom\alpha$
and $\dom{\alpha\beta}=\dom{\alpha\dom\beta}$. 
\item $\dom{\alpha\sqcap\alpha'}
=\alpha\alpha^{\prime\,\sharp}\sqcap\id_X$.
\item If $\beta$ is total, then $\dom{\alpha\beta}=\dom\alpha$. 
%$\beta:\mbox{total}\to\dom{\alpha\beta}=\dom\alpha$,
\item If $v\sqsub\id_X$, then $\dom{v\alpha}=v\dom\alpha$. 
%$v\sqsub\id_X\to\dom{v\alpha}=v\dom\alpha$,
\item $\nabla_{XX}\dom\alpha=\nabla_{XY}\alpha\#$.

\end{enumerate}
\end{proposition}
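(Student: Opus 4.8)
\medskip\noindent\emph{Proof plan.}\ The plan is to work directly from the two explicit formulas $\dom\alpha=\alpha\alpha\#\sqcap\id_X=\nabla_{XY}\alpha\#\sqcap\id_X$, and to prove (a) first, since (b), (d) and (e) then reduce to it together with monotonicity and two elementary facts: a subidentity $v\sqsub\id_X$ satisfies $v\#=v=vv$, so $\dom v=v$; and a composite of subidentities is again a subidentity. For (a), the inclusion $\dom\alpha\,\alpha\sqsub\id_X\alpha=\alpha$ is immediate from $\dom\alpha\sqsub\id_X$. For the converse $\alpha\sqsub\dom\alpha\,\alpha$ I would instantiate (DF$_*$), written as $pq\sqcap r\sqsub(p\sqcap rq\#)(q\sqcap p\#r)$, with $p:=\id_X$, $q:=\alpha$ and $r:=\alpha$: the left-hand side becomes $\id_X\alpha\sqcap\alpha=\alpha$, and the right-hand side becomes $(\id_X\sqcap\alpha\alpha\#)(\alpha\sqcap\alpha)=\dom\alpha\,\alpha$.

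Granting (a), item (b) splits as follows. Using the second domain formula and $(\alpha\beta)\#=\beta\#\alpha\#$, one has $\dom{\alpha\beta}=\nabla_{XZ}\beta\#\alpha\#\sqcap\id_X$; since $\nabla_{XZ}\beta\#\sqsub\nabla_{XY}$, this is contained in $\nabla_{XY}\alpha\#\sqcap\id_X=\dom\alpha$, which gives the first half. For $\dom{\alpha\beta}=\dom{\alpha\dom\beta}$: by (a), $\alpha\beta=(\alpha\dom\beta)\beta$, so the first half of (b) yields $\dom{\alpha\beta}\sqsub\dom{\alpha\dom\beta}$; conversely, since $\dom\beta$ is symmetric and idempotent, $\dom{\alpha\dom\beta}=\alpha\dom\beta\,\alpha\#\sqcap\id_X$, and from $\dom\beta\sqsub\beta\beta\#$ we get $\dom{\alpha\dom\beta}\sqsub\alpha\beta\beta\#\alpha\#\sqcap\id_X=\dom{\alpha\beta}$. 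Item (d) is then immediate: totality of $\beta$ means $\id_Y\sqsub\beta\beta\#$, hence $\dom\beta=\beta\beta\#\sqcap\id_Y=\id_Y$, so $\dom{\alpha\beta}=\dom{\alpha\dom\beta}=\dom\alpha$. Item (e) also follows from (b): for $v\sqsub\id_X$ one gets $\dom{v\alpha}=\dom{v\dom\alpha}$, and $v\dom\alpha$ is a composite of subidentities, hence a subidentity, hence equal to its own domain, so $\dom{v\alpha}=v\dom\alpha$.

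For (c), the inclusion $\dom{\alpha\sqcap\alpha'}\sqsub\alpha\alpha^{\prime\,\sharp}\sqcap\id_X$ is just monotonicity of composition and converse applied to $\dom{\alpha\sqcap\alpha'}=(\alpha\sqcap\alpha')(\alpha\sqcap\alpha')\#\sqcap\id_X$. For the reverse inclusion I would again use (DF$_*$), now with $p:=\alpha$, $q:=\alpha^{\prime\,\sharp}$ and $r:=\id_X$: the left-hand side is $\alpha\alpha^{\prime\,\sharp}\sqcap\id_X$ and the right-hand side simplifies to $(\alpha\sqcap\alpha')(\alpha\sqcap\alpha')\#$, so intersecting with $\id_X$ yields $\alpha\alpha^{\prime\,\sharp}\sqcap\id_X\sqsub\dom{\alpha\sqcap\alpha'}$. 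Finally, (f) is immediate from the second domain formula together with the identity $\nabla_{ZY}(\nabla_{YX}\gamma\sqcap\id_Y)=\nabla_{ZX}\gamma$ recorded in Section~\ref{sec-pre}: taking $\gamma:=\alpha\#:Y\rel X$ and $Z:=X$ gives $\nabla_{XX}(\nabla_{XY}\alpha\#\sqcap\id_X)=\nabla_{XY}\alpha\#$, that is, $\nabla_{XX}\dom\alpha=\nabla_{XY}\alpha\#$.

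The only genuinely non-routine steps are the two instantiations of the modular law used for the $\sqsup$-directions of (a) and (c): one has to choose the pattern $p,q,r$ so that $(p\sqcap rq\#)(q\sqcap p\#r)$ collapses to $\dom\alpha\,\alpha$, respectively to $(\alpha\sqcap\alpha')(\alpha\sqcap\alpha')\#$. Everything else is monotonicity, the behaviour of converse on subidentities, and bookkeeping with universal relations.
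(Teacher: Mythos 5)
Your proof is correct: both inclusions in (a) and (c) are obtained by sound instantiations of (DF$_*$), and (b), (d), (e), (f) then follow exactly as you describe from the two explicit formulas $\dom\alpha=\alpha\alpha\#\sqcap\id_X=\nabla_{XY}\alpha\#\sqcap\id_X$, the symmetry and idempotence of subidentities, and the recorded identity $\nabla_{ZY}(\nabla_{YX}\alpha\sqcap\id_Y)=\nabla_{ZX}\alpha$. The paper deliberately states this proposition without proof (it is listed among standard facts from the relational calculus), so there is no proof to compare against; your derivation is the standard one and fills that gap correctly.
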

%Proof is omitted.

%\medskip
The following properties of partial functions are useful below.
\begin{proposition}\label{pfn-basic}
Let $\alpha,\beta:X\rel Y$ be relations.
%Then
\begin{enumerate}
\item If $\beta$ is a pfn satisfying $\alpha\sqsub\beta$ and 
$\dom\alpha=\dom\beta$, then $\alpha=\beta$.
%
%$\tuple{\alpha\sqsub\beta}\land
%\tuple{\dom\alpha=\dom\beta}\land
%\tuple{\beta:\mbox{pfn}}\to\tuple{
%\alpha=\beta}$,
\item If $\beta$ is a pfn satisfying $\alpha\sqsub\beta$, then 
$\alpha=\dom\alpha\beta$.
%
%$\tuple{\alpha\sqsub\beta}\land
%\tuple{\beta:\mbox{pfn}}\to\tuple{
%\alpha=\dom\alpha\beta}$,
\item If $\beta$ is a pfn and $v\sqsub\id_Y$, then 
$\beta v=\dom{\beta v}\beta$.
%
%$\tuple{v\sqsub\id_Y}\land
%\tuple{\beta:\mbox{pfn}}\to\tuple{
%\beta v=\dom{\beta v}\beta}$,
\item $f=fv$ iff $\dom{f\#}\sqsub v$ for
each pfn $f:X\rel Y$ and $v\sqsub\id_Y$.
%
%$f=fv\liff\dom{f\#}\sqsub v$ for
%each pfn $f:X\to Y$ and $v\sqsub\id_Y$.
\end{enumerate}
\end{proposition}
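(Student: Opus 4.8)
The plan is to prove each of the four items of Proposition~\ref{pfn-basic} by short relation-algebraic manipulations, relying on the characterisation of pfns as relations $\beta$ with $\beta\#\beta\sqsub\id_Y$, on the domain identities $\dom\beta=\beta\beta\#\sqcap\id_X=\nabla_{XY}\beta\#\sqcap\id_X$, on Proposition~\ref{domain-basic}(a) ($\alpha=\dom\alpha\,\alpha$), and on the Dedekind formula (DF) together with Proposition~\ref{nabla-basic}.

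For (a), I would argue that from $\alpha\sqsub\beta$ we get $\dom\alpha\sqsub\dom\beta$, hence under the hypothesis $\dom\alpha=\dom\beta$. Then $\beta=\dom\beta\,\beta=\dom\alpha\,\beta$, and I want to show $\dom\alpha\,\beta\sqsub\alpha$. Using (DF) on $\dom\alpha\,\beta\sqcap\alpha=\dom\alpha\,\beta\sqcap\nabla_{XY}$ one reduces to $\dom\alpha(\beta\sqcap(\dom\alpha)\#\alpha)=\dom\alpha(\beta\sqcap\dom\alpha\,\alpha)=\dom\alpha(\beta\sqcap\alpha)=\dom\alpha\,\alpha=\alpha$; combined with $\alpha\sqsub\beta=\dom\alpha\,\beta$ this gives $\alpha=\dom\alpha\,\beta=\beta$. (Here I am implicitly using that for a pfn the ``small'' relation below it is determined by its domain; the Dedekind step is the crux.) Item (b) is then essentially the statement proved en route: from $\alpha\sqsub\beta$ and $\beta$ a pfn, $\dom\alpha\,\beta\sqsub\alpha$ follows by the same (DF) computation, while $\alpha=\dom\alpha\,\alpha\sqsub\dom\alpha\,\beta$ is immediate, so $\alpha=\dom\alpha\,\beta$. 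In fact (a) can be re-derived from (b): $\alpha=\dom\alpha\,\beta=\dom\beta\,\beta=\beta$.

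For (c), observe that $\beta v\sqsub\beta$ and that $\beta v$ is again a pfn (a restriction of a pfn), so applying (b) with $\alpha$ replaced by $\beta v$ and the pfn $\beta$ gives $\beta v=\dom{\beta v}\,\beta$ directly. For (d), I would prove both implications. If $f=fv$, then $\dom{f\#}=f\#f\sqcap\id_Y$; since $f$ is a pfn, $f\#f\sqsub\id_Y$, and from $f=fv$ we get $f\#f=(fv)\#(fv)=v f\#f v\sqsub v\id_Y v=v$, so $\dom{f\#}\sqsub v\sqcap\id_Y=v$. Conversely, if $\dom{f\#}\sqsub v$, then using Proposition~\ref{domain-basic}(a) applied to $f\#$ (i.e.\ $f\#=\dom{f\#}\,f\#$, hence $f=f\,\dom{f\#}\sqsub f v\sqsub f$), we conclude $f=fv$.

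The main obstacle is the Dedekind-formula step underlying (a)/(b): getting the inclusion $\dom\alpha\,\beta\sqsub\alpha$ from $\alpha\sqsub\beta$ and $\beta\#\beta\sqsub\id_Y$ requires choosing the right instantiation of (DF) and simplifying $(\dom\alpha)\#\alpha=\dom\alpha\,\alpha=\alpha$ (since $\dom\alpha$ is a subidentity, hence symmetric and idempotent). Once that inclusion is in hand, everything else is bookkeeping with domain identities and the pfn condition. I would present (b) first as the workhorse lemma and obtain (a), (c) as corollaries, treating (d) separately via the converse-domain trick.
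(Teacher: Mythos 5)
The paper states Proposition~\ref{pfn-basic} without proof (it is listed among the standard facts), so your argument has to stand on its own. Items (a), (c) and (d) are fine: deriving (a) and (c) from (b), and handling (d) via $f=f\dom{f\#}$ together with $f\#f=v f\#f v\sqsub v$, is exactly the intended bookkeeping. The problem is the one step you yourself flag as the crux, namely the inclusion $\dom\alpha\,\beta\sqsub\alpha$ in (b). Your Dedekind instantiation is circular: applying (DF) to $\dom\alpha\,\beta\sqcap\alpha$ gives $\dom\alpha\,\beta\sqcap\alpha\sqsub\dom\alpha(\beta\sqcap\dom\alpha\,\alpha)=\dom\alpha(\beta\sqcap\alpha)=\alpha$, which only bounds the meet $\dom\alpha\,\beta\sqcap\alpha$ and is trivially true. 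To conclude anything about $\dom\alpha\,\beta$ itself you would need $\dom\alpha\,\beta\sqcap\alpha=\dom\alpha\,\beta$, i.e.\ $\dom\alpha\,\beta\sqsub\alpha$ --- precisely the goal. The equation ``$\dom\alpha\,\beta\sqcap\alpha=\dom\alpha\,\beta\sqcap\nabla_{XY}$'' on which you propose to ``use (DF)'' is just an unproved restatement of that goal, and the alternative reading with $\gamma=\nabla_{XY}$ only returns $\dom\alpha\,\beta\sqsub\dom\alpha\,\beta$.

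The repair is easy and needs no Dedekind formula at all: from $\dom\alpha=\alpha\alpha\#\sqcap\id_X\sqsub\alpha\alpha\#$, monotonicity of converse applied to $\alpha\sqsub\beta$, and univalence of $\beta$, one gets $\dom\alpha\,\beta\sqsub\alpha\alpha\#\beta\sqsub\alpha\beta\#\beta\sqsub\alpha\,\id_Y=\alpha$. With this substituted for your (DF) step the rest of the proposal goes through. Two minor remarks: in (c) you do not need to observe that $\beta v$ is a pfn, since (b) only requires the \emph{larger} relation to be univalent; and in (d) the backward direction uses only $f=f\dom{f\#}$, not the pfn hypothesis.
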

%\begin{proof}
%\begin{enumerate}
%\item follows from
%\[\begin{array}{ccll}
%\beta &=& \dom\beta\beta \\
%&\sqsub& \alpha\alpha\#\beta
%& \{~\dom\beta=\dom\alpha\sqsub\alpha
%\alpha\#~\} \\
%&\sqsub& \alpha\beta\#\beta
%& \{~\alpha\sqsub\beta~\} \\
%&\sqsub& \alpha. & \{~\beta:\mbox{pfn}~\}
%\end{array}\]
%\item $\alpha=\dom\alpha\alpha\sqsub
%\dom\alpha\beta$ and
%$\dom{\dom\alpha\beta}=\dom\alpha
%\dom\beta=\dom\alpha$ by
%$\alpha\sqsub\beta$. Hence the assertion
%is valid by (a). 
%\item is direct from (b). 
%\item 
%($\gets$) $\dom{f\#}\sqsub v$ implies 
%$f=f\dom{f\#}\sqsub fv\sqsub f$. So, $f=fv$. \\
%($\to$) $f=fv$ implies $\dom{f\#}=f\#f=f\#fv
%\sqsub v$. 
%%($\gets$) $\dom{f\#}\sqsub v\to
%%f=f\dom{f\#}\sqsub fv\sqsub f\to f=fv$. \\
%%($\to$) $f=fv\to\dom{f\#}=f\#f=f\#fv
%%\sqsub v$. 
%%\hfill$\square$
%
%\end{enumerate}
%\end{proof}
%%%

%%%
\subsection{Residual composition}
%%%
Notions of residuation are widely used in algebra \cite{GJKO},  relation algebra \cite{schmidt2011relational} and allegories~\cite{allegory:1990}. 

Let $\alpha:X\rel Y$ and $\beta:Y\rel Z$
be relations.
The \emph{left residual composition}
$\alpha\lhd\beta$ of $\alpha$ followed
by $\beta$ is a relation such that 
%$\delta\sqsub\alpha\lhd\beta$
%iff $\delta\beta\#\sqsub\alpha$. 
\[\delta\sqsub\alpha\lhd\beta
\liff\delta\beta\#\sqsub\alpha.\]
The \emph{right residual composition}
$\alpha\rhd\beta$ of $\alpha$ followed
by $\beta$ is a relation such that 
%$\delta\sqsub\alpha\rhd\beta$ 
%iff $\alpha\#\delta\sqsub\beta$. 
\[\delta\sqsub\alpha\rhd\beta
\liff\alpha\#\delta\sqsub\beta.\]
The two compositions are related by conversion as
$\alpha\lhd\beta =(\beta\#\rhd\alpha\#)\#$.  Set-theoretically,
moreover, 
\[
\begin{array}{rcl}
(x,z)\in\alpha\lhd\beta&\liff&
\forall y\in Y.~\tuple{(x,y)\in\alpha
\gets(y,z)\in\beta},\\
(x,z)\in\alpha\rhd\beta&\liff&
\forall y\in Y.~\tuple{(x,y)\in\alpha
\to(y,z)\in\beta}.  %\\
%\alpha\lhd\beta
%&=&(\beta\#\rhd\alpha\#)\#.
\end{array}
\]
In the literature, residuals are often defined as adjoints of
composition without using converse, that is,
$\delta\sqsub\alpha/\gamma\leftrightarrow
\delta\gamma\sqsub\alpha\leftrightarrow \gamma\sqsub
\delta\backslash\alpha$, 
so that $\alpha\lhd \beta = \alpha/\beta\#$ and
$\alpha\rhd \beta = \alpha\# \backslash\beta$.  We prefer $\lhd$ and
$\rhd$ to $/$ and $\backslash$ as they make it easier to recognise
sources and targets of relations.  Freyd and Scedrov use the same
concept and notation for residuals, but call them \emph{divisions}.

%%%
\begin{proposition}\label{lhd-basic}
Let $\alpha,\alpha':X\rel Y$,
$\beta,\beta':Y\rel Z$ and $\gamma:Z\to W$
be relations. %Then
\begin{enumerate}
\item $\alpha'\sqsub\alpha\land\beta
\sqsub\beta'$ implies $\alpha\rhd\beta\sqsub
\alpha'\rhd\beta'$ and
$\alpha\sqsub\alpha'\land\beta'
\sqsub\beta$ implies $\alpha\lhd\beta\sqsub
\alpha'\lhd\beta'$. 
%$\alpha'\sqsub\alpha\land\beta
%\sqsub\beta'\to\alpha\rhd\beta\sqsub
%\alpha'\rhd\beta'$, \\
%$\alpha\sqsub\alpha'\land\beta'
%\sqsub\beta\to\alpha\lhd\beta\sqsub
%\alpha'\lhd\beta'$,
\item $\alpha\beta\rhd\gamma
=\alpha\rhd(\beta\rhd\gamma)$ and
$\alpha\lhd\beta\gamma
=(\alpha\lhd\beta)\lhd\gamma$. 
\item $(\alpha\sqcup\alpha')\rhd\beta
=(\alpha\rhd\beta)\sqcap(\alpha'\rhd\beta)$ and
$\alpha\lhd(\beta\sqcup\beta')
=(\alpha\lhd\beta)\sqcap(\alpha\lhd\beta')$. 
\item $\alpha\rhd(\beta\sqcap\beta')
=(\alpha\rhd\beta)\sqcap(\alpha\rhd\beta')$ and
$(\alpha\sqcap\alpha')\lhd\beta
=(\alpha\lhd\beta)\sqcap(\alpha'\lhd\beta)$. 
\item $\alpha:\mbox{\rm tfn}$ implies 
$\alpha\rhd\beta=\alpha\beta$ and
$\beta\#:\mbox{\rm tfn}$ implies
$\alpha\lhd\beta=\alpha\beta$. 
%$\alpha:\mbox{\rm tfn}\to
%\alpha\rhd\beta=\alpha\beta$, \\
%$\beta\#:\mbox{\rm tfn}\to
%\alpha\lhd\beta=\alpha\beta$,
\item $\alpha(\beta\lhd\gamma)\sqsub\alpha\beta\lhd\gamma$ and
$(\alpha\rhd\beta)\gamma\sqsub\alpha\rhd\beta\gamma$. 
\item $\alpha:\mbox{\rm tfn}$ implies
$\alpha(\beta\lhd\gamma)
=\alpha\beta\lhd\gamma$ and
$\gamma\#:\mbox{\rm tfn}$ implies 
$(\alpha\rhd\beta)\gamma
=\alpha\rhd\beta\gamma$. 
%$\gamma\#:\mbox{\rm tfn}\to
%(\alpha\rhd\beta)\gamma
%=\alpha\rhd\beta\gamma$, \\
%$\alpha:\mbox{\rm tfn}\to
%\alpha(\beta\lhd\gamma)
%=\alpha\beta\lhd\gamma$,
\item $(\alpha\rhd\beta)\lhd\gamma
=\alpha\rhd(\beta\lhd\gamma)$.
\end{enumerate}
\end{proposition}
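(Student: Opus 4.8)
The plan is to derive every clause from the two defining adjunctions of the residuals, $\delta\sqsub\alpha\rhd\beta\Iff\alpha\#\delta\sqsub\beta$ and $\delta\sqsub\alpha\lhd\beta\Iff\delta\beta\#\sqsub\alpha$, together with three elementary observations. First, the extensionality principle that $A=B$ whenever $\delta\sqsub A\Iff\delta\sqsub B$ holds for all $\delta$ (instantiate at $\delta:=A$ and at $\delta:=B$). Second, the two counit inequalities $\alpha\#(\alpha\rhd\beta)\sqsub\beta$ and $(\alpha\lhd\beta)\beta\#\sqsub\alpha$, obtained by instantiating the adjunctions at $\delta:=\alpha\rhd\beta$ and $\delta:=\alpha\lhd\beta$. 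Third, the identity $\alpha\lhd\beta=(\beta\#\rhd\alpha\#)\#$ recorded above which, combined with anti-distributivity of converse over composition and its distributivity over $\sqcup$ and $\sqcap$, turns each $\lhd$ statement into the conversion-dual of a $\rhd$ statement. Accordingly I shall mainly establish the $\rhd$ statements from the adjunctions and obtain the $\lhd$ statements by this dualisation; the clauses (f)--(h), which mix or use both residuals, I will also check directly.

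For (a), put $\delta:=\alpha\rhd\beta$; then $\alpha'\#\delta\sqsub\alpha\#\delta\sqsub\beta\sqsub\beta'$ by monotonicity of composition and converse, hence $\delta\sqsub\alpha'\rhd\beta'$. Clauses (c), (d) and (h) are pure manipulations of the adjunctions closed by extensionality: for instance (d) is $\delta\sqsub\alpha\rhd(\beta\sqcap\beta')\Iff\alpha\#\delta\sqsub\beta\sqcap\beta'\Iff(\alpha\#\delta\sqsub\beta\ \mbox{and}\ \alpha\#\delta\sqsub\beta')\Iff\delta\sqsub(\alpha\rhd\beta)\sqcap(\alpha\rhd\beta')$; (c) has the same shape but additionally uses that composition distributes over $\sqcup$ on the left, $(\alpha\sqcup\alpha')\#\delta=\alpha\#\delta\sqcup\alpha'\#\delta$; and (b), (h) chain two adjunction steps, e.g.\ for (b), $\delta\sqsub\alpha\beta\rhd\gamma\Iff\beta\#\alpha\#\delta\sqsub\gamma\Iff\alpha\#\delta\sqsub\beta\rhd\gamma\Iff\delta\sqsub\alpha\rhd(\beta\rhd\gamma)$ using $(\alpha\beta)\#=\beta\#\alpha\#$ and associativity of composition, and for (h), $\delta\sqsub(\alpha\rhd\beta)\lhd\gamma\Iff\delta\gamma\#\sqsub\alpha\rhd\beta\Iff\alpha\#\delta\gamma\#\sqsub\beta\Iff\alpha\#\delta\sqsub\beta\lhd\gamma\Iff\delta\sqsub\alpha\rhd(\beta\lhd\gamma)$.

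The clauses that actually invoke functionality are (e), (f) and (g), and there the two inclusions making up each equation come from the two defining inequalities of a tfn. For (f), the counit inequality gives $(\alpha(\beta\lhd\gamma))\gamma\#=\alpha((\beta\lhd\gamma)\gamma\#)\sqsub\alpha\beta$, hence $\alpha(\beta\lhd\gamma)\sqsub\alpha\beta\lhd\gamma$ by the adjunction, and symmetrically $\alpha\#((\alpha\rhd\beta)\gamma)=(\alpha\#(\alpha\rhd\beta))\gamma\sqsub\beta\gamma$ gives $(\alpha\rhd\beta)\gamma\sqsub\alpha\rhd\beta\gamma$. For (e) with $\alpha$ a tfn, univalence $\alpha\#\alpha\sqsub\id_Y$ yields $\alpha\#(\alpha\beta)\sqsub\beta$, so $\alpha\beta\sqsub\alpha\rhd\beta$, while the counit together with totality $\id_X\sqsub\alpha\alpha\#$ yields $\alpha\rhd\beta\sqsub\alpha\alpha\#(\alpha\rhd\beta)\sqsub\alpha\beta$; the $\lhd$ half follows from the $\rhd$ half applied to the tfn $\beta\#$, via $\alpha\lhd\beta=(\beta\#\rhd\alpha\#)\#=(\beta\#\alpha\#)\#=\alpha\beta$. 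For (g): the inclusions $\sqsub$ are instances of (f); for the reverse inclusion of the first half, with $\alpha$ a tfn, put $\delta:=\alpha\beta\lhd\gamma$, so $\delta\gamma\#\sqsub\alpha\beta$, whence $\alpha\#\delta\gamma\#\sqsub\alpha\#\alpha\beta\sqsub\beta$ by univalence, i.e.\ $\alpha\#\delta\sqsub\beta\lhd\gamma$, and then totality gives $\delta\sqsub\alpha\alpha\#\delta\sqsub\alpha(\beta\lhd\gamma)$; the second half, with $\gamma\#$ a tfn, runs the same way with pre- and post-composition swapped, using $\gamma\gamma\#\sqsub\id$ (univalence of $\gamma\#$) and $\id\sqsub\gamma\#\gamma$ (totality of $\gamma\#$).

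I do not anticipate a genuine obstacle: via the adjunctions every clause collapses to an elementary inclusion between composites, and the identity $\alpha\lhd\beta=(\beta\#\rhd\alpha\#)\#$ removes the need to repeat arguments for $\lhd$. The only place needing care is the bookkeeping in (e) and (g), where one must keep straight whether the univalence inequality $\alpha\#\alpha\sqsub\id$ or the totality inequality $\id\sqsub\alpha\alpha\#$ is being used, and on which side a composite is being cancelled; extracting the two directions of each equation from the two defining inequalities of a tfn is really the only content beyond routine residuation algebra.
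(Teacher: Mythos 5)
Your proof is correct: every clause follows from the two residuation adjunctions, the counit inequalities, the conversion identity $\alpha\lhd\beta=(\beta\#\rhd\alpha\#)\#$, and the univalence/totality inequalities of a tfn, exactly as you lay out, and I see no gaps in the type-checking or in the two directions of (e) and (g). The paper deliberately states this proposition without proof as a standard fact of the relational calculus, and your argument is precisely the standard adjunction-based derivation it is alluding to.
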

%Proof is omitted. %\hfill$\square$\\
%

\subsection{Powerset functor $\wp$}
The powerset functor $\wp$ and the \emph{membership relation}
${\ni}_Y:\wp(Y)\rel Y$ satisfy the following laws. 
\\[6pt]
\qquad (M1) $\ni_Y^@\sqsub\id_{\wp(Y)}$, \\[5pt]
\qquad (M2) $\forall\alpha:X\rel Y.~
\tuple{\dom{\alpha^@}=\id_X}$, \\[6pt]
where $\alpha^@
%={\tt syq}(\alpha\#,{\ni}_Y\#)
=(\alpha\rhd{\ni}_Y\#)
\sqcap(\alpha\lhd{\ni}_Y\#)
$. 
Note that 
\[\begin{array}{ccll}
(\alpha^@)\#\alpha^@
%&=& ({\tt syq}(\alpha\#,{\ni}_Y\#))\#
%{\tt syq}(\alpha\#,{\ni}_Y\#) \\
&=& (({\ni}_Y\lhd\alpha\#)\sqcap
({\ni}_Y\rhd\alpha\#))((\alpha\rhd{\ni}_Y\#)
\sqcap(\alpha\lhd{\ni}_Y\#)) \\
&\sqsub& ({\ni}_Y\rhd\alpha\#)
(\alpha\rhd{\ni}_Y\#)\sqcap
({\ni}_Y\lhd\alpha\#)
(\alpha\lhd{\ni}_Y\#) \\
&\sqsub& ({\ni}_Y\rhd{\ni}_Y\#)\sqcap
({\ni}_Y\lhd{\ni}_Y\#) \\
&\sqsub& \id_{\wp(Y)}.
\end{array}\]
The conditions (M1) and (M2) for
membership relations assert that
the relation $\alpha^@$ is a tfn.
The tfn $\alpha^@$ is the unique tfn such
that $\alpha^@{\ni}_Y=\alpha$, namely 
$(a,B)\in \alpha^@$ iff $B=\{b\mid (a,b)\in \alpha\}$. 
In \cite{Bd1997},  $\alpha^@$ is introduced as the {\em
 power transpose} $\Lambda_\alpha$.

The \emph{order relation} $\Xi_Y:\wp(Y)\rel\wp(Y)$ is defined by
$\Xi_Y={\ni}_Y\rhd{\ni}_Y\#$. In fact %it holds that 
$\Xi_Y=({\ni}_Y\lhd{\ni}_Y\#)\#$ and $(A,B)\in \Xi_Y$ iff 
$A\subseteq B$. 
%%%
Define a tfn $\wp(\alpha):\wp(X)\to\wp(Y)$
by $\wp(\alpha)=({\ni}_X\alpha)^@$.
Then $\wp(\alpha)$ is the unique tfn such that
the following diagram commutes.
\[\xymatrix{
\wp(X) \ar[r]^{\wp(\alpha)}\ar@_{->}
[d]_{{\ni}_X} & \wp(Y)\ar@_{->}[d]^{{\ni}_Y} \\
X\ar@_{->}[r]_\alpha & Y
}\]
In set theory,
$(A,B)\in\wp(\alpha)$ iff $B=\{b\mid \exists a\in A.~(a,b)\in\alpha\}$. 
In other words, $B$ is the image of set $A$ under relation $\alpha$.

%Actually $\wp(\alpha)$ is introduced as the {\em existential image} $\theta_\alpha$ in \cite{schmidt2011relational}. 

A tfn $1_X:X\to\wp(X)$ is defined
by $1_X=\id_X^@$ and 
called the \emph{singleton map} on $X$. 
Set-theoretically, $(x,A)\in 1_X$ iff $A=\{x\}$. %\\
%$1_X=\{(x,\{x\})\mid x\in X\}$.\\

%For each set $X$, %it holds that 
%$\wp(\id_X){\ni}_X={\ni}_X\id_X=\id_{\wp(X)}{\ni}_X$ holds. 
%This shows that $\wp$ preserves the identities. 
%Also, for relations $\alpha:X\rel Y$ and $\beta:Y\rel Z$ 
%$\wp(\alpha\beta)
%{\ni}_Z={\ni}_X\alpha\beta
%=\wp(\alpha){\ni}_Y\beta
%=\wp(\alpha)\wp(\beta){\ni}_Z$. 
%Also, the following property holds.
%This shows that $\wp$ preserves composition.
%It follows that $\wp$ is a functor from 
%the category $Rel$, which has sets as objects and relations as morphisms, to the category $Set$, which has sets as objects and (total) functions as morphisms.
%%%%%
The following lemma shows that
$\wp$ is indeed a functor from the category $\mathit{Rel}$ of sets and relations into
the category $\mathit{Set}$ of sets and (total) functions.
\begin{lemma}\label{power-functor}
Let $\alpha:X\rel Y$ and $\beta:Y\rel Z$
be relations. Then
\begin{enumerate}
\item $\wp(\id_X)=\id_{\wp(X)}$,
\item $\wp(\alpha\beta)=\wp(\alpha)
\wp(\beta)$.
\end{enumerate}
\end{lemma}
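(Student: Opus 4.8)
The plan is to lean entirely on the universal property of the power transpose recalled just above: for any relation $\gamma:X\rel Y$ the relation $\gamma^@$ is the \emph{unique} tfn satisfying $\gamma^@{\ni}_Y=\gamma$, and correspondingly $\wp(\alpha)=({\ni}_X\alpha)^@$ is the unique tfn making the displayed square commute, i.e.\ with $\wp(\alpha){\ni}_Y={\ni}_X\alpha$. Hence, to establish an identity of the form $\varphi=\wp(\delta)$ it suffices to check that $\varphi$ is a tfn and that composing it with the appropriate target membership relation reproduces ${\ni}_X\delta$; uniqueness then closes the argument.

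For (a), $\id_{\wp(X)}$ is a tfn and $\id_{\wp(X)}{\ni}_X={\ni}_X={\ni}_X\id_X$, so it satisfies the defining condition of $\wp(\id_X)=({\ni}_X\id_X)^@$; by uniqueness $\wp(\id_X)=\id_{\wp(X)}$. For (b), the composite $\wp(\alpha)\wp(\beta)$ of two tfns is again a tfn, since totality and univalence, i.e.\ $\id\sqsub\sigma\sigma\#$ and $\sigma\#\sigma\sqsub\id$, are each preserved under composition. Using the two defining equations $\wp(\beta){\ni}_Z={\ni}_Y\beta$ and $\wp(\alpha){\ni}_Y={\ni}_X\alpha$ together with associativity of composition, $\wp(\alpha)\wp(\beta){\ni}_Z=\wp(\alpha)({\ni}_Y\beta)=({\ni}_X\alpha)\beta={\ni}_X\alpha\beta$. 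Thus $\wp(\alpha)\wp(\beta)$ satisfies the defining condition of $\wp(\alpha\beta)=({\ni}_X\alpha\beta)^@$, and uniqueness gives $\wp(\alpha\beta)=\wp(\alpha)\wp(\beta)$.

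There is essentially no obstacle here once uniqueness of the power transpose is in hand, which it is, being a consequence of (M1) and (M2) via the computations of $(\alpha^@)\#\alpha^@$ and $\dom{\alpha^@}$ already displayed. The only mild points of care are that composites of total functions are total functions, and that every rewriting step above is a pure consequence of associativity of relational composition, so the proof is fully equational — exactly the form that makes it checkable in a proof assistant, as advertised in the introduction.
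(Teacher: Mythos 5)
Your proof is correct and takes essentially the same route as the paper's: both verify the defining equation $\varphi\,{\ni} = {\ni}\,\delta$ for the candidate relation and then invoke uniqueness of the power transpose among tfns. The only difference is that you make the tfn-checks (and the appeal to uniqueness) explicit, which the paper leaves implicit.
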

\begin{proof}
(a) follows from
$\wp(\id_X){\ni}_X={\ni}_X\id_X
=\id_{\wp(X)}{\ni}_X$. \\
(b) follows from $\wp(\alpha\beta)
{\ni}_Z={\ni}_X\alpha\beta
=\wp(\alpha){\ni}_Y\beta
=\wp(\alpha)\wp(\beta){\ni}_Z$.
\end{proof}
%%%
The isomorphism 
\[
 Set(X,\wp(Y))\,\ni\, f~~\mapsto~~f\ni_Y\,\in\, Rel(X,Y)
\]
is called the \emph{power adjunction} together with its inverse
\[
 Rel(X,Y)\,\ni\,\alpha~~\mapsto~~ \alpha^@\,\in\, Set(X,\wp(Y)).
\]
%%%
\begin{proposition}\label{equality-power}
Let $f,f':Y\rel\wp(Z)$ be pfns. Then
$\dom f=\dom{f'}$ and $f{\ni}_Z=f'{\ni}_Z$ imply $f=f'$.
%\[\tuple{\dom f=\dom{f'}}\land\tuple
%{f{\ni}_Z=f'{\ni}_Z}\to\tuple{f=f'}.\]
\end{proposition}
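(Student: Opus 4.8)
The plan is to exploit the power adjunction together with the uniqueness of the power transpose, reducing the claim about partial functions into $\wp(Z)$ to a statement about relations into $Z$. The key observation is that a pfn $f\colon Y\rel\wp(Z)$ is completely determined by its domain relation $\dom f\sqsub\id_Y$ and the composite $f{\ni}_Z\colon Y\rel Z$: intuitively, $\dom f$ tells us \emph{where} $f$ is defined, and $f{\ni}_Z$ tells us the \emph{elements} of the value $Yf$ at each point where $f$ is defined, from which the value (a subset of $Z$) is recovered.

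First I would observe that $\dom f f = f$ by Proposition~\ref{domain-basic}(a), and similarly $\dom{f'}f' = f'$, so under the hypothesis $\dom f=\dom{f'}$ it suffices to work with the common subidentity $v=\dom f=\dom{f'}$. Next I would show that $vf$ agrees with the power transpose of a suitable relation. Set $\alpha = f{\ni}_Z\colon Y\rel Z$; then $\alpha^@$ is the unique tfn with $\alpha^@{\ni}_Z=\alpha$. The natural move is to compare $vf$ with $v\alpha^@$. Both are pfns with domain contained in $v$; I would check that they have the \emph{same} domain, namely $v$ itself — for $vf$ this is Proposition~\ref{domain-basic}(e) together with $\dom f=v$, and for $v\alpha^@$ one uses that $\alpha^@$ is total, so $\dom{v\alpha^@}=v\dom{\alpha^@}=v$ by Proposition~\ref{domain-basic}(d),(e) and (M2). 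Moreover $vf{\ni}_Z = v\alpha = v\alpha^@{\ni}_Z$, so the two pfns $vf$ and $v\alpha^@$ become equal after postcomposition with ${\ni}_Z$; since a pfn composed with a tfn factors uniquely (or, more directly, since ${\ni}_Z$ is monic-like on the relevant images via (M1)), and both have domain $v$, Proposition~\ref{pfn-basic}(a) applied to $vf\sqsub v\alpha^@$ — which I would derive from $vf{\ni}_Z\sqsub v\alpha^@{\ni}_Z$ and the residual characterisation of $\alpha^@$ — forces $vf = v\alpha^@$. The same argument applied to $f'$ gives $vf' = v\alpha'^@$ where $\alpha' = f'{\ni}_Z$; but the hypothesis $f{\ni}_Z=f'{\ni}_Z$ says exactly $\alpha=\alpha'$, hence $vf=vf'$, i.e.\ $\dom f f = \dom{f'}f'$, which is $f=f'$.

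The main obstacle is establishing the inclusion $vf\sqsub v\alpha^@$ (or directly $f\sqsub \alpha^@$ after adjusting domains) cleanly from $f{\ni}_Z\sqsub\alpha^@{\ni}_Z$; one cannot in general cancel ${\ni}_Z$ on the right. The trick is to use the \emph{residual} defining property: $\alpha^@ = (\alpha\rhd{\ni}_Z\#)\sqcap(\alpha\lhd{\ni}_Z\#)$, so to show $f\sqsub\alpha^@$ it suffices to show $f\sqsub\alpha\rhd{\ni}_Z\#$ and $f\sqsub\alpha\lhd{\ni}_Z\#$. The first unfolds, via the residual adjunction, to $f\#\alpha\sqsub{\ni}_Z\#$, i.e.\ $f\#f{\ni}_Z\sqsub{\ni}_Z\#$, which follows since $f$ is a pfn ($f\#f\sqsub\id_{\wp(Z)}$) and ${\ni}_Z$ is a relation, giving $f\#f{\ni}_Z\sqsub{\ni}_Z$; one then needs ${\ni}_Z\sqsub{\ni}_Z\#$?—no, rather the correct residual is $f\sqsub \alpha\lhd{\ni}_Z\#$ which unfolds to $f{\ni}_Z\sqsub\alpha=f{\ni}_Z$, trivially true, and symmetrically for $\rhd$ using (M1) $\ni_Z{}^\sharp\sqsub\id$. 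Once both residual inclusions are in hand, $f\sqsub\alpha^@$ follows, and since $\alpha^@$ is a pfn (indeed a tfn) with $\dom f\sqsub\dom{\alpha^@}$, Proposition~\ref{pfn-basic}(b) gives $f=\dom f\,\alpha^@$; applying this to $f'$ with the same $\alpha$ and invoking $\dom f=\dom{f'}$ yields $f=\dom f\,\alpha^@=\dom{f'}\alpha^@=f'$. I expect the bookkeeping with converses and the two residuals to be the only delicate part; everything else is a direct appeal to Propositions~\ref{domain-basic} and~\ref{pfn-basic} and the membership laws (M1), (M2).
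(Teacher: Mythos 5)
Your route is genuinely different from the paper's and, after one local repair, it works. The paper's proof restricts to the common domain via the subobject axiom (Sub): it picks a tfn $j:S\to Y$ with $j\#j=\dom f$ and $jj\#=\id_S$, observes that $jf$ and $jf'$ are then \emph{total} functions into $\wp(Z)$ with $jf{\ni}_Z=jf'{\ni}_Z$, and concludes $jf=jf'$ from the injectivity of the power adjunction on tfns, whence $f=j\#jf=j\#jf'=f'$. You instead stay with the partial function itself: you prove $f\sqsub(f{\ni}_Z)^@$ from the residual characterisation of the power transpose and then conclude $f=\dom f\,(f{\ni}_Z)^@=\dom{f'}\,(f'{\ni}_Z)^@=f'$ by Proposition~\ref{pfn-basic}(b). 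This is arguably more economical (no appeal to (Sub)) and isolates the reusable fact that every pfn into a powerset is a domain restriction of a power transpose. The repair needed: for the inclusion $f\sqsub\alpha\rhd{\ni}_Z\#$ with $\alpha=f{\ni}_Z$, the residual adjunction unfolds to $\alpha\#f\sqsub{\ni}_Z\#$, not to $f\#\alpha\sqsub{\ni}_Z\#$ as you first write (the latter does not even have the right type), and the one\-/line computation is $\alpha\#f={\ni}_Z\#f\#f\sqsub{\ni}_Z\#$ by univalence of $f$. Your fallback appeal to ``(M1) ${\ni}_Z\#\sqsub\id$'' is both a misquotation of (M1), which reads ${\ni}_Z^@\sqsub\id_{\wp(Z)}$, and not the fact needed here. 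Once that line is corrected, the first half of your proposal (introducing $v$, comparing $vf$ with $v\alpha^@$, and attempting to cancel ${\ni}_Z$ on the right) becomes redundant and should be deleted; the $\lhd$\-/inclusion, which you do handle correctly ($f{\ni}_Z\sqsub\alpha$ is trivial), together with the corrected $\rhd$\-/inclusion and Proposition~\ref{pfn-basic}(b), is the whole proof.
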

%Proof. 
\begin{proof}Assume $\dom f=\dom{f'}$ and
$f{\ni}_Z=f'{\ni}_Z$. By the axiom of
subobjects (Sub) there exists a tfn $j:S\to Y$
such that $\dom f=j\#j$ and $jj\#=\id_S$.
Then both of $jf$ and $jf'$ are tfns.
(For $\id_S=jj\#jj\#=j\dom fj\#\sqsub
jff\#j\#$.) As $jf{\ni}_Z=jf'{\ni}_Z$
is trivial, by the power adjunction
we have %$if=if'$ 
$jf=jf'$ and so $f=\dom ff
=j\#jf=j\#jf'=\dom{f'}f'=f'$.
\end{proof}

\subsection{Power subidentities}

For all subidentities $v\sqsub\id_Y$
define a subidentity $\hat u_v\sqsub\id_{\wp(Y)}$ by
\[\hat u_v=(\nabla_{\wp(Y)Y}v\lhd{\ni}_Y\#)\sqcap\id_{\wp(Y)}.\]
The subidentity $\hat u_v$ is called the
{\em power subidentity} of $v$. 
In set theory,  $(A,A)\in \hat u_v$ iff $\forall a\in A.~(a,a)\in v$.
Power subidentities are used in the context of Peleg composition in
Section~\ref{section:Peleg}. 
%%%
\begin{proposition}\label{pw-subid-basic}
Let $v,\,v'\sqsub\id_Y$. %Then
\begin{enumerate}
\item $\hat u_v\hat u_{v'}=\hat u_{vv'}$. 
\item $v\sqsub v'$ implies
$\hat u_v\sqsub\hat u_{v'}$. 
%$\tuple{v\sqsub v'}\to
%\tuple{\hat u_v\sqsub\hat u_{v'}}$,
\item $\hat u_v\wp(v)=\hat u_v$. 
\item $\nabla_{Z\wp(Y)}\hat u_v
=\nabla_{ZY}v\lhd{\ni}_Y\#$
for all objects $Z$. 
\item $\hat u_{{\rm id}_Y}=\id_{\wp(Y)}$  and
$\hat u_{0_{YY}}=(0_{IY}^@)\#0_{IY}^@$.
\end{enumerate}
\end{proposition}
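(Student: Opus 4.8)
Here is the plan.

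The plan is to begin by isolating the single fact that drives all five items. Since $\hat u_v\sqsub\id_{\wp(Y)}$ by construction, the definition of $\hat u_v$ together with the adjunction defining $\lhd$ shows that a subidentity $u\sqsub\id_{\wp(Y)}$ satisfies $u\sqsub\hat u_v$ if and only if $u{\ni}_Y\sqsub\nabla_{\wp(Y)Y}v$; in other words, $\hat u_v$ is the largest subidentity on $\wp(Y)$ whose image along ${\ni}_Y$ stays inside $\nabla_{\wp(Y)Y}v$. For (a) I use that products of subidentities are meets, so $\hat u_v\hat u_{v'}=\hat u_v\sqcap\hat u_{v'}$; expanding the two definitions, reordering the meets, and pulling the meet out of the residual with Proposition~\ref{lhd-basic}(d) rewrites this as $((\nabla_{\wp(Y)Y}v\sqcap\nabla_{\wp(Y)Y}v')\lhd{\ni}_Y\#)\sqcap\id_{\wp(Y)}$, and $\nabla_{\wp(Y)Y}v\sqcap\nabla_{\wp(Y)Y}v'=\nabla_{\wp(Y)Y}(vv')$ by Proposition~\ref{nabla-basic}(a) together with $v'v=vv'$, which is exactly $\hat u_{vv'}$. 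Item (b) then follows from (a): if $v\sqsub v'$ then $vv'=v$, so $\hat u_v=\hat u_{vv'}=\hat u_v\hat u_{v'}\sqsub\id_{\wp(Y)}\hat u_{v'}=\hat u_{v'}$.

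For (c) I will apply Proposition~\ref{equality-power} to the partial functions $\hat u_v$ and $\hat u_v\wp(v)$ on $\wp(Y)$: their domains agree by Proposition~\ref{domain-basic}(d) since $\wp(v)$ is total, and by the defining property of $\wp(v)$ we have $\wp(v){\ni}_Y={\ni}_Y v$, so it remains to check $\hat u_v{\ni}_Y=\hat u_v{\ni}_Y v$. Now $\hat u_v{\ni}_Y\sqsub{\ni}_Y$ because $\hat u_v\sqsub\id_{\wp(Y)}$, and $\hat u_v{\ni}_Y\sqsub\nabla_{\wp(Y)Y}v$ by the opening observation applied to $u=\hat u_v$, hence $\hat u_v{\ni}_Y\sqsub{\ni}_Y\sqcap\nabla_{\wp(Y)Y}v={\ni}_Y v\sqsub\nabla_{\wp(Y)Y}v$, and Proposition~\ref{nabla-basic}(a) gives $\hat u_v{\ni}_Y=\hat u_v{\ni}_Y\sqcap\nabla_{\wp(Y)Y}v=\hat u_v{\ni}_Y v$.

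For (d) the inclusion $\nabla_{Z\wp(Y)}\hat u_v\sqsub\nabla_{ZY}v\lhd{\ni}_Y\#$ is routine: from $\hat u_v\sqsub\nabla_{\wp(Y)Y}v\lhd{\ni}_Y\#$ one pushes the leading $\nabla_{Z\wp(Y)}$ inside the residual with Proposition~\ref{lhd-basic}(f) and simplifies using $\nabla_{Z\wp(Y)}\nabla_{\wp(Y)Y}v=\nabla_{ZY}v$ (valid since $\wp(Y)$ is nonempty). I expect the reverse inclusion to be the main obstacle. Writing $\delta=\nabla_{ZY}v\lhd{\ni}_Y\#$, I will first show $\dom{\delta\#}\sqsub\hat u_v$: this is a subidentity, and using the explicit domain formula $\dom{\delta\#}=\nabla_{\wp(Y)Z}\delta\sqcap\id_{\wp(Y)}$, the inequality $\delta{\ni}_Y\sqsub\nabla_{ZY}v$ that comes with the residual, and $\nabla_{\wp(Y)Z}\nabla_{ZY}v=\nabla_{\wp(Y)Y}v$ (for $Z\neq\emptyset$; the case $Z=\emptyset$ is trivial), one gets $\dom{\delta\#}{\ni}_Y\sqsub\nabla_{\wp(Y)Y}v$, so the opening observation applies. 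Since $\delta=\delta\dom{\delta\#}$ (the converse of Proposition~\ref{domain-basic}(a)), this yields $\delta=\delta\dom{\delta\#}\sqsub\delta\hat u_v\sqsub\nabla_{Z\wp(Y)}\hat u_v$, the remaining inclusion.

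For (e) the first identity is immediate: $\nabla_{\wp(Y)Y}\id_Y\lhd{\ni}_Y\#=\nabla_{\wp(Y)Y}\lhd{\ni}_Y\#=\nabla_{\wp(Y)\wp(Y)}$ since the left residual of the universal relation is universal, hence $\hat u_{\id_Y}=\nabla_{\wp(Y)\wp(Y)}\sqcap\id_{\wp(Y)}=\id_{\wp(Y)}$. For the second, $(0_{IY}^@)\#0_{IY}^@\sqsub\hat u_{0_{YY}}$ follows from the opening observation, because this is a subidentity and $(0_{IY}^@)\#0_{IY}^@{\ni}_Y=(0_{IY}^@)\#0_{IY}=0_{\wp(Y)Y}=\nabla_{\wp(Y)Y}0_{YY}$. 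Conversely, item (c) with $v=0_{YY}$ gives $\hat u_{0_{YY}}=\hat u_{0_{YY}}\wp(0_{YY})$, and $\wp(0_{YY})=\nabla_{\wp(Y)I}0_{IY}^@$ by uniqueness of the power transpose (both are total functions composing with ${\ni}_Y$ to $0_{\wp(Y)Y}={\ni}_Y0_{YY}$), so $\hat u_{0_{YY}}\sqsub\nabla_{\wp(Y)I}0_{IY}^@$; combining this with $\hat u_{0_{YY}}\sqsub\id_{\wp(Y)}$ and the Dedekind formula (DF$_*$) yields $\hat u_{0_{YY}}\sqsub(\nabla_{\wp(Y)I}0_{IY}^@)\sqcap\id_{\wp(Y)}\sqsub(0_{IY}^@)\#0_{IY}^@$, closing the argument.
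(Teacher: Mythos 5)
Your proof is correct, and for parts (a), (b), (c) and the first half of (e) it essentially coincides with the paper's: product of subidentities as meet plus Proposition~\ref{lhd-basic}(d) for (a), (b) as a corollary, and Proposition~\ref{equality-power} with $\hat u_v{\ni}_Y=\hat u_v{\ni}_Yv$ for (c). Where you genuinely diverge is in (d) and in the second identity of (e). The paper proves (d) by a short equational chain: it factors $\nabla_{\wp(Y)Y}v=\nabla_{\wp(Y)I}(\nabla_{IY}v)$ through the unit object $I$, pulls the tfn $\nabla_{\wp(Y)I}$ out of the residual with Proposition~\ref{lhd-basic}(g), and applies the identity $\nabla_{ZY}(\nabla_{YX}\alpha\sqcap\id_Y)=\nabla_{ZX}\alpha$; the same device ($\nabla_{\wp(Y)Y}0_{YY}=\nabla_{\wp(Y)I}0_{IY}$, then $0_{IY}\lhd{\ni}_Y\#=0_{IY}^@$ and (DF)) yields $\hat u_{0_{YY}}=(0_{IY}^@)\#0_{IY}^@$ in one computation. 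You instead organize everything around the adjunction characterization of $\hat u_v$ as the largest subidentity $u$ with $u{\ni}_Y\sqsub\nabla_{\wp(Y)Y}v$, prove (d) by two inclusions with the hard direction going through $\dom{\delta\#}\sqsub\hat u_v$ and $\delta=\delta\dom{\delta\#}$, and obtain $\hat u_{0_{YY}}\sqsub\nabla_{\wp(Y)I}0_{IY}^@$ indirectly from item (c) and uniqueness of the power transpose before finishing with (DF$_*$). Both routes are sound; the paper's factorization through $I$ is shorter and avoids your case split on $Z=\emptyset$ in (d) (needed because $\nabla_{\wp(Y)Z}\nabla_{ZY}=\nabla_{\wp(Y)Y}$ fails for empty $Z$, whereas $\nabla_{\wp(Y)I}\nabla_{IY}=\nabla_{\wp(Y)Y}$ always holds), while your universal-property formulation unifies (c), both inclusions of (e), and half of (d) under a single lemma and makes the set-theoretic meaning of $\hat u_v$ more transparent.
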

\begin{proof}
\begin{enumerate}
\item follows from 
\[\begin{array}{ccll}
\hat u_v\hat u_{v'}
&=& \hat u_v\sqcap\hat u_{v'} \\
&=& (\nabla v\lhd{\ni}_Y\#)\sqcap
(\nabla v'\lhd{\ni}_Y\#)\sqcap\id_{\wp(Y)}
& \{~\nabla=\nabla_{\wp(Y)Y}~\} \\
&=& ((\nabla v\sqcap\nabla v')
\lhd{\ni}_Y\#)\sqcap\id_{\wp(Y)}
& \{~\mbox{\ref{lhd-basic}\,(d)}~\} \\
&=& (\nabla vv'\lhd{\ni}_Y\#)\sqcap
\id_{\wp(Y)} & \{~\nabla v\sqcap
\nabla v'=\nabla vv'~\}\\
&=& \hat u_{vv'}.
\end{array}\]
\item %$\tuple{v\sqsub v'}\to
%\tuple{\hat u_v\sqsub\hat u_{v'}}:$ \\
%It 
is a corollary of (a). 
%
%(c) $\hat u_v\wp(v)=\hat u_v:$ \\
\item 
%  (c1)
%$\hat u_v{\ni}_Y=\hat u_v{\ni}_Yv:$
First, $\dom{\hat u_v\wp(v)}=\dom{\hat u_v}$
is trivial, since $\wp(v)$ is total. Also, by 
\[\begin{array}{ccll}
\hat u_v{\ni}_Y
&=& \hat u_v{\ni}_Y\sqcap
(\nabla_{\wp(Y)Y}v\lhd{\ni}_Y\#){\ni}_Y \\
&\sqsub& \hat u_v{\ni}_Y\sqcap
\nabla_{\wp(Y)Y}v \\
&=& \hat u_v{\ni}_Yv
& \{~\mbox{\ref{nabla-basic}\,(a)}~\} \\
&\sqsub& \hat u_v{\ni}_Y, 
& \{~v\sqsub\id_Y~\}
\end{array}\]
Thus $\hat u_v{\ni}_Y=\hat u_v{\ni}_Yv$. So we have
$\hat u_v\wp(v){\ni}_Y=\hat u_v{\ni}_Yv=\hat u_v{\ni}_Y$.
%\[\begin{array}{ccll}
%\hat u_v\wp(v){\ni}_Y
%&=& \hat u_v{\ni}_Yv \\
%&=& \hat u_v{\ni}_Y. & %\{~\mbox{(c1)}~\}
%\end{array}\]
Since both of $\hat u_v\wp(v)$ and
$\hat u_v$ are pfns, $\hat u_v\wp(v)
=\hat u_v$ holds by \ref{equality-power}.
% \\[6pt]
%
\item %20150129(Prof.Kawahara)
Since
$\nabla_{ZY}(\nabla_{YX}\alpha\sqcap\id_Y)=\nabla_{ZX}\alpha$, we have 
\[\begin{array}{rcll}
\nabla_{Z\wp(Y)}\hat{u}_v
&=&\nabla_{Z\wp(Y)}((\nabla_{\wp(Y)Y}v
\lhd\varepsilon_Y\#)\sqcap\id_{\wp(Y)}) \\
&=&\nabla_{Z\wp(Y)}(\nabla_{\wp(Y)I}
(\nabla_{IY}v\lhd\varepsilon_Y\#)
\sqcap\id_{\wp(Y)}) \quad
& \{~\nabla_{\wp(Y)I}:\mbox{tfn}~\} \\
&=&\nabla_{ZI}(\nabla_{IY}v\lhd\varepsilon_Y\#) \\
%&& \qquad \{~\nabla_{ZY}(\nabla_{YX}\alpha
%\sqcap\id_Y)=\nabla_{ZX}\alpha~\} \\
&=&\nabla_{ZI}\nabla_{IY}v\lhd\varepsilon_Y\#
& \{~\nabla_{ZI}:\mbox{tfn}~\} \\
&=&\nabla_{ZY}v\lhd\varepsilon_Y\#.
& \{~\nabla_{ZI}\nabla_{IY}=\nabla_{ZY}~\}
\end{array}\]

%% before 20150129
%The inclusion
%(d') $\alpha(\beta\lhd\gamma)\sqsub\alpha\beta\lhd\gamma$ holds since
%\[\begin{array}{rcl}
%%&\to& 
%\beta\lhd\gamma\sqsub\beta\lhd\gamma %\\
%&\liff& (\beta\lhd\gamma)\gamma\#
%\sqsub\beta \\
%&\to& \alpha(\beta\lhd\gamma)\gamma\#
%\sqsub\alpha\beta \\
%&\liff& \alpha(\beta\lhd\gamma)\sqsub
%\alpha\beta\lhd\gamma.
%\end{array}\]
%So, we have
%%  {\small
%\[\begin{array}{ccll}
%\nabla_{Z\wp(Y)}\hat u_v
%&=& \nabla_{Z\wp(Y)}((\nabla_{\wp(Y)Y}v
%\lhd{\ni}_Y\#)\sqcap\id_{\wp(Y)}) \\
%&\sqsub& \nabla_{ZY}v\lhd{\ni}_Y\#
%&\{~\mbox{(d')}~\} \\
%&=& (\nabla_{ZY}v\lhd{\ni}_Y\#)
%\sqcap\nabla_{Z\wp(Y)} \\
%&\sqsub& \nabla_{Z\wp(Y)}
%(\nabla_{\wp(Y)Z}(\nabla_{ZY}v\lhd{\ni}_Y\#)
%\sqcap\id_{\wp(Y)}) &\{~\mbox{(DF)}~\}\\
%&\sqsub& \nabla_{Z\wp(Y)}
%((\nabla_{\wp(Y)Y}v\lhd{\ni}_Y\#)
%\sqcap\id_{\wp(Y)})
%&\{~\mbox{(d')}~\}\\
%&=& \nabla_{Z\wp(Y)}\hat u_v.
%\end{array}\]
%%  }
%Therefore, $\nabla_{Z\wp(Y)}\hat u_v=\nabla_{ZY}v\lhd{\ni}_Y\#$.
%
\item 
The equation (e1) $\hat u_{{\rm id}_Y}=\id_{\wp(Y)}$ 
follows from 
\[\begin{array}{ccll}
\hat u_{{\rm id}_Y}
&=& (\nabla_{\wp(Y)Y}\lhd{\ni}_Y\#)
\sqcap\id_{\wp(Y)} \\
&=& \nabla_{\wp(Y)\wp(Y)}\sqcap\id_{\wp(Y)}
& \{~\nabla\sqsub\nabla\lhd\alpha~\} \\
&=& \id_{\wp(Y)}.
\end{array}\]
%% 20150129(Prof.Kawahara)
Also, the equation (e2) $\hat u_{0_{YY}}=(0_{IY}^@)\#0_{IY}^@$ follows from
\[\begin{array}{ccll}
\hat u_{0_{YY}}
&=& (\nabla_{\wp(Y)Y}0_{YY}\lhd
{\ni}_Y\#)\sqcap\id_{\wp(Y)} \\
&=& (\nabla_{\wp(Y)I}0_{IY}\lhd
{\ni}_Y\#)\sqcap\id_{\wp(Y)} \\
&=& \nabla_{\wp(Y)I}(0_{IY}\lhd
{\ni}_Y\#)\sqcap\id_{\wp(Y)}
& \{~\nabla_{\wp(Y)I}:\mbox{tfn}~\} \\
&=& \nabla_{\wp(Y)I}0_{IY}^@\sqcap\id_{\wp(Y)}
& \{~0_{IY}\lhd{\ni}_Y\#=0_{IY}^@~\} \\
&=& (0_{IY}^@)\#0_{IY}^@.
& \{~0_{IY}^@:\mbox{tfn, (DF)}~\} 
\end{array}\]
%% Before 20150129
%Note that the equation (e2') $\nabla_{Z\wp(Y)}(\nabla_{\wp(Y)Y}\beta
%\lhd\gamma)=\nabla_{ZY}\beta\lhd\gamma$ follows from
%\[\begin{array}{ccll}
%\nabla_{Z\wp(Y)}
%(\nabla_{\wp(Y)Y}\beta\lhd\gamma)
%&\sqsub& \nabla_{ZY}\beta\lhd\gamma
%&\{~\mbox{(d')}~\} \\
%&\sqsub& \nabla_{Z\wp(Y)}\nabla_{\wp(Y)Z}
%(\nabla_{ZY}\beta\lhd\gamma)
%&\{~0_{ZY}^@:\mbox{tfn}~\} \\
%&\sqsub& \nabla_{Z\wp(Y)}
%(\nabla_{\wp(Y)Y}\beta\lhd\gamma).
%&\{~\mbox{(d')}~\}
%\end{array}\]
%So, we have
%% (e2) $\hat u_{0_{YY}}=(0_{YY}^@)\#0_{YY}^@:$
%\[\begin{array}{ccll}
%&& \nabla_{\wp(Y)\wp(Y)}(0_{YY}^@)\#0_{YY}^@\\
%&=& \nabla_{\wp(Y)Y}0_{YY}^@
%&\{~\nabla\alpha\#\alpha=\nabla\alpha~\} \\
%&=& \nabla_{\wp(Y)Y}((0_{YY}\rhd{\ni}_Y\#)
%\sqcap(0_{YY}\lhd{\ni}_Y\#)) \\
%&=& \nabla_{\wp(Y)Y}(0_{YY}\lhd{\ni}_Y\#)
%&\{~\nabla\sqsub 0\rhd\alpha~\} \\
%&=& \nabla_{\wp(Y)Y}(\nabla_{YY}0_{YY}
%\lhd{\ni}_Y\#) &\{~0=\nabla 0~\}\\
%&=& \nabla_{\wp(Y)Y}0_{YY}\lhd{\ni}_Y\#
%&\{~\mbox{(e2')}~\} \\
%&=& \nabla_{\wp(Y)\wp(Y)}\hat u_{0_{YY}},
%&\{~\mbox{(d)}~\} \\
%\end{array}\]
%which implies (e2) $\hat u_{0_{YY}}
%=(0_{YY}^@)\#0_{YY}^@$ by
%\ref{nabla-basic}\,(c).
\end{enumerate}
\end{proof}

%%%%%%%%%%%%%%%%%%%%%%%%%%%%%%%%%%%%%%%%%%%%%%%%%%%%%%%%%%%

\section{Compositions and Liftings}\label{sec-comp-lift}

Multirelational compositions can be understood as \lq \lq
nonstandard" compositions in the setting of categories of relations
that deviate from the standard relational composition.  This section
introduces suitable notions of lifting that translate them into the  latter.

Consider how to define a multirelational composition for
$\alpha:X\rel\wp(Y)$ and
$\beta:Y\rel\wp(Z)$. If one can construct
a relation $\lambda(\beta):\wp(Y)\rel\wp(Z)$
from $\beta$, then a composite
\[\xymatrix{
X \ar@_{->}[r]^{\alpha~~} & \wp(Y)
\ar@_[r]^{\lambda(\beta)} & \wp(Z)
 }\]
 is obtained; and relational composition can be used for modelling it.
 Different notions of lifting can then be used for defining different
 non-standard notions of multirelational composition. This situation
 is reminiscent of the definition of Kleisli liftings or Kleisli
 extensions in Kleisli categories; in particular for the powerset
 monad in the category $\mathit{Set}$ of sets and functions. In our
 case, as mentioned in the introduction, we wish to define functions
 $\lambda$ in such a way that the identity
 \begin{equation*}
   \alpha \bullet \beta = \alpha\lambda(\beta)
 \end{equation*}
 holds for composition $\bullet$, which stands for the Kleisli, Parikh
 or Peleg composition of multirelations. In that case we call
 $\lambda(\beta)$ a lifting of $\beta$. Liftings and compositions are
 of course mutually dependent.  In the introduction we have argued
 that we can define functions $\lambda$ from compositions $\bullet$ so
 that the above identity holds.  In the following sections we take the
 opposite view and define Kleisli, Parikh and Peleg compositions from
 suitable functions $\lambda$ and relational composition.

 On the one hand, the translations from multirelational compositions
 to relational composition allows us to use our knowledge about the
 latter to reason about the former. The complexity of reasoning in
 particular about Peleg's second-order definition below can thus be
 encapsulated in the lifting and relational composition can be used in
 calculations. On the other hand, however, properties of relational
 composition, including its associativity or the existence of units,
 need not translate to its multirelational counterparts.  Parikh and
 Peleg composition, in particular, are not in general associative on
 multirelations~\cite{Furusawa:2015:CDA,FS2015}.

 The following identity, which is well known from Kleisli categories
 as one of the defining identities of Kleisli extensions, yields a
 generic necessary and sufficient condition for associativity of
 multirelational compositions and explains this situation.

\begin{lemma}\label{lem-lift-asso}
  The lifting operator $\lambda$ satisfies
  $\lambda(\alpha\lambda(\beta))=\lambda(\alpha)\lambda(\beta)$ for
  all multirelations $\alpha$ and $\beta$ if and only if the
  composition $\bullet$ defined by
  $\alpha\bullet\beta = \alpha\lambda(\beta)$, for all multirelations
  $\alpha$ and $\beta$, is associative.

%Let $\alpha:X\rel\wp(Y)$, $\beta:Y\rel\wp(Z)$, and $\gamma:Z\rel\wp(W)$ be multirelations. For a lifting operator $\lambda$, the equation $\lambda(\beta) \lambda(\gamma) = \lambda(\beta \lambda(\gamma))$ implies the associativity of a multirelational composition $\bullet$ defined by $\alpha\bullet \beta=\alpha\lambda(\beta)$.
\end{lemma}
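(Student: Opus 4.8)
The plan is to prove the biconditional in two directions, treating the ``if'' direction as essentially a direct computation and the ``only if'' direction as the one requiring a small trick. For the ``only if'' direction, suppose $\bullet$ is associative, meaning $(\alpha\bullet\beta)\bullet\gamma = \alpha\bullet(\beta\bullet\gamma)$, i.e.\ $(\alpha\lambda(\beta))\lambda(\gamma) = \alpha\lambda(\beta\lambda(\gamma))$ for all multirelations $\alpha,\beta,\gamma$ of composable type. Since relational composition is associative, the left-hand side equals $\alpha(\lambda(\beta)\lambda(\gamma))$, so we obtain $\alpha(\lambda(\beta)\lambda(\gamma)) = \alpha\lambda(\beta\lambda(\gamma))$ for all $\alpha$. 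To conclude $\lambda(\beta)\lambda(\gamma) = \lambda(\beta\lambda(\gamma))$ from this, I would instantiate $\alpha$ to be a relation that ``cancels'' on the left --- the natural choice is $\alpha = \id_{\wp(Y)}$, the identity on the appropriate powerset, viewed as a multirelation of type $\wp(Y) \rel \wp(Y)$; but one must check that this is a legitimate multirelation in the sense required, i.e.\ that $\lambda$ is defined on it. Since $\lambda$ here takes a relation $\beta : Y \rel \wp(Z)$ to $\lambda(\beta):\wp(Y)\rel\wp(Z)$, the expression $\alpha\lambda(\beta)$ with $\alpha:X\rel\wp(Y)$ makes sense for any source $X$, in particular $X = \wp(Y)$ and $\alpha = \id_{\wp(Y)}$, which then gives $\lambda(\beta)\lambda(\gamma) = \lambda(\beta\lambda(\gamma))$ directly. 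Renaming $\beta,\gamma$ to $\alpha,\beta$ yields the stated identity.

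For the ``if'' direction, assume $\lambda(\alpha\lambda(\beta)) = \lambda(\alpha)\lambda(\beta)$ for all composable multirelations. Then for any $\alpha:X\rel\wp(Y)$, $\beta:Y\rel\wp(Z)$, $\gamma:Z\rel\wp(W)$ we compute $(\alpha\bullet\beta)\bullet\gamma = (\alpha\lambda(\beta))\lambda(\gamma)$, and $\alpha\bullet(\beta\bullet\gamma) = \alpha\lambda(\beta\lambda(\gamma)) = \alpha\bigl(\lambda(\beta)\lambda(\gamma)\bigr)$ by the hypothesised identity applied to $\beta$ and $\gamma$; by associativity of relational composition these two are equal, so $\bullet$ is associative.

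The main obstacle, such as it is, is the legitimacy of the left-cancellation step in the ``only if'' direction: one must be slightly careful that the identity relation on $\wp(Y)$ (or whatever left unit is used to strip $\alpha$) is itself an admissible argument to the composition, so that associativity may genuinely be instantiated there; alternatively one can argue more abstractly that since $\alpha(\lambda(\beta)\lambda(\gamma)) = \alpha\lambda(\beta\lambda(\gamma))$ holds for \emph{all} relations $\alpha$ into $\wp(Y)$, and two relations $\delta,\delta':\wp(Y)\rel\wp(W)$ with $\alpha\delta = \alpha\delta'$ for all such $\alpha$ must be equal (take $\alpha$ ranging over $I$-points of $\wp(Y)$, using the point axiom, or simply $\alpha = \id_{\wp(Y)}$), we get $\lambda(\beta)\lambda(\gamma) = \lambda(\beta\lambda(\gamma))$. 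Everything else is a one-line consequence of associativity of juxtaposition, so the proof is short.
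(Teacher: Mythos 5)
Your proof is correct and takes essentially the same route as the paper: the easy direction is the computation using associativity of relational composition, and the converse instantiates the associativity law at $\alpha=\id_{\wp(Y)}$ to cancel on the left, exactly the paper's step $\id\bullet(\beta\bullet\gamma)=(\id\bullet\beta)\bullet\gamma$. (Only a cosmetic remark: your labels ``if'' and ``only if'' are swapped relative to the statement as written, but the mathematics is unaffected.)
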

\begin{proof}
$$
\begin{array}{rcl}
\lambda(\beta\lambda(\gamma))=\lambda(\beta)\lambda(\gamma)
&\rightarrow &\alpha \lambda(\beta \lambda(\gamma))=\alpha \lambda(\beta) \lambda(\gamma) \\
&\liff& \alpha \bullet (\beta \lambda(\gamma)) =(\alpha \lambda(\beta)) \bullet \gamma\\
&\liff& \alpha \bullet (\beta \bullet \gamma)=(\alpha \bullet \beta) \bullet \gamma \enskip.
\end{array}
$$
Conversely, suppose that $\bullet$ is associative. Hence
$\id\bullet (\beta\bullet \gamma)= (\id\bullet \beta)\bullet
\gamma$
holds for all multirelations $\beta$ and $\gamma$ of suitable type.
The previous proof can then be reversed and it follows that
$\lambda(\beta\lambda(\gamma))=\lambda(\beta)\lambda(\gamma)$ (order of
composition reversed) holds for
all multirelations $\beta$ and $\gamma$.
\end{proof}
Similarly, the two other defining identities of Kleisli extensions,
which relate to the morphisms $\eta$ of a monad, yield necessary and
sufficient conditions for the existence of left and right units of
multirelational compositions. 
\begin{lemma}\label{lem-lift-id}
For any set $X$, the relation $\iota_X:X\rel\wp(X)$ and the lifting
operator $\lambda$ satisfy $\lambda(\iota_X)=\id_{\wp(X)}$ 
%and for each relation $\alpha:X\rel Y$ $\iota_X\lambda(\alpha)=\alpha$ 
if and only if $\iota_X$ is a right unit of the composition $\bullet$
defined by $\alpha\bullet \beta=\alpha\lambda(\beta)$.
\end{lemma}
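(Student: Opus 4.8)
The plan is to prove the biconditional by unfolding the definition $\alpha \bullet \beta = \alpha\lambda(\beta)$ in both directions, exactly mirroring the structure of the proof of Lemma~\ref{lem-lift-asso}. For the forward direction, I would assume $\lambda(\iota_X) = \id_{\wp(X)}$ and compute, for an arbitrary multirelation $\alpha:X\rel\wp(Y)$ with $\iota_Y:Y\rel\wp(Y)$,
\[
\alpha\bullet\iota_Y = \alpha\lambda(\iota_Y) = \alpha\,\id_{\wp(Y)} = \alpha,
\]
so $\iota$ (the family $\iota_Y$) is a right unit of $\bullet$. Note that the hypothesis is stated for every set $X$, so the instance at $Y$ is available; the subscript in the statement is a bound family index rather than a constraint, and I would make that explicit in the write-up to avoid confusion.

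For the converse, I would assume $\iota$ is a right unit, i.e.\ $\alpha\bullet\iota_X = \alpha$ for all $\alpha:W\rel\wp(X)$ and all sets $W,X$. Unfolding the definition gives $\alpha\lambda(\iota_X) = \alpha$ for all such $\alpha$. The key move is to instantiate $\alpha$ at the ``largest'' or most informative relation into $\wp(X)$, namely $\alpha = \id_{\wp(X)}$ (taking $W = \wp(X)$), which yields $\id_{\wp(X)}\,\lambda(\iota_X) = \id_{\wp(X)}$, i.e.\ $\lambda(\iota_X) = \id_{\wp(X)}$ directly. This is the same ``evaluate at the identity'' trick used implicitly in Lemma~\ref{lem-lift-asso}, where associativity of $\bullet$ was specialised by taking the first argument to be $\id$.

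The main obstacle — really the only subtle point — is a typing/well-definedness check: for the instantiation $\alpha = \id_{\wp(X)}$ to make sense as a multirelation of the shape the composition $\bullet$ expects, one needs $\lambda$ to be defined on $\iota_X:X\rel\wp(X)$ and $\id_{\wp(X)}:\wp(X)\rel\wp(X)$ to be composable with $\lambda(\iota_X):\wp(X)\rel\wp(X)$, which it is. I would also remark that this lemma is deliberately stated schematically: it does not assert that $\lambda(\iota_X)=\id_{\wp(X)}$ actually holds for the Kleisli, Parikh or Peleg liftings — that is checked case by case in the later sections — but only that this equation is \emph{equivalent} to $\iota$ being a right unit, just as Lemma~\ref{lem-lift-asso} reduces associativity to the Kleisli-extension identity. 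The proof itself is a two-line calculation in each direction and requires nothing beyond the definition of $\bullet$ and the fact that $\id_{\wp(X)}$ is a two-sided identity for relational composition.
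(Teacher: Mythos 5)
Your proof is correct and follows exactly the paper's argument: the forward direction unfolds the definition of $\bullet$ with $\lambda(\iota_X)=\id_{\wp(X)}$, and the converse instantiates the right-unit law at $\alpha=\id_{\wp(X)}$, which is precisely the paper's step of replacing $\delta:W\rel\wp(X)$ by $\id_{\wp(X)}$. No substantive difference from the published proof.
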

\begin{proof}
  By definition of $\bullet$, $\lambda(\iota_X)=\id_{\wp(X)}$ implies
  $\delta\bullet\iota_Y=\delta$ for each $\delta:W\rel \wp(X)$.
  Conversely, replacing $\delta:W\rel \wp(X)$ by
  $\id_{\wp(X)}:\wp(X)\rel \wp(X)$ yields
  $\id_{\wp(X)}=\id_{\wp(X)}\bullet
  \iota_X=\id_{\wp(X)}\lambda(\iota_X)=\lambda(\iota_X)$.
\end{proof}
The analogous fact for left units, and the remaining defining identity
of Kleisli extensions,  is entirely trivial: For any set $X$, the
relation $\iota_X$ is a left unit of $\bullet$ if and only
if it satisfies $\iota_X\lambda(\alpha)=\alpha$, by definition of
$\bullet$.  The following property is thus immediate from Lemma
\ref{lem-lift-id}.
\begin{corollary}\label{cor-lift-id}
  For any set $X$, the relation $\iota_X:X\rel\wp(X)$ and the lifting
  operator $\lambda$ satisfy $\lambda(\iota_X)=\id_{\wp(X)}$ and
  $\iota_X\lambda(\alpha)=\alpha$, for each relation $\alpha:X\rel Y$,
  if and only if $\iota_X$ is the identity on $X$ for the composition
  $\bullet$ defined by $\alpha\bullet \beta=\alpha\lambda(\beta)$.
\end{corollary}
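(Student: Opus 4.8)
The plan is to obtain the corollary by simply combining Lemma~\ref{lem-lift-id} with the \lq\lq entirely trivial" characterisation of left units recorded in the paragraph immediately preceding the statement, using the observation that an arrow is the identity for a composition precisely when it is at the same time a left unit and a right unit.

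First I would unfold the right-hand side of the biconditional. Saying that $\iota_X$ is the identity on $X$ for $\bullet$ means two things: (i) $\delta\bullet\iota_X=\delta$ for every $\delta:W\rel\wp(X)$ (right unit), and (ii) $\iota_X\bullet\alpha=\alpha$ for every $\alpha:X\rel Y$ (left unit). Because $\bullet$ is defined pointwise by $\alpha\bullet\beta=\alpha\lambda(\beta)$, condition (ii) is, by mere expansion of this definition, the same as requiring $\iota_X\lambda(\alpha)=\alpha$ for all $\alpha:X\rel Y$; this is exactly the trivial fact stated before the corollary, and it needs no separate proof.

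Next I would invoke Lemma~\ref{lem-lift-id}, which already supplies the equivalence of condition (i) with the identity $\lambda(\iota_X)=\id_{\wp(X)}$. Conjoining the two equivalences then gives the claim directly: $\lambda(\iota_X)=\id_{\wp(X)}$ together with $\iota_X\lambda(\alpha)=\alpha$ for all $\alpha$ holds if and only if $\iota_X$ is both a right unit and a left unit of $\bullet$, i.e.\ if and only if $\iota_X$ is the identity on $X$ for $\bullet$. There is no genuine obstacle here; the only care needed is to keep the two biconditionals oriented consistently and to carry the universal quantifiers (over $\alpha$, and over the ambient $\delta$ in the right-unit half, as in the proof of Lemma~\ref{lem-lift-id}) through both directions of the argument.
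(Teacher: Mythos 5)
Your proposal is correct and matches the paper's own argument: the corollary is obtained exactly by conjoining Lemma~\ref{lem-lift-id} (right unit $\Leftrightarrow$ $\lambda(\iota_X)=\id_{\wp(X)}$) with the trivial observation that being a left unit unfolds, by definition of $\bullet$, to $\iota_X\lambda(\alpha)=\alpha$. Nothing further is needed.
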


These facts raise the question whether, beyond a mere analogy, our
entire approach could be developed in the setting of a powerset monad,
but for $\mathit{Rel}$ instead of $\mathit{Set}$.  However, it is easy
to check that $\iota$ is not a natural transformation from the
identity function in $\mathit{Rel}$ to the powerset functor in
$\mathit{Rel}$, at least in the case of Peleg composition.  
Any deeper evaluations of this failure, as well as more general investigations of
suitable monads for multirelations, are left for future work.

The following sections consider the Kleisli, Parikh and Peleg liftings in detail. 

%%%%%%%%%%%%%%%%%%%%%%%%%%%%%%%%%%%%%%%%%%%%%%%%%%%%%%

\section{Kleisli lifting}

The \emph{Kleisli lifting} $\beta_\circ: \wp(Y)\rel\wp(Z)$ of a relation 
$\beta:Y\rel\wp(Z)$ is defined by 
$
\beta_\circ=\wp(\beta{\ni}_Z)\;. 
$
By definition, the Kleisli lifting is always a tfn, and
it satisfies the property outlined in the introduction:
\[
\begin{array}{rcl}
(B,A)\in\beta_\circ&\liff&
A=\bigcup \{C\in \wp(Z)\mid\,\exists b\in B.\,(b,C)\in \beta\}.  
\end{array}
\]
This lifting is used to give a relational 
definition of the Peleg lifting in Section \ref{section:Peleg}.

Moreover, we obtain the \emph{Kleisli composition} of relations
$\alpha:X\rel\wp(Y)$ and $\beta:Y\rel\wp(Z)$ as the relation
\begin{equation*}
  \alpha\circ\beta = \alpha\beta_\circ
\end{equation*}
of type $X\rel\wp(Z)$. Its set-theoretic counterpart has been
presented in the introduction.

The first two conditions in the following propositions are
characteristic identities for Kleisli extensions in Kleisli categories.

\begin{proposition}\label{prop-kleisli}
Let $\beta:Y\rel\wp(Z)$ and $\gamma:Z\rel\wp(W)$ be relations. 
\begin{enumerate}
\item $(\beta\gamma_\circ)_\circ
=\beta_\circ\gamma_\circ$. 
\item $(1_Y)_\circ=\id_{\wp(Y)}$. 
\item $(0_{YZ}^@)_\circ=0_{\wp(Y)Z}^@$. 
\item If $\beta$ is a pfn, then
$\dom\beta1_Y\beta_\circ=\beta$.
\end{enumerate}
\end{proposition}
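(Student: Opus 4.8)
The plan is to unfold the definition $\beta_\circ = \wp(\beta{\ni}_Z)$ everywhere and to rely on just two facts about the powerset functor recorded above: the commuting square $\wp(\alpha){\ni}_Y = {\ni}_X\alpha$ that defines $\wp(\alpha)$ for $\alpha:X\rel Y$, and functoriality (Lemma \ref{power-functor}). Part (d) additionally needs Proposition \ref{equality-power} and the domain laws of Proposition \ref{domain-basic}.

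For (a), I would first compute $\gamma_\circ{\ni}_W = \wp(\gamma{\ni}_W){\ni}_W = {\ni}_Z\gamma{\ni}_W$ from the commuting square, whence $\beta\gamma_\circ{\ni}_W = \beta{\ni}_Z\gamma{\ni}_W$ and therefore $(\beta\gamma_\circ)_\circ = \wp(\beta{\ni}_Z\gamma{\ni}_W)$; by functoriality this equals $\wp(\beta{\ni}_Z)\wp(\gamma{\ni}_W) = \beta_\circ\gamma_\circ$. Part (b) is immediate from $1_Y{\ni}_Y = \id_Y$ and Lemma \ref{power-functor}(a): $(1_Y)_\circ = \wp(1_Y{\ni}_Y) = \wp(\id_Y) = \id_{\wp(Y)}$. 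Part (c) is analogous, using $0_{YZ}^@{\ni}_Z = 0_{YZ}$ and ${\ni}_Y 0_{YZ} = 0_{\wp(Y)Z}$: $(0_{YZ}^@)_\circ = \wp(0_{YZ}^@{\ni}_Z) = \wp(0_{YZ}) = ({\ni}_Y 0_{YZ})^@ = 0_{\wp(Y)Z}^@$.

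The one step needing a little care is (d). I would first record the auxiliary identity $1_Y\wp(\alpha) = \alpha^@$ for every $\alpha:Y\rel Z$: the left-hand side is a tfn and $1_Y\wp(\alpha){\ni}_Z = 1_Y{\ni}_Y\alpha = \alpha$, so it equals $\alpha^@$ by uniqueness in the power adjunction. Instantiating at $\alpha = \beta{\ni}_Z$ gives $1_Y\beta_\circ = (\beta{\ni}_Z)^@$, so it remains to prove $\dom\beta(\beta{\ni}_Z)^@ = \beta$ whenever $\beta$ is a pfn. Both sides are pfns from $Y$ to $\wp(Z)$, so by Proposition \ref{equality-power} it suffices to match the ${\ni}_Z$-composites and the domains. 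For the composites, $\dom\beta(\beta{\ni}_Z)^@{\ni}_Z = \dom\beta\beta{\ni}_Z = \beta{\ni}_Z$ by Proposition \ref{domain-basic}(a). For the domains, $(\beta{\ni}_Z)^@$ is total, so Proposition \ref{domain-basic}(d) gives $\dom{\dom\beta(\beta{\ni}_Z)^@} = \dom{\dom\beta} = \dom\beta$. Hence the two pfns coincide and (d) follows.

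I do not expect a genuine obstacle here; the only points of attention are keeping the sources and targets straight when chaining $\wp$ and ${\ni}$ in (a), and verifying that both arguments in (d) really are partial functions so that Proposition \ref{equality-power} legitimately applies.
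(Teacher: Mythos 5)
Your proposal is correct and follows essentially the same route as the paper: parts (a)--(c) unfold $\beta_\circ=\wp(\beta{\ni}_Z)$ and use the commuting square $\wp(\alpha){\ni}={\ni}\alpha$ together with functoriality of $\wp$, and part (d) reduces to Proposition \ref{equality-power} by matching domains and ${\ni}_Z$-composites, exactly as in the paper. Your extra detour in (d) via the identity $1_Y\wp(\alpha)=\alpha^@$ is harmless but unnecessary, since one can compute $\dom\beta\,1_Y\beta_\circ{\ni}_Z=\dom\beta\,1_Y{\ni}_Y\beta{\ni}_Z=\beta{\ni}_Z$ directly.
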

\begin{proof}
\begin{enumerate}
\item follows from
%$(\beta\gamma_\circ)_\circ
%=\beta_\circ\gamma_\circ:$
\[\begin{array}{ccll}
(\beta\gamma_\circ)_\circ
&=& \wp(\beta\gamma_\circ{\ni}_W) \\
&=& \wp(\beta\wp(\gamma{\ni}_W){\ni}_W) \\
&=& \wp(\beta{\ni}_Z\gamma{\ni}_W)
& \{~\wp(\alpha){\ni}_Y={\ni}_X\alpha~\} \\
&=& \wp(\beta{\ni}_Z)\wp(\gamma{\ni}_W)
& \{~\wp:\mbox{functor}~\} \\
&=& \beta_\circ\gamma_\circ. \\
\end{array}\]
\item follows from 
$(1_Y)_\circ 1_Y = \wp(1_Y{\ni}_Y) 
= \wp(\id_Y) 
= \id_{\wp(Y)}$ 
since $1_Y{\ni}_Y=\id_Y$. 
%%%$(1_Y)_\circ=\id_{\wp(Y)}:$
%\[\begin{array}{ccll}
%(1_Y)_\circ 1_Y &=& \wp(1_Y{\ni}_Y) \\
%&=& \wp(\id_Y) & \{~1_Y{\ni}_Y=\id_Y~\} \\
%&=& \id_{\wp(Y)}.
%\end{array}\]
\item follows from
$(0_{YZ}^@)_\circ = \wp(0_{YZ}^@{\ni}_Z) 
= \wp(0_{YZ}) 
= ({\ni}_Y0_{YZ})^@ 
= (0_{\wp(Y)Z})^@$.
%%%$(0_{YZ}^@)_\circ=0_{\wp(Y)Z}^@:$
%\[\begin{array}{ccll}
%(0_{YZ}^@)_\circ &=& \wp(0_{YZ}^@{\ni}_Z) \\
%&=& \wp(0_{YZ}) \\
%&=& ({\ni}_Y0_{YZ})^@ \\
%&=& (0_{\wp(Y)Z})^@.
%\end{array}\]
\item  Since 
%$\beta:\mbox{pfn}\to
%\dom\beta 1_Y\beta_\circ=\beta:$
$\dom{\dom\beta 1_Y\beta_\circ}
= \dom{\dom\beta} = \dom\beta$ 
%\[\begin{array}{ccll}
%\dom{\dom\beta 1_Y\beta_\circ}
%&=& \dom{\dom\beta} \\
%&=& \dom\beta
%\end{array}\]
and
\[\begin{array}{ccll}
\dom\beta 1_Y\beta_\circ{\ni}_Z
&=& \dom\beta 1_Y{\ni}_Y\beta{\ni}_Z \\
&=& \dom\beta\beta{\ni}_Z
& \{~1_Y{\ni}_Y=\id_Y~\} \\
&=& \beta{\ni}_Z, 
& \{~\dom\beta\beta=\beta~\}
\end{array}\]
$\dom\beta1_Y\beta_\circ=\beta$ holds by Proposition \ref{equality-power}.
\end{enumerate}
\end{proof}

Case (a) of the last proposition ensures that Kleisli composition
$\alpha\circ \beta$, is indeed associative, 
and (b) ensures that $1_X$ is a right unit of Kleisli composition.

\begin{proposition}
Kleisli composition of multirelations is associative:
$\alpha \circ (\beta \circ \gamma)=(\alpha \circ \beta) \circ \gamma$, and 
$1_X$ is a right unit of Kleisli composition on each $X$. 
\end{proposition}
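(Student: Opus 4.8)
The plan is to derive both claims directly from Proposition~\ref{prop-kleisli}, essentially by unfolding the definition $\alpha\circ\beta = \alpha\beta_\circ$ and invoking the relevant cases. The two statements are of exactly the form that Lemmas~\ref{lem-lift-asso} and~\ref{lem-lift-id} were set up to handle, so the proof should be little more than a citation of the general machinery instantiated at the Kleisli lifting $\beta\mapsto\beta_\circ$ with $\iota_X = 1_X$.

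For associativity, I would first observe that Proposition~\ref{prop-kleisli}(a), namely $(\beta\gamma_\circ)_\circ = \beta_\circ\gamma_\circ$, is precisely the hypothesis $\lambda(\beta\lambda(\gamma)) = \lambda(\beta)\lambda(\gamma)$ of Lemma~\ref{lem-lift-asso} with $\lambda(-) = (-)_\circ$ and $\bullet = \circ$. Hence Lemma~\ref{lem-lift-asso} immediately gives $\alpha\circ(\beta\circ\gamma) = (\alpha\circ\beta)\circ\gamma$. Alternatively, if one prefers a self-contained chain, I would simply compute
\[
  \alpha\circ(\beta\circ\gamma) = \alpha(\beta\gamma_\circ)_\circ = \alpha\beta_\circ\gamma_\circ = (\alpha\circ\beta)\circ\gamma,
\]
using associativity of relational composition in the middle step and Proposition~\ref{prop-kleisli}(a) at the ends. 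This is the cleaner route and I would present it inline.

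For the right-unit claim, I would use Proposition~\ref{prop-kleisli}(b), $(1_Y)_\circ = \id_{\wp(Y)}$, which is the hypothesis $\lambda(\iota_Y) = \id_{\wp(Y)}$ of Lemma~\ref{lem-lift-id} with $\iota_Y = 1_Y$. For any $\alpha : X\rel\wp(Y)$ we then get $\alpha\circ 1_Y = \alpha(1_Y)_\circ = \alpha\,\id_{\wp(Y)} = \alpha$, which is exactly the statement that $1_X$ is a right unit on each $X$. Again I would present this as a two-line calculation rather than a formal appeal to Lemma~\ref{lem-lift-id}.

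Since every ingredient is already proved, there is no genuine obstacle here; the proposition is a corollary of Proposition~\ref{prop-kleisli}(a),(b). The only thing to be careful about is bookkeeping of types — making sure the sets $X,Y,Z,W$ line up so that each composition is well-typed — and noting that, as the surrounding text already points out, $1_X$ is in general only a right unit and need not be a left unit, so no left-unit claim should be attempted. I would therefore keep the proof to a few lines, writing ``Immediate from Proposition~\ref{prop-kleisli}(a) and (b)'' followed by the two displayed chains above.
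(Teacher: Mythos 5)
Your proposal is correct and matches the paper's own treatment: the paper derives this proposition directly from Proposition~\ref{prop-kleisli}(a) and (b), using the general machinery of Lemmas~\ref{lem-lift-asso} and~\ref{lem-lift-id} instantiated at the Kleisli lifting, exactly as you do. The explicit two-line chains you add are a fine (and slightly more self-contained) way of writing down what the paper leaves implicit.
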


However it turns out that multirelations with respect to Kleisli
composition do not form a category. The following proposition shows
that the third defining identity of Kleisli liftings,
$\iota_X\lambda(\alpha)=\alpha$, need not hold.

\begin{proposition}\label{ex-kleisli}
Kleisli composition need not have left units. 
\end{proposition}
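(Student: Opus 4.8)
The plan is to exhibit a counterexample: a set $X$, a relation $\iota_X:X\rel\wp(X)$ that is a right unit of Kleisli composition (so that $\lambda(\iota_X)=(\iota_X)_\circ=\id_{\wp(X)}$ by Lemma~\ref{lem-lift-id}), together with a multirelation $\alpha:X\rel\wp(Y)$ for which $\iota_X\,\alpha_\circ\neq\alpha$. By Corollary~\ref{cor-lift-id}, if $\iota_X$ were a left unit as well it would be a two-sided identity on $X$, so it suffices to refute the left-unit identity $\iota_X\lambda(\alpha)=\alpha$ for a single $\alpha$. The natural candidate for $\iota_X$ is the singleton map $1_X:X\to\wp(X)$, since Proposition~\ref{prop-kleisli}(b) already tells us $(1_X)_\circ=\id_{\wp(X)}$, i.e.\ $1_X$ is a right unit. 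So the real task is just to find $\alpha$ with $1_X\,\alpha_\circ\neq\alpha$.

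The key computation is that $1_X\,\alpha_\circ=1_X\wp(\alpha{\ni}_Z)$, and since $1_X{\ni}_X=\id_X$ we get $1_X\wp(\alpha{\ni}_Z){\ni}_Z=1_X{\ni}_X\alpha{\ni}_Z=\alpha{\ni}_Z$; hence $1_X\,\alpha_\circ$ is the power transpose $(\alpha{\ni}_Z)^@$, which sends $a$ to $\bigcup\alpha(a)=\{c\mid\exists C.\,(a,C)\in\alpha\wedge c\in C\}$. Set-theoretically, $(a,A)\in 1_X\alpha_\circ$ iff $A=\bigcup\alpha(a)$. This collapses all the ``vertical'' nondeterminism of $\alpha$ into a single union, so any $\alpha$ that relates some point $a$ to two incomparable sets, or to a set that is not the union of the family it relates $a$ to, will do. Concretely I would take $X=\{a\}$, $Y=\{b\}$, and $\alpha=\{(a,\emptyset)\}\subseteq X\times\wp(Y)$: then $\bigcup\alpha(a)=\emptyset$, so $(a,\emptyset)\in 1_X\alpha_\circ$ — that happens to coincide with $\alpha$ here, so instead take $\alpha=\{(a,\{b\}),(a,\emptyset)\}$, or even simpler $X=\{a\}$, $Y=\{b_1,b_2\}$, $\alpha=\{(a,\{b_1\})\}$ versus the computed $1_X\alpha_\circ=\{(a,\{b_1\})\}$ — again equal. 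The genuinely distinguishing case needs $\alpha$ to relate $a$ to a set $A$ with $A\neq\bigcup\alpha(a)$: take $X=\{a\}$, $Y=\{b\}$, $\alpha=\{(a,\emptyset)\}$ is bad, so use $\alpha=\{(a,\{b\})\}$ together with also $(a,\emptyset)$, giving $1_X\alpha_\circ=\{(a,\{b\})\}\neq\alpha$. Cleaner still: $X=\{a\}$, $Y=\{b_1,b_2\}$, $\alpha=\{(a,\{b_1\}),(a,\{b_2\})\}$, so $1_X\alpha_\circ=\{(a,\{b_1,b_2\})\}$, which is visibly different from $\alpha$.

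The main (and only) obstacle is bookkeeping: one must pick the example so that $1_X\alpha_\circ\neq\alpha$ really holds — it is easy to accidentally choose $\alpha$ already closed under the $\bigcup$-operation, in which case equality holds vacuously. The safe choice is any $\alpha$ that relates one source point to at least two distinct sets whose union is relNot among them; the example $X=\{a\}$, $Y=\{b_1,b_2\}$, $\alpha=\{(a,\{b_1\}),(a,\{b_2\})\}$ works since $\{b_1,b_2\}\notin\alpha(a)$. I would then conclude: since $1_X$ is the unique right unit satisfying $\lambda(1_X)=\id$ (by Proposition~\ref{prop-kleisli}(b) and Lemma~\ref{lem-lift-id}), and it fails the left-unit identity, Kleisli composition has no left unit on $X$ in general, so multirelations under Kleisli composition do not form a category. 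Alternatively, and more directly, one checks that no relation $\iota_X$ can be a left unit: $\iota_X\alpha_\circ=\alpha$ for all $\alpha$ would force, taking $\alpha=1_X$, $\iota_X=\iota_X(1_X)_\circ=\iota_X$ (uninformative), so instead take $\alpha$ ranging over the two-element example above and observe that the left side $\iota_X\alpha_\circ$ is the composite of a fixed relation with a tfn and therefore cannot simultaneously produce $\{(a,\{b_1\}),(a,\{b_2\})\}$; a short argument on cardinalities of images of $\iota_X$ at $a$ under the two tfns $\alpha_\circ$ closes it.
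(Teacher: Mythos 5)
Your counterexample is correct, and your route is genuinely different from the paper's. The paper works entirely over $X=Y=\{a\}$, lists all four multirelations $X\rel\wp(X)$, computes the full Kleisli composition table, and reads off that no row of the table is the identity row; this is an exhaustive finite check that also exhibits the right units. You instead isolate the structural reason: $1_X\alpha_\circ=(\alpha{\ni}_Y)^@$ is always a total function (it collapses $\alpha(a)$ to the single set $\bigcup\alpha(a)$), so it cannot equal any $\alpha$ relating a point to two distinct sets, e.g.\ $\alpha=\{(a,\{b_1\}),(a,\{b_2\})\}$ with $1_X\alpha_\circ=\{(a,\{b_1,b_2\})\}$. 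That explains \emph{why} left units fail, which the paper's table does not.

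Two steps need tightening. First, refuting $1_X$ as a left unit does not by itself rule out \emph{every} candidate left unit; the clean way to close this is the standard unit argument: since $1_X$ is a right unit (Proposition~\ref{prop-kleisli}(b) with Lemma~\ref{lem-lift-id}), any left unit $e:X\rel\wp(X)$ would satisfy $e=e\circ 1_X=1_X$, so $1_X$ is the only possible left unit and your computation finishes the proof. Your appeal to Corollary~\ref{cor-lift-id} and to ``$1_X$ is the unique right unit'' does not quite do this: right units need not be unique (in the paper's own example both $\beta$ and $\gamma$ are right units), and the corollary characterises two-sided identities rather than forcing a left unit to coincide with $1_X$. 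Second, your ``more direct'' alternative via cardinalities does not work as stated — the image of $a$ under $\iota_X\alpha_\circ$ can perfectly well have cardinality $2$, so counting alone yields no contradiction. What does work, and is in fact the most direct complete argument, is to note that the image of the tfn $\alpha_\circ$ is $\{\emptyset,\{b_1,b_2\}\}$, so $(a,\{b_1\})$ can never lie in $\iota_X\alpha_\circ$ for \emph{any} $\iota_X:X\rel\wp(X)$, whence no left unit exists. With either repair the proof is complete.
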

\begin{proof}
Let $X=\{a\}$. Then 
\[
 0=0_{X\wp(X)}, \quad
 \alpha=\{(a,\emptyset)\}, \quad
 \beta=\{(a,\{a\})\}, \quad
 \gamma=\{(a,\emptyset),(a,\{a\})\}\enskip
\]
are 
%$\{0, \alpha, \beta, \gamma\}$ is the set of 
all multirelations over $X$. We obtain the Kleisli liftings 
\[
0_{\circ}=\alpha_\circ=\{(\emptyset,\emptyset),(\{a\},\emptyset)\},\quad
 \beta_\circ=\gamma_\circ=\{(\emptyset,\emptyset),(\{a\},\{a\})\}
 \]
and the composition table for the Kleisli composition $\circ$ as follows: 
\[
 \begin{array}[b]{c|cccc}
  \circ&0&\alpha&\beta&\gamma\\ \hline
  0& 0&0&0&0\\
  \alpha& \alpha&\alpha&\alpha&\alpha\\
  \beta& \alpha&\alpha&\beta&\beta\\
  \gamma& \alpha&\alpha&\gamma&\gamma
 \end{array}
 \enskip.
\]
Thus $\beta$ and $\gamma$ are right units by Lemma \ref{lem-lift-id}, 
however the composition table shows that there is no left unit. 
\end{proof}

%%%%%%%%%%%%%%%%%%%%%%%%%%%%%%%%%%%%%%%%%%%%%%%%%%%%%%%%%%%

\section{Parikh lifting}

The \emph{Parikh lifting} $\beta_\diamond: \wp(Y)\rel\wp(Z)$ of a relation 
$\beta:Y\rel\wp(Z)$ is defined by 
$\beta_\diamond={\ni}_Y\rhd\beta$ . 
The Parikh lifting satisfies, by
definition,
\[
\begin{array}{rcl}
(B,A)\in\beta_\diamond&\liff&
\forall b\in B.\,(b,A)\in \beta. 
\end{array}
\]
In addition, one can define the \emph{Parikh composition} of relations
$\alpha:X\rel\wp(Y)$ and $\beta:Y\rel\wp(Z)$ as
\begin{equation*}
  \alpha\diamond \beta = \alpha\beta_\diamond.
\end{equation*}
Its set-theoretic counterpart has again been presented in the
introduction.  This lifting and the associated composition have been
studied by Martin and Curtis \cite{Martin2013}.  However, they have
concentrated on up-closed multirelations $\alpha:X\rel\wp(Y)$ such
that $\alpha\Xi_Y=\alpha$, whereas we complement their investigation
by the general case where up-closure need not hold. Independently of
the work presented here, Berghammer and Guttmann have studied Parikh
composition without up-closure, but without explicit liftings~\cite{BG15}.
Set-theoretically, a multirelation $\alpha$ is up-closed if $(a,B)\in \alpha$
and $B\subseteq C$ imply $(a,C)\in \alpha$. 

First, we present some properties of general multirelations under
Parikh composition. 
\begin{proposition}\label{prop-parikh}
Let $\beta:Y\rel\wp(Z)$ and $\gamma:Z\rel\wp(W)$ be relations. 
\begin{enumerate}
\item $\beta_\diamond\gamma_\diamond\sqsub
(\beta\gamma_\diamond)_\diamond$. 
\item $\gamma_\diamond
=\Xi_Z\gamma^{\sharp\,@\,\sharp}$. 
%\hfill {\rm (Martin and Curtis)}
\item $(\beta\gamma_\diamond)_\diamond\sqsub
(\beta\Xi_Z)_\diamond\gamma_\diamond$. 
%\hfill {\rm (Martin and Curtis)}
\item $1_Y\Xi_Y={\ni}_Y\#$ and $1_Y\sqsub{\ni}_Y\#$. 
\item ${\ni}_Y\#\beta_\diamond=\beta$. 
\item $({\ni}_Z\#)_\diamond=\Xi_Z$. 
%\hfill
%If $\beta\Xi_Z\sqsub\beta$, then
%$\beta({\ni}_Z\#)_\diamond=\beta$.
\end{enumerate}
\end{proposition}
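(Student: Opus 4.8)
The plan is to verify the six identities (a)--(f) one by one, using only the definition $\beta_\diamond = {\ni}_Y \rhd \beta$ together with the residuation laws in Proposition~\ref{lhd-basic}, the powerset/transpose laws (M1), (M2) and the order relation $\Xi_Z = {\ni}_Z \rhd {\ni}_Z{\#}$. For (a), I would expand both sides: $\beta_\diamond\gamma_\diamond = ({\ni}_Y\rhd\beta)({\ni}_Z\rhd\gamma)$ and $(\beta\gamma_\diamond)_\diamond = {\ni}_Y\rhd\beta\gamma_\diamond$, then apply Proposition~\ref{lhd-basic}\,(f), namely $(\alpha\rhd\beta)\gamma \sqsub \alpha\rhd\beta\gamma$, with $\alpha = {\ni}_Y$, $\beta$ the given $\beta$, and $\gamma = \gamma_\diamond$; this immediately gives $({\ni}_Y\rhd\beta)\gamma_\diamond \sqsub {\ni}_Y \rhd \beta\gamma_\diamond$, which is the claim. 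For (b), I would rewrite $\gamma^{\sharp@\sharp}$: since $\gamma^@$ is the tfn with $\gamma^@{\ni}_Z = \gamma$, its converse is a partial function and $\gamma_\diamond = {\ni}_Z\rhd\gamma = {\ni}_Z\rhd\gamma^@{\ni}_Z$; using Proposition~\ref{lhd-basic}\,(b) this is $({\ni}_Z\rhd\gamma^@){\ni}_Z$-type manipulation, and I would push $\gamma^{@\sharp}$ through a residual via the tfn-rule~\ref{lhd-basic}\,(g) to land on $\Xi_Z\gamma^{\sharp@\sharp}$; the bookkeeping of sources and targets here is the first mildly delicate point.

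For (c), the inclusion $(\beta\gamma_\diamond)_\diamond \sqsub (\beta\Xi_Z)_\diamond\gamma_\diamond$ should follow by first noting $\gamma_\diamond = \Xi_Z\gamma_\diamond$ (because $\Xi_Z = {\ni}_Z\rhd{\ni}_Z{\#}$ acts as a left identity up to the up-closure built into the residual with ${\ni}_Z$ — more precisely one shows $\Xi_Z({\ni}_Z\rhd\gamma) = {\ni}_Z\rhd\gamma$ from \ref{lhd-basic}\,(a),(b)), and then applying (a) in the form already proved, with $\beta\Xi_Z$ in place of $\beta$ and using $\beta\gamma_\diamond = \beta\Xi_Z\gamma_\diamond$. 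Item (d) is a short computation: $1_Y = \id_Y^@$, and $1_Y\Xi_Y = \id_Y^@({\ni}_Y\rhd{\ni}_Y{\#}) = ({\ni}_Y{\#}\mathbin{\cdot}(\ldots))$ — concretely, since $1_Y{\ni}_Y = \id_Y$, one gets $1_Y\Xi_Y = 1_Y{\ni}_Y\rhd{\ni}_Y{\#}$-type reduction to ${\ni}_Y{\#}$, from which $1_Y\sqsub{\ni}_Y{\#}$ follows because $\Xi_Y$ is reflexive, i.e.\ $\id_{\wp(Y)}\sqsub\Xi_Y$, hence $1_Y = 1_Y\id_{\wp(Y)}\sqsub 1_Y\Xi_Y = {\ni}_Y{\#}$.

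Items (e) and (f) are the cleanest. For (e), ${\ni}_Y{\#}\beta_\diamond = {\ni}_Y{\#}({\ni}_Y\rhd\beta)$; since ${\ni}_Y{\#}$ need not be a tfn one cannot use \ref{lhd-basic}\,(e) directly, but the defining adjunction $\delta\sqsub\alpha\rhd\beta \Iff \alpha{\#}\delta\sqsub\beta$ with $\delta = {\ni}_Y\rhd\beta$ gives ${\ni}_Y{\#}\beta_\diamond\sqsub\beta$, and the reverse inclusion $\beta\sqsub{\ni}_Y{\#}({\ni}_Y\rhd\beta)$ follows from totality of ${\ni}_Y{\#}$ expressed via (M2)/$\dom{({\ni}_Y{\#})}=\id$, combined with $\beta\sqsub{\ni}_Y\rhd({\ni}_Y{\#}\beta)$-unfolding — I would double-check this against \ref{lhd-basic}\,(a) and the cancellation $\nabla_{ZY}({\ni}_Y\sqcap\ldots)$ remark. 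For (f), $({\ni}_Z{\#})_\diamond = {\ni}_Z\rhd{\ni}_Z{\#} = \Xi_Z$ is literally the definition of $\Xi_Z$ recalled in the Preliminaries, so nothing is to prove. The main obstacle across the whole proposition is (b): turning the residual expression ${\ni}_Z\rhd\gamma$ into the transpose-and-order form $\Xi_Z\gamma^{\sharp@\sharp}$ requires carefully invoking that $\gamma^@$ is the unique tfn with $\gamma^@{\ni}_Z = \gamma$ and then threading converses through \ref{lhd-basic}\,(b) and (g); everything else is a direct application of the residuation toolbox already assembled.
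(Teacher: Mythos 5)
Parts (a), (d) and (f) of your plan coincide with the paper's proofs and are fine, and for (e) the inclusion ${\ni}_Y\#\beta_\diamond\sqsub\beta$ via the adjunction is correct, while the reverse inclusion is most cleanly obtained not from totality of ${\ni}_Y\#$ but from your own item (d): $\beta=1_Y{\ni}_Y\rhd\beta=1_Y({\ni}_Y\rhd\beta)\sqsub{\ni}_Y\#({\ni}_Y\rhd\beta)$, the singleton being the required witness (mere totality of ${\ni}_Y\#$ does not produce it).

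There are, however, two genuine gaps. In (b) you factor $\gamma$ as $\gamma^@{\ni}$, i.e.\ with the power transpose of $\gamma$ itself on the \emph{left}; but to pull a factor out of the residual ${\ni}_Z\rhd(-)$ via Proposition~\ref{lhd-basic}\,(g) you need a factor on the \emph{right} of $\gamma$ whose converse is a tfn. The identity that does the work is $\gamma\#=\gamma^{\sharp\,@}{\ni}_Z$, equivalently $\gamma={\ni}_Z\#\gamma^{\sharp\,@\,\sharp}$, after which ${\ni}_Z\rhd{\ni}_Z\#\gamma^{\sharp\,@\,\sharp}=({\ni}_Z\rhd{\ni}_Z\#)\gamma^{\sharp\,@\,\sharp}=\Xi_Z\gamma^{\sharp\,@\,\sharp}$; your route, as written, does not reach the target. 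More seriously, your plan for (c) fails: the identity $\gamma_\diamond=\Xi_Z\gamma_\diamond$ is true, but applying (a) with $\beta\Xi_Z$ in place of $\beta$ yields $(\beta\Xi_Z)_\diamond\gamma_\diamond\sqsub(\beta\Xi_Z\gamma_\diamond)_\diamond=(\beta\gamma_\diamond)_\diamond$, which is the \emph{converse} of the inclusion claimed in (c); the reverse of (a) is exactly what fails in general (Example~\ref{ex-parikh-asso}). The paper's proof of (c) instead goes through (b): $(\beta\gamma_\diamond)_\diamond={\ni}_Y\rhd\beta\Xi_Z\gamma^{\sharp\,@\,\sharp}=({\ni}_Y\rhd\beta\Xi_Z)\gamma^{\sharp\,@\,\sharp}=(\beta\Xi_Z)_\diamond\gamma^{\sharp\,@\,\sharp}\sqsub(\beta\Xi_Z)_\diamond\Xi_Z\gamma^{\sharp\,@\,\sharp}=(\beta\Xi_Z)_\diamond\gamma_\diamond$, where the crucial middle step is an \emph{equality} obtained from the tfn-converse of $\gamma^{\sharp\,@\,\sharp}$. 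So (b) is not an optional refinement but the engine of (c), and both need to be repaired along these lines.
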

\begin{proof}
\begin{enumerate}
\item follows from
%$\beta_\diamond\gamma_\diamond\sqsub
%(\beta\gamma_\diamond)_\diamond:$
\[\begin{array}{ccll}
\beta_\diamond\gamma_\diamond
&=& ({\ni}_Y\rhd\beta)\gamma_\diamond \\
&\sqsub& {\ni}_Y\rhd\beta\gamma_\diamond
& \{~\mbox{\ref{lhd-basic} (f)}~\} \\
&=& (\beta\gamma_\diamond)_\diamond.
\end{array}\]
\item follows from 
%$\gamma_\diamond
%=\Xi_Z\gamma^{\sharp\,@\,\sharp}:$
%%%
%\hfill (Law 4.5 by by Martin and Curtis)
%%%
\[\begin{array}{ccll}
\gamma_\diamond &=& {\ni}_Z\rhd\gamma \\
&=& {\ni}_Z\rhd{\ni}_Z\#
\gamma^{\sharp\,@\,\sharp}
& \{~\gamma\#=\gamma^{\sharp\,@}{\ni}_Z~\} \\
&=& ({\ni}_Z\rhd{\ni}_Z\#)
\gamma^{\sharp\,@\,\sharp}
& \{~\gamma^{\sharp\,@}:\mbox{tfn}~\} \\
&=& \Xi_Z\gamma^{\sharp\,@\,\sharp}.
& \{~{\ni}_Z\rhd{\ni}_Z\#=\Xi_Z~\}
\end{array}\]
\item follows from 
%$(\beta\gamma_\diamond)_\diamond\sqsub
%(\beta\Xi_Z)_\diamond\gamma_\diamond:$
%%%
%\hfill (Martin and Curtis)
%%%
\[\begin{array}{ccll}
(\beta\gamma_\diamond)_\diamond
&=& {\ni}_Y\rhd\beta\gamma_\diamond \\
&=& {\ni}_Y\rhd\beta\Xi_Z
\gamma^{\sharp\,@\,\sharp}
& \{~\mbox{(b)}~\gamma_\diamond=\Xi_Z
\gamma^{\sharp\,@\,\sharp}~\} \\
&=& ({\ni}_Y\rhd\beta\Xi_Z)
\gamma^{\sharp\,@\,\sharp}
& \{~\gamma^{\sharp\,@}:\mbox{tfn}~\} \\
&=& (\beta\Xi_Z)_\diamond
\gamma^{\sharp\,@\,\sharp} \\
&\sqsub& (\beta\Xi_Z)_\diamond\Xi_Z
\gamma^{\sharp\,@\,\sharp}
& \{~\id_{\wp(Z)}\sqsub\Xi_Z~\} \\
&=& (\beta\Xi_Z)_\diamond\gamma_\diamond.
& \{~\mbox{(b)}~\gamma_\diamond=\Xi_Z
\gamma^{\sharp\,@\,\sharp}~\}
\end{array}\]
%
%(d) $\{\cdot\}_Y\sqsub{\ni}_Y\#:$
%\[\begin{array}{ccll}
%(\{\cdot\}_Y)\#
%&=& (\{\cdot\}_Y)\#\id_Y \\
%&=& (\{\cdot\}_Y)\#\{\cdot\}_Y{\ni}_Y \\
%&\sqsub& {\ni}_Y
%& \{~\{\cdot\}_Y:\mbox{tfn}~\} \\
%\end{array}\]
\item $1_Y\Xi_Y={\ni}_Y\#$ follows from %\hfill
\[\begin{array}{ccll}
1_Y\Xi_Y
&=& 1_Y({\ni}_Y\rhd{\ni}_Y\#) \\
&=& 1_Y{\ni}_Y\rhd{\ni}_Y\# \\
&=& \id_Y\rhd{\ni}_Y\#
& \{~1_Y{\ni}_Y=\id_Y~\} \\
&=& {\ni}_Y\#. & \{~\id_Y:\mbox{tfn}~\}
\end{array}\]
So, $1_Y\sqsub{\ni}_Y\#$ by
$\id_{\wp(Y)}\sqsub\Xi_Y$.
\item follows from
%${\ni}_Y\#\beta_\diamond=\beta:$
\[\begin{array}{ccll}
\beta
&=& \id_Y\rhd\beta &\{~\id_Y:\mbox{tfn}~\}\\
&=& 1_Y{\ni}_Y\rhd\beta \\
&=& 1_Y({\ni}_Y\rhd\beta)
&\{~1_Y:\mbox{tfn}~\} \\
&\sqsub& {\ni}_Y\#({\ni}_Y\rhd\beta)
&\{~\mbox{(d)}~1_Y\sqsub{\ni}_Y\#~\}\\
&\sqsub& \beta.
\end{array}\]
\item is immediate from the definitions of the Parikh lifting and $\Xi_Z$. 
%$({\ni}_Z\#)_\diamond=\Xi_Z$. 
%\hfill
%If $\beta\Xi_Z\sqsub\beta$, then
%$\beta({\ni}_Z\#)_\diamond=\beta$.
%\[\begin{array}{ccll}
% ({\ni}_Z\#)_\diamond &=& {\ni}_Z\rhd{\ni}_Z\# \\
%&=& \Xi_Z.
%& \{~{\ni}_Z\rhd{\ni}_Z\#=\Xi_Z~\}
%\end{array}\]
\end{enumerate}
\end{proof}
%%%
It is known that Parikh composition $\alpha\diamond\beta$ need not
be associative \cite{Tsumagari-D}. The following example confirms
that, in fact, the converse of inclusion (a) need not hold.
\begin{example}[Tsumagari, \cite{Tsumagari-D}]\label{ex-parikh-asso}
{\rm
Let $X=\{a,b,c\}$, and $\alpha, \beta: X\rel \wp(X)$ be multirelations defined by
$$
\begin{array}{l}
\alpha = \{(a, \{a, b, c\}), (b, \{a, b, c\}), (c, \{a, b, c\})\},\qquad
\beta = \{(a, \{b, c\}), (b, \{a, c\}), (c, \{a, b\})\}.
\end{array}
$$
Then we obtain Parikh liftings 
$$
\begin{array}{l}
\alpha_\diamond = \{(B,\{a, b, c\}) \mid B\subseteq X\}\cup\{(\emptyset, A)\mid A\subseteq X\},
\\[5pt]
\beta_\diamond = \{(\{a\}, \{b, c\}), (\{b\}, \{a, c\}), (\{c\}, \{a, b\})\}\cup\{(\emptyset, A)\mid A\subseteq X\}.
\end{array}
$$
of $\alpha$ and $\beta$. Therefore the inequality $\beta_\diamond\alpha_\diamond \Slash{\sqsub} (\beta \alpha_\diamond)_\diamond$ holds, since $(\beta \alpha_\diamond)_\diamond =\alpha_\diamond$, and 
$$
\beta_\diamond\alpha_\diamond =\{(\{a\}, \{a, b, c\}), (\{b\}, \{a, b, c\}), (\{c\}, \{a, b, c\})\}\cup\{(\emptyset, A)\mid A\subseteq X\}. 
$$
\qed
}
\end{example}
In addition, $\lambda(\iota_X)=\id_{\wp(X)}$ fails this time.

\begin{proposition}\label{ex-parikh-unit}
Parikh composition need not have right units. 
\end{proposition}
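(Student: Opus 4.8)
The plan is to exhibit a small counterexample showing that no relation $\iota_X : X \rel \wp(X)$ can serve as a right unit for Parikh composition. By Lemma~\ref{lem-lift-id}, $\iota_X$ is a right unit of $\diamond$ if and only if $(\iota_X)_\diamond = \id_{\wp(X)}$, i.e. $\ni_X \rhd \iota_X = \id_{\wp(X)}$. So it suffices to show that for some fixed $X$ there is no multirelation $\iota : X \rel \wp(X)$ with ${\ni}_X \rhd \iota = \id_{\wp(X)}$. The natural candidate, in analogy with the Kleisli case, is the singleton map $1_X$, and indeed the statement is really that this candidate fails. First I would specialise to the singleton $X = \{a\}$, where $\wp(X) = \{\emptyset, \{a\}\}$, and simply run through all four multirelations $0, \alpha = \{(a,\emptyset)\}, \beta = \{(a,\{a\})\}, \gamma = \{(a,\emptyset),(a,\{a\})\}$ as candidates for $\iota_X$, computing each Parikh lifting $\iota_\diamond = {\ni}_X \rhd \iota$ set-theoretically via $(B,A) \in \iota_\diamond \liff \forall b \in B.\,(b,A) \in \iota$.

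The key computation is that $(\emptyset, A) \in \iota_\diamond$ for \emph{every} $A \in \wp(X)$, regardless of $\iota$, because the universal quantifier over $b \in \emptyset$ is vacuously true. Hence every Parikh lifting $\iota_\diamond$ contains both pairs $(\emptyset, \emptyset)$ and $(\emptyset, \{a\})$, so $\iota_\diamond$ is never univalent and in particular never equals $\id_{\wp(X)}$ (which pairs $\emptyset$ only with $\emptyset$). This already shows $(\iota_X)_\diamond \ne \id_{\wp(X)}$ for all four choices, so by Lemma~\ref{lem-lift-id} none of them is a right unit; since these four exhaust the multirelations over $X$, Parikh composition on $X = \{a\}$ has no right unit. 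If a cleaner presentation is wanted I would instead display the Parikh-composition table for $\{0,\alpha,\beta,\gamma\}$ analogously to the Kleisli case, from which the absence of a right unit can be read off directly; the liftings are $0_\diamond = \alpha_\diamond = \beta_\diamond = \gamma_\diamond|_{\{\emptyset\}} = \{(\emptyset,\emptyset),(\emptyset,\{a\})\} \cup (\text{whatever } \{a\} \text{ maps to})$, and one checks $\{a\}\diamond\iota = \{a\}$ fails for every $\iota$.

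There is no real obstacle here — the argument is a finite check — so the only thing to get right is the bookkeeping: making sure the vacuous-quantifier observation is stated cleanly, and pointing out that it is precisely this ``$\emptyset$ relates to everything'' phenomenon (the same feature that makes the Parikh lifting of a non-up-closed relation misbehave) that obstructs units. One could alternatively argue purely relation-algebraically: from ${\ni}_X \rhd \iota = \id_{\wp(X)}$ one derives, precomposing with the $I$-point picking out $\emptyset \,\dot\in\, \wp(X)$ and using that this point's $\ni_X$-image is the empty relation $0_{IX}$, that $0_{IX} \rhd \iota = \id_I$, whence $\nabla_{I\wp(X)} \sqsub \id_I \cdot$something forces $\wp(X)$ to be a singleton — contradicting $|\wp(X)| = 2$. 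I expect the explicit finite table to be the most transparent and will likely present that, mirroring the proof of Proposition~\ref{ex-kleisli}.
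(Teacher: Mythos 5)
Your proposal is correct and follows essentially the same route as the paper: the same example $X=\{a\}$ with the same four multirelations, concluding via Lemma~\ref{lem-lift-id} that no candidate $\iota$ satisfies $\iota_\diamond=\id_{\wp(X)}$. Your vacuous-quantifier observation (that $(\emptyset,\{a\})\in\iota_\diamond$ for every $\iota$) is a slightly more economical way to read off the same obstruction that the paper exhibits through its explicit liftings and composition table.
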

\begin{proof}
Consider again the multirelations from the proof of Proposition \ref{ex-kleisli}. 
We obtain Parikh liftings
\[
 \begin{array}{ll}
 0_\diamond=\{(\emptyset,\emptyset),\;(\emptyset,\{a\})\}\;, &
 \alpha_\diamond=\{(\emptyset,\emptyset),\;(\emptyset,\{a\}),\;(\{a\},\emptyset)\}\;,\\
 \beta_\diamond=\{(\emptyset,\emptyset),\;(\emptyset,\{a\}),\;(\{a\},\{a\})\}\;,&
 \gamma_\diamond=\nabla_{\wp(X)\wp(X)}\enskip
 \end{array}
\]
of these multirelations and the composition table for Parikh composition $\diamond$ as follows: 
\[
 \begin{array}[b]{c|cccc}
  \diamond&0&\alpha&\beta&\gamma\\ \hline
  0& 0&0&0&0\\
  \alpha& \gamma&\gamma&\gamma&\gamma\\
  \beta& 0&\alpha&\beta&\gamma\\
  \gamma& \gamma&\gamma&\gamma&\gamma
 \end{array}
\enskip.
\]
The composition table shows that $\beta$ is the left unit and 
there is no right unit by Lemma \ref{lem-lift-id}. 
\end{proof}

The failure of the right unit law has been noted by Berghammer and
Guttmann, too~\cite{BG15}. Parikh composition is, however, associative
for up-closed multirelations, and in fact, the inequalities (a) and
(c) of Proposition \ref{prop-parikh} imply this.  In addition, (e) and
(f) of Proposition \ref{prop-parikh} imply that the converses of the
membership relations serve as the units of Parikh composition in the
up-closed case. Equation (b) of Proposition \ref{prop-parikh} implies
that $\alpha \diamond \beta=\alpha\beta^{\sharp\,@\,\sharp}$ if
$\alpha$ is up-closed.
In other words, up-closed multirelations form categories with respect
to Parikh composition.  We recover this result of Martin and Curtis
within the more general setting of multirelations that need not be
up-closed.

%%%%%%%%%%%%%%%%%%%%%%%%%%%%%%%%%%%%%%%%%%%%%%%%%%%%%%%%%%%

\section{Peleg lifting}\label{section:Peleg}

Before providing relational definitions of Peleg lifting and Peleg
composition, we introduce some notation and prove a technical
property.

For a relation $\alpha:X\rel Y$ the expressions
$f\sqsub_p\alpha$ and $f\sqsub_c\alpha$ denote the 
conditions
\[\tuple{f\sqsub\alpha}\land
\tuple{f:\mbox{pfn}} \mbox{ and } \tuple{f\sqsub\alpha}\land
\tuple{f:\mbox{pfn}}\land \tuple{\dom f=\dom\alpha}\enskip,\]
respectively.  Some proofs below assume the point axiom
(PA) % and (PA$_*$),
and a variant of the (relational) axiom of choice % (AC) and
(AC$_*$), that is, 
\[
\begin{array}{ll}
\mbox{(PA)}\, &%\id_X=\sqcup_{x\in X} \,x\#x, \mbox{ or }   
\bigsqcup_{x\,\dot\in\, X}x =\nabla_{IX}, \\
%\mbox{(PA$_*$)}\,&\forall\rho:I\rel X.~
%\tuple{\rho=\sqcup_{x\,\dot\in\,\rho}x},\\
%\mbox{(AC)}\, & \forall\alpha:X\rel Y.~
%[\tuple{\id_X\sqsub\alpha\alpha\#}\to\exists
%f:X\to Y.~\tuple{f\sqsub\alpha}],\\
\mbox{(AC$_*$)}\, &\forall\alpha:X\rel Y.~
[\tuple{f\sqsub_p\alpha}\to\exists f'.~
\tuple{f\sqsub f'\sqsub_c \alpha}],\\
%\mbox{(Sub)}\, &\forall\rho:I\rel X~\exists j:S\to X.~
%\tuple{\rho=\nabla_{IS}j}\land\tuple{jj\#=\id_S},\\
%\mbox{(DF)}\, & \alpha \beta \sqcap \gamma \sqsubseteq  
%\alpha (\beta \sqcap\alpha\#\gamma).
\end{array}
\] 
in addition to (Sub) and (DF). 
Note that (PA) is equivalent to $\id_X=\bigsqcup_{x\in X} \,x\#x$. 
Also note that (AC$_*$) implies the (relational) axiom of
choice 
\[
\begin{array}{ll}
 \mbox{(AC)}\,& \forall\alpha:X\rel Y.~
[\tuple{\id_X\sqsub\alpha\alpha\#}\to\exists
f:X\to Y.~\tuple{f\sqsub\alpha}].
\end{array}
\]

\begin{proposition}\label{union-c-pfn}
For all relations $\alpha:X\rel Y$, the
identity 
$\alpha=\bigsqcup_{f\sqsub_c\alpha}f$ 
% \[\alpha=\sqcup_{f\sqsub_c\alpha}f\]
holds.
%  (PA)$\land$(AC$_*$)$\to$%
%  (AC$_\diamond$).
\end{proposition}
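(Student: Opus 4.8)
The plan is to prove the two inclusions $\alpha \sqsubseteq \bigsqcup_{f \sqsubseteq_c \alpha} f$ and $\bigsqcup_{f \sqsubseteq_c \alpha} f \sqsubseteq \alpha$ separately. The second inclusion is immediate: every $f$ with $f \sqsubseteq_c \alpha$ satisfies in particular $f \sqsubseteq \alpha$, so the supremum of all such $f$ is below $\alpha$. The work is therefore entirely in the first inclusion, which says that every pair in $\alpha$ is covered by some $f$ that is a pfn, is below $\alpha$, and has the same domain as $\alpha$.

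The key step is a pointwise argument using the point axiom (PA). Fix an $I$-point $x \,\dot\in\, X$ with $x \sqsubseteq \dom\alpha$, equivalently $\mathrm{id}_I \sqsubseteq x\alpha\alpha\# x\#$, i.e. $x\alpha$ is total as a relation $I \rel Y$. I would like to single out, for such an $x$, a single pair $(x, y) \in \alpha$, i.e. an $I$-point $y \,\dot\in\, x\alpha$. This is exactly what (AC) (or more directly the hypothesis that $x\alpha$ is a nonempty relation from a singleton) provides: there is a tfn $y : I \to Y$ with $y \sqsubseteq x\alpha$. Now I would assemble a candidate partial function by taking, for each $x$ with $x \sqsubseteq \dom\alpha$, the singleton relation $x\# y_x$ where $y_x$ is the chosen point, and forming $f_0 = \bigsqcup_{x \,\dot\in\, \dom\alpha} x\# y_x$. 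Using (PA) in the form $\mathrm{id}_X = \bigsqcup_{x \in X} x\# x$ restricted to the domain, one checks that $\dom{f_0} = \dom\alpha$ and that $f_0$ is univalent, hence $f_0 \sqsubseteq_c \alpha$; and each individual pair $(a,b) \in \alpha$ lies in some such $f_0$ (choose the selection so that $y_a = b$ at that point). Taking the supremum over all such selections then dominates $\alpha$.

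A cleaner route, and the one I would actually carry out, avoids making the selections explicit by invoking (AC$_*$) directly. Given any pair $(a,b) \in \alpha$ — equivalently an $I$-point $a \,\dot\in\, X$ and a point $b \,\dot\in\, a\alpha$ — the relation $a\# b$ is a pfn with $a\# b \sqsubseteq \alpha$, so $a\# b \sqsubseteq_p \alpha$. By (AC$_*$) there exists $f'$ with $a\# b \sqsubseteq f' \sqsubseteq_c \alpha$. This $f'$ witnesses that the pair $(a,b)$ lies in $\bigsqcup_{f \sqsubseteq_c \alpha} f$. Since $(a,b) \in \alpha$ was arbitrary, $\alpha \sqsubseteq \bigsqcup_{f \sqsubseteq_c \alpha} f$, and together with the trivial reverse inclusion this gives the identity. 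Phrasing this point-free, one uses (PA) to write $\alpha = \bigsqcup_{a \,\dot\in\, \dom\alpha}\bigsqcup_{b \,\dot\in\, a\alpha} a\# b$ and then replaces each generator $a\# b$ by an $f'$ from (AC$_*$).

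The main obstacle is the bookkeeping needed to guarantee the \emph{domain} condition $\dom f = \dom\alpha$ rather than merely $f \sqsubseteq \alpha$: a naive selection of one output per input only yields $\dom f \sqsubseteq \dom\alpha$ with equality only on the points where a choice was actually made. This is precisely the gap that (AC$_*$) is designed to close — it upgrades an arbitrary pfn below $\alpha$ to one that is total wherever $\alpha$ is — so the real content of the proof is recognising that (AC$_*$) applies, with the routine remainder being the relation-algebraic translation of "every pair of $\alpha$ is hit by some $f'$" into the inclusion $\alpha \sqsubseteq \bigsqcup_{f \sqsubseteq_c \alpha} f$.
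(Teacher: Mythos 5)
Your ``cleaner route'' is exactly the paper's proof: the easy inclusion $\bigsqcup_{f\sqsub_c\alpha}f\sqsub\alpha$ is noted, then (PA) is used to decompose $\alpha$ into atomic pfns (the paper writes these as $x\#x\alpha y\#y$, which is either $0$ or your $a\#b$), and (AC$_*$) extends each such atom to some $f\sqsub_c\alpha$, giving the converse inclusion. You also correctly identify that the whole point of (AC$_*$) here is to repair the domain condition $\dom f=\dom\alpha$, so the proposal is correct and essentially identical to the paper's argument.
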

\begin{proof} 
The inclusion
$\bigsqcup_{f\sqsub_c\alpha}f\sqsub\alpha$
is clear. It remains to show its converse.
Using the point axiom (PA)
we have
\[\alpha=(\bigsqcup_{x\in X}x\#x)\alpha
(\bigsqcup_{y\in Y}y\#y)=\bigsqcup_{x\in X}
\bigsqcup_{y\in Y}x\#x\alpha y\#y.\]
Each relation $x\#x\alpha y\#y$ is a pfn
and $x\#x\alpha y\#y\sqsub x\#y\sqcap\alpha$.
%\[\begin{array}{ccll}
%(x\#y)\#(x\#y)
%&=& y\#xx\#y \\
%&=& y\#y & \{~xx\#=\id_I~\} \\
%&\sqsub& \id_Y.
%\end{array}\]
By the axiom of choice (AC$_*$), 
there is a pfn $f:X\rel Y$ such that
$x\#x\alpha y\#y\sqsub f\sqsub_c\alpha$. 
%\[
% x\#x\alpha y\#y\sqsub f\sqsub_c\alpha.
%\]
Hence we have
$x\#x\alpha y\#y\sqsub
\bigsqcup_{f\sqsub_c\alpha}f$, 
%\[x\#x\alpha y\#y\sqsub
%\sqcup_{f\sqsub_c\alpha}f,\]
which proves the converse inclusion
$\alpha\sqsub\bigsqcup_{f\sqsub_c\alpha}f$.
\end{proof}

Proposition \ref{union-c-pfn} above indicates that $(a, b)\in \alpha$
if and only if there exists a pfn $f \sqsubseteq_c \alpha$ such that
$(a, b)\in f$.

The following example gives an intuition for the condition $\sqsubseteq_c$.
\begin{example}\label{ex-pfn}
{\rm
Consider the multirelations from Propositions \ref{ex-kleisli} and \ref{ex-parikh-unit}.
Then we have pfns
%----examples of $f \sqsubseteq_c \beta$
$0\sqsub_c 0$, 
$\alpha\sqsub_c \alpha$, 
$\beta \sqsub_c \beta$,  and 
$\alpha,\beta \sqsub_c \gamma$. 
\qed
}
\end{example}

The \emph{Peleg lifting} $\beta_*:\wp(Y)\rel\wp(Z)$ of a relation 
$\beta:Y\rel\wp(Z)$ is defined by 
$$\beta_*=\bigsqcup_{f\sqsub_c\beta}
\hat u_{\dom\beta}f_\circ\enskip,$$
where $f_\circ=\wp(f{\ni}_Z)$ 
is the Kleisli lifting, as previously.
As before, we define the \emph{Peleg composition} of relations 
$\alpha:X\rel\wp(Y)$ and $\beta:Y\rel\wp(Z)$ as 
\begin{equation*}
  \alpha\ast \beta = \alpha\beta_\ast.
\end{equation*}
A set-theoretic definition can be found in the introduction; the Peleg
lifting satisfies
\[
\begin{array}{rcl}
(B,A)\in S_\ast &\liff& \exists f\sqsub_c\beta.\,(\forall b\in B.\,(b,f(b))\in S).
\end{array}
\]
The Peleg lifting can be defined as the composite
$\hat u_{\dom\beta} (\sqcup_{f\sqsub_c\beta} \;f_\circ)$ of the
subidentity and the join of Kleisli liftings of pfns.  In fact, the
Peleg lifting of a relation $\beta$ is the join of Peleg liftings of
pfns $f\sqsub_c\beta$ as shown in the following proposition. 
\begin{proposition}\label{lifting-basic-2}
Let $\beta,\,\beta':Y\rel\wp(Z)$ be
relations and $v\sqsub\id_Y$. 
\begin{enumerate}
\item If $\beta\sqsub\beta'$, then $\beta_*\sqsub\beta'_*$.
\item If $\beta$ is pfn, then $\beta_*=\hat u_{\dom\beta}\beta_\circ$. 
\item If $\beta$ is pfn, then so is $\beta_*$.
\item $\beta_*=\bigsqcup_{f\sqsub_c\beta}f_*$. 
\item $\dom{\beta_*}=\hat u_{\dom\beta}$. 
\item $(v\beta)_*=\hat u_v\beta_*$.
\end{enumerate}
\end{proposition}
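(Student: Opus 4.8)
The plan is to reason directly from the definition $\beta_*=\bigsqcup_{f\sqsub_c\beta}\hat u_{\dom\beta}f_\circ$, taking a single joinand $\hat u_{\dom\beta}f_\circ$ as the basic object. First note that $\hat u_{\dom\beta}f_\circ$ is a partial function, being the composite of the subidentity $\hat u_{\dom\beta}$ with the total function $f_\circ=\wp(f{\ni}_Z)$, and that its domain is $\hat u_{\dom\beta}$ by Proposition~\ref{domain-basic}(d). Three tools recur. Composition distributes over arbitrary joins, so the index set of the defining join may be manipulated freely. An equality of partial functions into $\wp(Z)$ follows from agreement after post-composition with ${\ni}_Z$ once the domains coincide, by Proposition~\ref{equality-power}; together with $\wp(\alpha){\ni}_Z={\ni}_X\alpha$ this turns statements about $f_\circ$ into statements about ${\ni}_Y f{\ni}_Z$. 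And we use the identity $\hat u_v{\ni}_Y=\hat u_v{\ni}_Y v$ established inside the proof of Proposition~\ref{pw-subid-basic}(c).

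Items (b), (c), (d) are then short. For (b), if $\beta$ is a partial function then any $f\sqsub_c\beta$ satisfies $f\sqsub\beta$, $\dom f=\dom\beta$ and $f$ a partial function, hence $f=\beta$ by Proposition~\ref{pfn-basic}(a), so the defining join reduces to the single term $\hat u_{\dom\beta}\beta_\circ$. Item (c) follows at once, $\hat u_{\dom\beta}\beta_\circ$ being a partial function composed with a total function. Item (d) is also immediate: for $f\sqsub_c\beta$, part (b) gives $f_*=\hat u_{\dom f}f_\circ=\hat u_{\dom\beta}f_\circ$, so $\bigsqcup_{f\sqsub_c\beta}f_*$ is exactly the defining join of $\beta_*$.

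For (e), the domain distributes over the join, each term contributes $\dom{\hat u_{\dom\beta}f_\circ}=\hat u_{\dom\beta}$ as above, and the index set $\{f\mid f\sqsub_c\beta\}$ is nonempty (nonempty whenever $\beta\ne 0$ by Proposition~\ref{union-c-pfn}, and containing $0$ when $\beta=0$), so the join collapses to $\hat u_{\dom\beta}$. For (a), given $f\sqsub_c\beta$ with $\beta\sqsub\beta'$, choose by (AC$_*$) a partial function $f'$ with $f\sqsub f'\sqsub_c\beta'$. Then $\dom f=\dom\beta\sqsub\dom\beta'$, so $\hat u_{\dom\beta}\sqsub\hat u_{\dom\beta'}$ by Proposition~\ref{pw-subid-basic}(b), and $f=\dom f f'=\dom\beta f'$ by Proposition~\ref{pfn-basic}(b). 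Post-composing $\hat u_{\dom\beta}f_\circ$ with ${\ni}_Z$, substituting $f=\dom\beta f'$ and collapsing $\hat u_{\dom\beta}{\ni}_Y\dom\beta$ to $\hat u_{\dom\beta}{\ni}_Y$ gives $\hat u_{\dom\beta}f_\circ{\ni}_Z=\hat u_{\dom\beta}f'_\circ{\ni}_Z$; as both sides are partial functions with domain $\hat u_{\dom\beta}$, Proposition~\ref{equality-power} yields $\hat u_{\dom\beta}f_\circ=\hat u_{\dom\beta}f'_\circ\sqsub\hat u_{\dom\beta'}f'_\circ\sqsub\beta'_*$, and joining over $f$ gives $\beta_*\sqsub\beta'_*$.

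The main obstacle is (f). The first step re-indexes the defining join of $(v\beta)_*$ along $f\mapsto vf$: one checks that $f\sqsub_c\beta$ implies $vf\sqsub_c v\beta$, using $\dom{v\beta}=v\dom\beta$ (Proposition~\ref{domain-basic}(e)), and conversely that every $g\sqsub_c v\beta$ has the form $vf$ for some $f\sqsub_c\beta$ — here $g$ is extended by (AC$_*$) to some $f\sqsub_c\beta$ and then $g=\dom g f=v\dom\beta f=vf$. The second, more technical step is the identity $\hat u_v\hat u_{\dom\beta}(vf)_\circ=\hat u_v\hat u_{\dom\beta}f_\circ$: post-composing with ${\ni}_Z$ and using that power subidentities commute together with $\hat u_v{\ni}_Y v=\hat u_v{\ni}_Y$, both sides reduce to $\hat u_v\hat u_{\dom\beta}{\ni}_Y f{\ni}_Z$, and Proposition~\ref{equality-power} applies since both sides are partial functions with domain $\hat u_v\hat u_{\dom\beta}$. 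Combining the two steps with $\hat u_{\dom{v\beta}}=\hat u_{v\dom\beta}=\hat u_v\hat u_{\dom\beta}$ (Propositions~\ref{domain-basic}(e) and~\ref{pw-subid-basic}(a)) yields $(v\beta)_*=\bigsqcup_{f\sqsub_c\beta}\hat u_v\hat u_{\dom\beta}(vf)_\circ=\bigsqcup_{f\sqsub_c\beta}\hat u_v\hat u_{\dom\beta}f_\circ=\hat u_v\beta_*$. The delicate point throughout (f), and in a milder form in (a), is keeping control of the index set of the defining join under these operations, which is exactly where (AC$_*$) enters.
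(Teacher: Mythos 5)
Your proof is correct and follows essentially the same route as the paper for (b)--(e): reduce to the joinands $\hat u_{\dom\beta}f_\circ$, use Proposition~\ref{pfn-basic}(a) to collapse the join when $\beta$ is a pfn, and use totality of $f_\circ$ for the domain computation; your explicit remark that the index set $\{f\mid f\sqsub_c\beta\}$ is always inhabited is a small point the paper leaves tacit. The only genuine divergence is in (f): the paper proves the two inclusions separately, obtaining $\hat u_v\beta_*\sqsub(v\beta)_*$ from the identity $\hat u_v\hat u_{\dom\beta}f_\circ=\hat u_{\dom{v\beta}}(vf)_\circ$ and the reverse inclusion from parts (a) and (e) via $(v\beta)_*=\hat u_{\dom{v\beta}}(v\beta)_*\sqsub\hat u_v\beta_*$, whereas you show that $f\mapsto vf$ maps $\{f\mid f\sqsub_c\beta\}$ onto $\{g\mid g\sqsub_c v\beta\}$ (the surjectivity step via (AC$_*$) and $g=\dom g\,f=vf$ is the one the paper does not need) and then get the equality of the two joins in one pass. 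Both arguments are sound; yours makes the re-indexing explicit and avoids invoking (a) and (e), while the paper's saves the surjectivity argument at the cost of a second inclusion. Your substitution of Proposition~\ref{equality-power} plus ${\ni}_Z$-postcomposition for the paper's appeal to functoriality of $\wp$ and Proposition~\ref{pw-subid-basic}(c) in (a) and in the key identity of (f) is an equivalent reformulation, not a different idea.
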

\begin{proof}
\begin{enumerate}
\item 
% (a) $\tuple{\beta\sqsub\beta'}\to\tuple{\beta_*\sqsub\beta'_*}:$ \\[3pt]
Assume $\beta\sqsub\beta'$ and $f\sqsub_c
\beta$. By the axiom of choice
(AC$_*$) there exists a pfn $f'$
such that $f\sqsub f'\sqsub_c\beta'$.
Then $f=\dom ff'$ by \ref{pfn-basic}\,(b)
and hence
\[\begin{array}{ccll}
\hat u_{\dom\beta}f_\circ
&=& \hat u_{\dom\beta}\wp(f{\ni}_Z) \\
&=& \hat u_{\dom\beta}\wp(\dom ff'{\ni}_Z)
& \{~f=\dom ff'~\} \\
&=& \hat u_{\dom\beta}\wp(f'{\ni}_Z)
& \{~\dom{f'}=\dom\beta,\,
\mbox{\ref{pw-subid-basic}\,(c) }~\} \\
&\sqsub& \hat u_{\dom{\beta'}}
\wp(f'{\ni}_Z) & \{~\beta\sqsub\beta',\;\mbox{\ref{pw-subid-basic}\,(b)}~\}\\
&=& \hat u_{\dom{\beta'}}f'_\circ,
\end{array}\]
which proves the statement. 
\item 
% (b) $\tuple{\beta:\mbox{pfn}}\to\tuple{\beta_*=\hat u_{\dom\beta}\beta_\circ}:$\\[3pt] 
Let $\beta$ be a pfn and $f\sqsub_c\beta$.
Then $f=\beta$ is immediate from
\ref{pfn-basic}\,(a). Hence the statement
is obvious by the definition of Peleg
lifting. 
\item 
% (c) It is a corollary of (b). \\
is a corollary of (b). 
\item 
% (d) $\beta_*=\displaystyle \bigsqcup_{f\sqsub_c\beta}f_*:$
follows from 
\[\begin{array}{ccll}
\beta_*
&=& \bigsqcup_{f\sqsub_c\beta}
\hat u_{\dom\beta}f_\circ \\
&=& \bigsqcup_{f\sqsub_c\beta}
\hat u_{\dom f}f_\circ &\{~\dom f=\dom\beta~\}\\
&=& \bigsqcup_{f\sqsub_c\beta}f_*~.
& \{~\mbox{(b)}~\}
\end{array}\]
\item 
% (e) $\dom{\beta_*}=\hat u_{\dom\beta}:$
follows from 
\[\begin{array}{ccll}
\dom{\beta_*}
&=& \dom{\bigsqcup_{f\sqsub_c\beta}
\hat u_{\dom\beta}f_\circ} \\
&=& \bigsqcup_{f\sqsub_c\beta}
\dom{\hat u_{\dom\beta}f_\circ} \\
&=& \bigsqcup_{f\sqsub_c\beta}
\hat u_{\dom\beta}\dom{f_\circ}
& \{~\mbox{\ref{domain-basic}\,(e)}~\} \\
&=& \bigsqcup_{f\sqsub_c\beta}
\hat u_{\dom\beta}
& \{~f_\circ=\wp(f{\ni}_Y):\mbox{tfn}~\} \\
&=& \hat u_{\dom\beta}.
\end{array}\]
\item 
% (f) $(v\beta)_*=\hat u_v\beta_*:$
With 
\begin{eqnarray*}
 \hat u_v\hat u_{\dom\beta}f_\circ
= \hat u_{\dom\beta}\hat u_vf_\circ 
= \hat u_{\dom\beta}\hat u_v\wp(f{\ni}_Z) 
= \hat u_{\dom\beta}\hat u_v\wp(vf{\ni}_Z) 
= \hat u_{\dom{v\beta}}(vf)_\circ,
\end{eqnarray*}
we have 
\begin{eqnarray*}
\hat u_v\beta_*
= \bigsqcup_{f\sqsub_c\beta}\hat u_v
\hat u_{\dom\beta}f_\circ 
= \bigsqcup_{f\sqsub_c\beta}
\hat u_{\dom{v\beta}}(vf)_\circ 
\sqsub \bigsqcup_{g\sqsub_cv\beta}
\hat u_{\dom{v\beta}}g_\circ 
= (v\beta)_*
\end{eqnarray*}
and
\[\begin{array}{rcll}
(v\beta)_*
&=& \dom{(v\beta)_*}(v\beta)_*
& \{~\alpha=\dom\alpha\alpha~\} \\
&=& \hat{u}_{\dom{v\beta}}(v\beta)_*
& \{~\mbox{%Prop.\,11\,
(e)}%~\dom{\beta_*}=u_{\dom\beta}
~\} \\
&\sqsub& \hat{u}_v \beta_*.
& \{~v\beta\sqsub\beta,~
\mbox{%Prop.\,11\,
(a)}~\}
\end{array}\]
%% before20150130
%\[\begin{array}{ccll}
%(v\beta)_* &=& \sqcup_{g\sqsub_cv\beta}
%\hat u_{\dom{v\beta}}g_\circ \\
%&\sqsub& \sqcup_{f\sqsub_c\beta}
%\hat u_{\dom{v\beta}}(vf)_\circ \\
%%&\sqsub& \sqcup_{f\sqsub_c\beta}
%%\hat u_{\dom{v\beta}}f_\circ \\
%&=& \sqcup_{f\sqsub_c\beta}\hat u_v
%\hat u_{\dom\beta}f_\circ \\
%&=& \hat u_v\beta_*. 
%\end{array}\]
%$g\sqsub_cv\beta\to g\sqsub\beta\to
%\exists f.~g\sqsub f\land f\sqsub_c\beta
%\to g=vf\land f\sqsub_c\beta$.
%\hfill$\square$\\
\end{enumerate}
\end{proof}

%\section{Liftings of power subidentities}

%The singleton map (tfn) $1_Y:Y\to\wp(Y)$
%is defined by
%\[1_Y=(\id_Y)^@,\]
%that is, $1_Y$ is the unique tfn
%such that $1_Y{\ni}_Y=\id_Y$. \\

The following proposition indicates that the singleton maps serve as
the units of Peleg composition, which is well
known~\cite{Furusawa:2015:CDA}, but the algebraic proofs are new.
\begin{proposition}\label{lifting-singleton}
Let $\beta:Y\rel\wp(Z)$ be a relation
and $v\sqsub\id_Y$. 
\begin{enumerate}
\item $1_Y\hat u_v=v1_Y$. 
\item $1_Y\beta_*=\beta$. 
\item $(v1_Y)_*=\hat u_v$.
\item $(1_Y)_*=\id_{\wp(Y)}$.
\end{enumerate}
\end{proposition}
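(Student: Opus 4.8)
The plan is to prove the four items in order, using the machinery of power subidentities from Proposition~\ref{pw-subid-basic}, the basic domain laws of Proposition~\ref{domain-basic}, and the characterisation of pfns into a powerset via Proposition~\ref{equality-power}. Item~(a), $1_Y\hat u_v=v1_Y$, is the base case and everything else will lean on it. Since $1_Y$, $\hat u_v$ and $v$ are all pfns (indeed $1_Y$ and $v1_Y$ are pfns and $1_Y\hat u_v$ is a pfn as a composite of a tfn with a subidentity), I would verify the two hypotheses of Proposition~\ref{equality-power}: equality of domains and equality after postcomposing with ${\ni}_Y$. For domains, $\dom{1_Y\hat u_v}=\dom{1_Y\dom{\hat u_v}}=\dom{1_Y\hat u_v}$ should collapse via Proposition~\ref{domain-basic}\,(b) and the explicit form $\dom{\hat u_v}$; computing $1_Y$-images one sees $(x,\{x\})\in 1_Y$ and $(\{x\},\{x\})\in\hat u_v$ iff $(x,x)\in v$, matching $\dom{v1_Y}=v$. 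For the ${\ni}_Y$-part, $1_Y\hat u_v{\ni}_Y$ versus $v1_Y{\ni}_Y=v$: using Proposition~\ref{pw-subid-basic}\,(d), $\nabla\hat u_v=\nabla v\lhd{\ni}_Y\#$, together with $1_Y{\ni}_Y=\id_Y$, one should be able to push $1_Y$ through and land on $v$. I expect this to be the main obstacle — getting the residual manipulation with $\hat u_v$ to reduce cleanly — and it may be smoother to argue $1_Y\hat u_v{\ni}_Y=1_Y\hat u_v{\ni}_Yv$ (as in the proof of Proposition~\ref{pw-subid-basic}\,(c)) and then that $1_Y\hat u_v{\ni}_Y=\dom{1_Y\hat u_v}1_Y{\ni}_Y=v\id_Y=v$.

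For item~(b), $1_Y\beta_*=\beta$, I would unfold $\beta_*=\bigsqcup_{f\sqsub_c\beta}\hat u_{\dom\beta}f_\circ$, distribute $1_Y$ over the join, and apply (a) to rewrite $1_Y\hat u_{\dom\beta}=\dom\beta 1_Y$. Then $1_Y\hat u_{\dom\beta}f_\circ=\dom\beta 1_Y f_\circ$, and since each $f\sqsub_c\beta$ is a pfn, Proposition~\ref{prop-kleisli}\,(d) gives $\dom f 1_Y f_\circ=f$, i.e. $\dom\beta 1_Y f_\circ=f$ because $\dom f=\dom\beta$. Hence $1_Y\beta_*=\bigsqcup_{f\sqsub_c\beta}f=\beta$ by Proposition~\ref{union-c-pfn}. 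This is the cleanest of the four.

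Item~(c), $(v1_Y)_*=\hat u_v$: here $v1_Y$ is a pfn, so Proposition~\ref{lifting-basic-2}\,(b) gives $(v1_Y)_*=\hat u_{\dom{v1_Y}}(v1_Y)_\circ$. Since $1_Y$ is total, $\dom{v1_Y}=v$ by Proposition~\ref{domain-basic}\,(e) and $\dom{1_Y}=\id_Y$. For the Kleisli lifting, $(v1_Y)_\circ=\wp(v1_Y{\ni}_Y)=\wp(v\id_Y)=\wp(v)$. So $(v1_Y)_*=\hat u_v\wp(v)$, which equals $\hat u_v$ by Proposition~\ref{pw-subid-basic}\,(c). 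Finally item~(d), $(1_Y)_*=\id_{\wp(Y)}$, is the special case $v=\id_Y$ of (c) together with $\hat u_{{\rm id}_Y}=\id_{\wp(Y)}$ from Proposition~\ref{pw-subid-basic}\,(e); alternatively it follows from (b) by taking $\beta=\id_{\wp(Y)}$, or from Proposition~\ref{lifting-basic-2}\,(b) and Proposition~\ref{prop-kleisli}\,(b). I would present (d) as a one-line corollary of (c). Overall the only genuinely delicate point is the residual/subidentity bookkeeping in (a); once that is in hand, (b)–(d) are routine rewrites.
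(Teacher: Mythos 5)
Your treatment of (b), (c) and (d) matches the paper's proof essentially step for step: (b) distributes $1_Y$ over the join defining $\beta_*$, uses (a) to turn $1_Y\hat u_{\dom\beta}$ into $\dom\beta 1_Y$, and finishes with $\dom f1_Yf_\circ=f$ and Proposition~\ref{union-c-pfn}; (c) is the computation $\hat u_v\wp(v)=\hat u_v$; (d) is the instance $v=\id_Y$ of (c). (Your side remark that (d) ``alternatively follows from (b) by taking $\beta=\id_{\wp(Y)}$'' does not typecheck: $\id_{\wp(Y)}$ is not of the form $Y\rel\wp(Z)$ for the same $Y$ as $1_Y$, and (b) only yields $1_Y(1_Y)_*=1_Y$, from which $(1_Y)_*=\id_{\wp(Y)}$ does not follow without cancellation; but your primary derivation of (d) from (c) is exactly the paper's.)

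The genuine issue is (a), which you rightly flag as the crux. Your plan reduces everything to the single identity $\dom{1_Y\hat u_v}=v$: once it is known, $1_Y\hat u_v=\dom{1_Y\hat u_v}1_Y$ from Proposition~\ref{pfn-basic}\,(c) gives (a) outright, and the appeal to Proposition~\ref{equality-power} becomes redundant. But you never establish $\dom{1_Y\hat u_v}=v$ within the calculus: the chain $\dom{1_Y\hat u_v}=\dom{1_Y\dom{\hat u_v}}=\dom{1_Y\hat u_v}$ is circular, the element-wise check ``$(\{x\},\{x\})\in\hat u_v$ iff $(x,x)\in v$'' steps outside the relation-algebraic development the paper insists on, and your second route $1_Y\hat u_v{\ni}_Y=\dom{1_Y\hat u_v}1_Y{\ni}_Y=v$ silently uses $\dom{1_Y\hat u_v}=v$ again. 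The paper instead computes directly: since $1_Y$ is a tfn it distributes over the meet defining $\hat u_v$ and through the residual (Proposition~\ref{lhd-basic}\,(g)), giving $1_Y\hat u_v=(\nabla_{YY}v\lhd{\ni}_Y\#)\sqcap 1_Y$; then (DF) extracts $1_Y$ on the right and $1_Y{\ni}_Y=\id_Y$ collapses the residual to $\nabla_{YY}v\sqcap\id_Y=v$. If you prefer to keep your decomposition, the missing inclusion $v\sqsub\dom{1_Y\hat u_v}$ amounts to $v1_Y\sqsub 1_Y\hat u_v$, which follows from that same distribution together with $v1_Y{\ni}_Y=v\sqsub\nabla_{YY}v$, while $\dom{1_Y\hat u_v}\sqsub v$ follows from $\hat u_v{\ni}_Y=\hat u_v{\ni}_Yv$. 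Either way, some version of the residual manipulation you were hoping to sidestep is unavoidable, so as written (a) has a gap.
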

\begin{proof}
\begin{enumerate}
% (a) $1_Y\hat u_v=v1_Y:$ \\
\item %20150129(ProfKawahara)
follows from
\[\begin{array}{ccll}
1_Y\hat u_v
&=& 1_Y((\nabla_{\wp(Y)Y}v\lhd{\ni}_Y\#)\sqcap\id_{\wp(Y)}) \\
&=& (1_Y\nabla_{\wp(Y)Y}v\lhd{\ni}_Y\#)\sqcap 1_Y
& \{~1_Y=\id_Y^@:\mbox{tfn}~\} \\
&=& (\nabla_{YY}v\lhd{\ni}_Y\#)\sqcap 1_Y \\
&=& ((\nabla_{YY}v\lhd{\ni}_Y\#)1_Y\#\sqcap\id_Y)1_Y
& \{~\mbox{(DF)}~\} \\
&=& ((\nabla_{YY}v\lhd{\ni}_Y\#1_Y\#)\sqcap\id_Y)1_Y
& \{~1_Y:\mbox{tfn}~\} \\
&=& ((\nabla_{YY}v\lhd\id_Y\#)\sqcap\id_Y)1_Y
& \{~1_Y{\ni}_Y=\id_Y~\}\\
&=& (\nabla_{YY}v\sqcap\id_Y)1_Y & \{~\id_Y\#=\id_Y~\}\\
&=& v1_Y.
\end{array}\]
%%% before 20150129
%$\dom{1_Y\hat u_v}
%=v=\dom{v1_Y}$ follows from
%\[\begin{array}{ccll}
%\dom{1_Y\hat u_v}
%&=& \nabla_{Y\wp(Y)}\hat u_v1_Y\#
%\sqcap\id_Y \\
%&=& (\nabla_{YY}v\lhd{\ni}_Y\#)1_Y\#
%\sqcap\id_Y
%&\{~\mbox{\ref{pw-subid-basic}\,(d)}~\}\\
%&=& (\nabla_{YY}v\lhd{\ni}_Y\#1_Y\#)
%\sqcap\id_Y
%&\{~\mbox{\ref{lhd-basic}\,(e)}~\}\\
%&=& (\nabla_{YY}v\lhd\id_Y\#)\sqcap\id_Y
%& \{~1_Y{\ni}_Y=\id_Y~\} \\
%&=& \nabla_{YY}v\sqcap\id_Y \\
%&=& v \\
%&=& \dom{v1_Y}.
%\end{array}\]
%$1_Y\hat u_v\sqsub v1_Y$ follows from 
%\[\begin{array}{ccll}
%1_Y\hat u_v
%&=& v1_Y\hat u_v
%& \{~v=\dom{1_Y\hat u_v}~\} \\
%&\sqsub& v1_Y.
%& \{~\hat u_v\sqsub\id_{\wp(Y)}~\}
%\end{array}\]
%Hence we have $1_Y\hat u_v=v1_Y$ by
%\ref{pfn-basic}\,(a). %\\
%
\item % (b) $1_Y\beta_*=\beta:$
By \ref{equality-power}, %it holds that 
$\dom f1_Yf_\circ=f$ holds since 
it is clear that $\dom{\dom f1_Yf_\circ}=\dom f$ and\\
$\dom f1_Yf_\circ{\ni}_Z=\dom f1_Y{\ni}_Y
f{\ni}_Z=\dom ff{\ni}_Z=f{\ni}_Z$.
So, we have
\[\begin{array}{ccll}
1_Y\beta_*
%&=& 1_Y(\sqcup_{f\sqsub_c\beta}
%\hat u_{\dom f}\wp(f{\ni}_Z)) \\
&=& \bigsqcup_{f\sqsub_c\beta}1_Y
\hat u_{\dom f}f_\circ \\
&=& \bigsqcup_{f\sqsub_c\beta}\dom f1_Yf_\circ
& \{~\mbox{(a)}~1_Y\hat u_v
=v1_Y~\} \\
%&=& \sqcup_{f\sqsub_c\beta}\dom f(f{\ni}_Z)^@
%& \{~\mbox{(*)}~1_Y\wp(f{\ni}_Z)
%=(f{\ni}_Z)^@~\} \\
&=& \bigsqcup_{f\sqsub_c\beta}f \\
%& \{~\mbox{(*)}~\dom f(f{\ni}_Z)^@ =f~\} \\
&=& \beta. & \{~\mbox{\ref{union-c-pfn}}~\}\\
\end{array}\]
%(*) $1_Y\wp(f{\ni}_Z)=(f{\ni}_Z)^@:$ \\
%As $1_Y\wp(f{\ni}_Z)$ and
%$(f{\ni}_Z)^@$ are tfns, it follows
%at once from $1_Y\wp(f{\ni}_Z)
%{\ni}_Z=1_Y{\ni}_Yf{\ni}_Z
%=f{\ni}_Z=(f{\ni}_Z)^@{\ni}_Z$. \\[6pt]
% (*) $\dom f1_Yf_\circ=f:$ \\
%It is easy to see that
%$\dom{\dom f1_Yf_\circ}=\dom f$ and
%$\dom f1_Yf_\circ{\ni}_Z=\dom f1_Y{\ni}_Y
%f{\ni}_Z=\dom ff{\ni}_Z=f{\ni}_Z$.
%Hence $\dom f1_Yf_\circ=f$ holds by
%\ref{equality-power}. \\[6pt]
%
\item follows from
%$(v1_Y)_*=\hat u_v:$
\[\begin{array}{ccll}
(v1_Y)_*
&=& \hat u_v\wp(v1_Y{\ni}_Y)
& \{~\dom{v1_Y}=v~\} \\
&=& \hat u_v\wp(v)
& \{~1_Y=\id_Y^@~\} \\
&=& \hat u_v.
& \{~\mbox{\ref{pw-subid-basic}\,(c)}~\}
\end{array}\]
\item is a corollary of (c).
%$(1_Y)_*=\id_{\wp(Y)}:$ \quad
%It is a corollary of (c).
%\[\begin{array}{ccll}
%(1_Y)_* &=& (\id_Y1_Y)_* \\
%&=& \hat u_{\,id_Y} &\{~\mbox{(c)}~\}\\
%&=& \id_{\wp(Y)}.
%\end{array}\]
%\hfill$\square$\\
\end{enumerate}
\end{proof}
The following proposition is immediate from (b) and (d) of Proposition \ref{lifting-singleton} and Corollary \ref{cor-lift-id}.
\begin{proposition}\label{peleg-unit}
$1_X$ is the identity on $X$ for Peleg composition. 
\end{proposition}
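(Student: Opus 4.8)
The plan is to derive Proposition \ref{peleg-unit} directly from the machinery already in place, so the argument should be short. I would first recall Corollary \ref{cor-lift-id}: for a lifting operator $\lambda$ and a relation $\iota_X:X\rel\wp(X)$, the relation $\iota_X$ is the identity on $X$ for the composition $\bullet$ defined by $\alpha\bullet\beta=\alpha\lambda(\beta)$ exactly when $\lambda(\iota_X)=\id_{\wp(X)}$ and $\iota_X\lambda(\alpha)=\alpha$ for all $\alpha:X\rel Y$. Here the relevant lifting is the Peleg lifting $\lambda(\beta)=\beta_*$, the composition $\bullet$ is Peleg composition $\ast$ (since $\alpha\ast\beta=\alpha\beta_*$ by definition), and the candidate identity is $\iota_X=1_X$, the singleton map.

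Next I would simply invoke the two clauses of Proposition \ref{lifting-singleton} that supply precisely these two facts. Clause (d), $(1_Y)_*=\id_{\wp(Y)}$, gives the condition $\lambda(\iota_X)=\id_{\wp(X)}$ upon taking $Y=X$. Clause (b), $1_Y\beta_*=\beta$ for every $\beta:Y\rel\wp(Z)$, gives the condition $\iota_X\lambda(\alpha)=\alpha$ for every $\alpha:X\rel\wp(Z)$ upon taking $Y=X$. Since these are exactly the two hypotheses of Corollary \ref{cor-lift-id}, the corollary yields that $1_X$ is the identity on $X$ for Peleg composition, which is the assertion. This is really the entire proof; it is a one-line consequence once the preceding propositions are in hand.

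There is essentially no obstacle here — the work has all been done in Propositions \ref{lifting-basic-2} and \ref{lifting-singleton} and in Corollary \ref{cor-lift-id}. If anything, the only point requiring a moment's care is matching types: Corollary \ref{cor-lift-id} is stated for $\alpha:X\rel Y$, whereas Peleg composition and its lifting live on relations of the form $X\rel\wp(Y)$; one should read the corollary with the understanding that the relevant homsets are $\mathit{Rel}(X,\wp(Y))$, and that $1_X:X\rel\wp(X)$ plays the role of $\iota_X$. Since Proposition \ref{lifting-singleton}(b) is already stated in exactly the required generality (arbitrary codomain $\wp(Z)$), no extra argument is needed.

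Concretely, I would write: \emph{By Proposition \ref{lifting-singleton}(d), $(1_X)_*=\id_{\wp(X)}$, so $\lambda(1_X)=\id_{\wp(X)}$ for the Peleg lifting $\lambda(\beta)=\beta_*$. By Proposition \ref{lifting-singleton}(b), $1_X\alpha_*=\alpha$ for every relation $\alpha:X\rel\wp(Y)$, that is, $1_X\lambda(\alpha)=\alpha$. Hence the hypotheses of Corollary \ref{cor-lift-id} are met with $\iota_X=1_X$ and $\bullet=\ast$, and the corollary gives that $1_X$ is the identity on $X$ for Peleg composition.} That completes the proof.
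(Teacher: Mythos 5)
Your argument is correct and is exactly the paper's own derivation: the text states that Proposition \ref{peleg-unit} is immediate from parts (b) and (d) of Proposition \ref{lifting-singleton} together with Corollary \ref{cor-lift-id}, which is precisely the combination you invoke. No gaps; your type-matching remark is a reasonable clarification but not a substantive addition.
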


We now reconsider our standard example in the context of the  Peleg lifting and Peleg composition as an illustration.
\begin{example}
{\rm
Consider the multirelations from the proofs of 
Propositions \ref{ex-kleisli}, \ref{ex-parikh-unit} and Example \ref{ex-pfn}.
Then  $\hat u_{\dom 0}=\{(\emptyset,\emptyset)\}$ and $\hat u_{\dom \alpha} =\hat u_{\dom \beta} =\hat u_{\dom \gamma} =\{(\emptyset,\emptyset),(\{a\},\{a\})\}$. 
Thus we obtain the Peleg liftings
\[
 \begin{array}{ll}
 0_\ast=\hat u_{\dom 0}0_\circ =\{(\emptyset,\emptyset)\}\;,&
 \alpha_\ast=\hat u_{\dom \alpha}\alpha_\circ=\{(\emptyset,\emptyset),(\{a\},\emptyset)\}\;,\\
 \beta_\ast=\hat u_{\dom \beta}\beta_\circ=\{(\emptyset,\emptyset),(\{a\},\{a\})\}\;,&
 \gamma_\ast=\hat u_{\dom \gamma}\alpha_\circ \sqcup \hat u_{\dom \gamma}\beta_\circ=\{(\emptyset,\emptyset),(\{a\},\emptyset),(\{a\},\{a\})\} \enskip
 \end{array}
\]
of these multirelations and the Peleg composition table:
\[
  \begin{array}[b]{c|cccc}
  \ast&0&\alpha&\beta&\gamma\\ \hline
  0& 0&0&0&0\\
  \alpha& \alpha&\alpha&\alpha&\alpha\\
  \beta& 0&\alpha&\beta&\gamma\\
  \gamma& \alpha&\alpha&\gamma&\gamma
 \end{array}
\]
The singleton map $\beta$ is the unit with respect to Peleg
composition. \qed }
\end{example}

It is known that Peleg composition need not be
associative \cite{Furusawa:2015:CDA}. 

\begin{example}[Furusawa and Struth, \cite{Furusawa:2015:CDA}]
{\rm
Let $X=\{a,b\}$, $\alpha, \beta: X\rel \wp(X)$, 
$$
\begin{array}{l}
\alpha =\{(a, \{a, b\}), (a, \{a\}), (b,\{a\})\},\quad
\beta = \{(a, \{a\}), (a, \{b\})\}.
\end{array}
$$
Then 
%$$ \begin{array}{rcl}
%\alpha _* \beta _* &=& \{ (\emptyset, \emptyset),  (\{a\}, \{a\}), (\{a\}, \{b\}), (\{b\}, \{a\}), (\{b\}, \{b\}), (\{a, b\}, \{a\}),(\{a, b\}, \{b\}) \} \enskip,\mbox{ and}\\
%(\alpha \beta_*)_* &=& \{(\emptyset, \emptyset), (\{a\}, \{a\}), (\{a\}, \{b\}), (\{b\}, \{a\}), (\{b\}, \{b\}), (\{a, b\}, \{a\}), (\{a, b\}, \{b\}), (\{a, b\}, \{a, b\})\}\enskip.
%\end{array} $$
$$
\begin{array}{rcl}
(\alpha* \alpha)*\beta &=& \{(a, \{a\}), (a, \{b\}), (b, \{a\}), (b, \{b\})\}\enskip,\mbox{whereas}\\
\alpha*(\alpha*\beta)&=& \{(a, \{a\}), (a, \{b\}), (b, \{a\}), (b, \{b\}), (a, \{a, b\})\}\enskip.
\end{array}
$$
Therefore $(\alpha* \alpha)*\beta \neq \alpha*(\alpha*\beta)$.\qed
}
\end{example}

The question thus arises under which conditions Peleg composition
becomes associative.  In the rest of this paper, we examine
associativity of Peleg composition more closely.  Furusawa and Struth
have shown that Peleg composition is associative if one of the three
multirelations involved is a subidentity~\cite{Furusawa:2015:CDA}. The
next section presents a more general constraint.  After that we
introduce a general 
class of multirelations in 
which Peleg composition is associative. 

%%%%%%%%%%%%%%%%%%%%%%%%%%%%%%%%%%%%%%%%%%%%%%%%%%%%%%%

\section{Associativity of Peleg Composition for Partial Functions}

This section shows that Peleg composition of multirelations is
associative whenever one of the three multirelations that participate
in an instance of an associativity law is a partial function. 
All subidentities are, of course, partial functions.
The following technical lemmas prepare for this result.

For a subidentity $v\sqsub\id_Y$, $v\beta$ and $vf$ are restrictions
of a relation $\beta:Y\rel \wp(Z)$ and a pfn $\beta:Y\rel \wp(Z)$ to
$v$. The following proposition describes the Peleg lifting of such restrictions.
\begin{proposition}\label{v-f-star}
Let $\beta:Y\rel \wp(Z)$ be a relation, %and 
$f:Y\rel\wp(Z)$ a pfn, and $v\sqsub\id_Y$. %Then
\begin{enumerate}
\item $(v\beta)_*=(v1_Y)_*\beta_*$.  
\item $v\sqsub\dom f$ implies 
$(vf)_*=\hat u_vf_\circ=\hat u_vf_*$.
%$v\sqsub\dom f\to (vf)_*
%=\hat u_vf_\circ=\hat u_vf_*$.
\end{enumerate}
\end{proposition}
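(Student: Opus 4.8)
\textbf{Proof plan for Proposition \ref{v-f-star}.}
The plan is to derive both parts directly from the basic identities already established for Peleg liftings, power subidentities and singleton maps, treating the two relational descriptions of a restriction ($v\beta$ for general $\beta$, $vf$ for a pfn) in turn.

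For part (a), I would start from the identity $(v\beta)_* = \hat u_v \beta_*$ of Proposition \ref{lifting-basic-2}(f), and then rewrite $\hat u_v$ as a Peleg lifting using Proposition \ref{lifting-singleton}(c), which gives $(v1_Y)_* = \hat u_v$. Combining the two yields $(v\beta)_* = (v1_Y)_* \beta_*$ immediately. So part (a) is essentially a two-line consequence of results already in hand; the only thing to check is that the typing matches, i.e.\ that $v1_Y : Y \rel \wp(Y)$ composes with $\beta_* : \wp(Y) \rel \wp(Z)$ on the correct objects, which it does.

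For part (b), assume $f$ is a pfn with $v \sqsub \dom f$. Since $f$ is a pfn, $vf$ is a pfn as well, and by Proposition \ref{lifting-basic-2}(b) its Peleg lifting is $(vf)_* = \hat u_{\dom{vf}}(vf)_\circ$. The key subsidiary fact is that $\dom{vf} = v$ when $v \sqsub \dom f$: indeed $\dom{vf} = v\dom f$ by Proposition \ref{domain-basic}(e), and $v\dom f = v$ exactly because $v \sqsub \dom f$. Hence $(vf)_* = \hat u_v (vf)_\circ = \hat u_v \wp(vf{\ni}_Z)$. Now I would push the $v$ out of the Kleisli lifting: $\wp(vf{\ni}_Z) = \wp(v)\wp(f{\ni}_Z) = \wp(v) f_\circ$ by functoriality of $\wp$ (Lemma \ref{power-functor}), and then absorb $\hat u_v \wp(v) = \hat u_v$ by Proposition \ref{pw-subid-basic}(c). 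This gives $(vf)_* = \hat u_v f_\circ$, the first claimed equality. For the second equality $\hat u_v f_\circ = \hat u_v f_*$, use Proposition \ref{lifting-basic-2}(b) again in the form $f_* = \hat u_{\dom f} f_\circ$, so $\hat u_v f_* = \hat u_v \hat u_{\dom f} f_\circ = \hat u_{v\dom f} f_\circ = \hat u_v f_\circ$ by Proposition \ref{pw-subid-basic}(a) and $v \sqsub \dom f$.

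The main obstacle, such as it is, will be bookkeeping with the hypothesis $v \sqsub \dom f$: it is needed in two distinct places in part (b) — once to collapse $\dom{vf}$ to $v$, and once to collapse $\hat u_{v\dom f}$ to $\hat u_v$ — and I would want to make sure the argument for part (a) does not secretly rely on it (it does not, since Proposition \ref{lifting-basic-2}(f) holds for arbitrary $v \sqsub \id_Y$ and arbitrary $\beta$). Otherwise the proof is a routine chain of the already-proven lemmas, and I would present it as two short equational calculations with the cited justifications attached to each step.
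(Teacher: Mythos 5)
Your proof is correct and follows essentially the same route as the paper's: part (b) is step-for-step the paper's calculation, and for part (a) the paper merely inlines the computation behind Proposition \ref{lifting-basic-2}(f) instead of citing it as you do, before applying Proposition \ref{lifting-singleton}(c). No gaps.
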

%Proof.
\begin{proof}
\begin{enumerate}
% (a) $(v1_Y)_*\beta_*=(v\beta)_*:$
\item follows from 
\[\begin{array}{ccll}
(v1_Y)_*\beta_*
&=& \hat u_v\beta_*
& \{~\mbox{\ref{lifting-singleton}\,(c)}~\}\\
&=& \bigsqcup_{f\sqsub_c\beta}\hat u_v
\hat u_{\dom\beta}f_\circ \\
&=& \bigsqcup_{f\sqsub_c\beta}
\hat u_{\dom\beta}\hat u_v(vf)_\circ
& \{~\mbox{\ref{pw-subid-basic}\,(c)}~\} \\
&=& \bigsqcup_{f\sqsub_c\beta}
\hat u_{v\dom\beta}f_\circ \\
&=& (v\beta)_*.
\end{array}\]
\item 
% (b) $v\sqsub\dom f\to (vf)_*
%=\hat u_vf_\circ=\hat u_vf_*:$ \\
Assume $v\sqsub\dom f$. Then
\[\begin{array}{ccll}
(vf)_*
&=& \hat u_{\dom{vf}}(vf)_\circ
& \{~\mbox{\ref{lifting-basic-2}\,(b)}~\} \\
&=& \hat u_v(vf)_\circ
& \{~\dom{vf}=v\dom f=v~\} \\
&=& \hat u_vf_\circ
& \{~\mbox{\ref{pw-subid-basic}\,(c)}~\} \\
&=& \hat u_v\hat u_{\dom f}f_\circ
& \{~\hat u_v\sqsub\hat u_{\dom f}~\} \\
&=& \hat u_vf_*.
\end{array}
\]
%\hfill$\square$\\
\end{enumerate}
\end{proof}
%$f_\circ=\wp(f{\ni}_Z)$
%\[\beta_*=\sqcup_{f\sqsub_c\beta}
%\hat u_{\dom\beta}f_\circ\]
%$\dom{\beta_*}=\hat u_{\dom\beta}$
%\[f_*=\hat u_{\dom f}f_\circ\]

%2014/02/18
\begin{proposition}\label{f-star-0}
Let $f:Y\rel\wp(Z)$ and $g:Z\rel\wp(W)$
be pfns and $\gamma:Z\rel\wp(W)$ a relation.
\begin{enumerate}
%%%%%
%%\item $f\gamma_*
%%=\dom{f\hat u_{\dom\gamma}}
%%\bigsqcup_{g\sqsub_c\gamma}f\wp(g{\ni}_W)$.
%%\item $f_*\gamma_*
%%=\dom{f_*\hat u_{\dom\gamma}}
%%\bigsqcup_{g\sqsub_c\gamma}
%%\wp(f{\ni}_Z)\wp(g{\ni}_W)$,
%%%%%
\item $f\gamma_*
=\bigsqcup_{g\sqsub_c\gamma}
\dom{f\hat u_{\dom\gamma}}fg_\circ$. 
\item $f_*
\gamma_*=\bigsqcup_{g\sqsub_c\gamma}
\dom{f_*\hat u_{\dom\gamma}}f_\circ g_\circ$. 
%%%%%
%%$=\hat u_{\dom f}\dom{\wp(f{\ni}_Z)
%%\hat u_{\dom g}}\wp(f\wp(g{\ni}_W)
%%{\ni}_W)$,
%%%%%
\item $f_* g_*
=\dom{f_*\hat u_{\dom g}}f_\circ g_\circ$. 
\item $(fg_*)_*
=\hat u_{\dom{f\hat u_{\dom g}}}
f_\circ g_\circ$.
\end{enumerate}
%\begin{enumerate}
%%%%%%
%%%\item $f\gamma_*
%%%=\dom{f\hat u_{\dom\gamma}}
%%%\bigsqcup_{g\sqsub_c\gamma}f\wp(g{\ni}_W)$.
%%%\item $f_*\gamma_*
%%%=\dom{f_*\hat u_{\dom\gamma}}
%%%\bigsqcup_{g\sqsub_c\gamma}
%%%\wp(f{\ni}_Z)\wp(g{\ni}_W)$,
%%%%%%
%\item $\displaystyle f\gamma_*
%=\bigsqcup_{g\sqsub_c\gamma}
%\dom{f\hat u_{\dom\gamma}}fg_\circ$
%\item $\displaystyle f_*
%\gamma_*=\bigsqcup_{g\sqsub_c\gamma}
%\dom{f_*\hat u_{\dom\gamma}}f_\circ g_\circ$
%%%%%%
%%%$=\hat u_{\dom f}\dom{\wp(f{\ni}_Z)
%%%\hat u_{\dom g}}\wp(f\wp(g{\ni}_W)
%%%{\ni}_W)$,
%%%%%%
%\item $f_* g_*
%=\dom{f_*\hat u_{\dom g}}f_\circ g_\circ$
%\item $(fg_*)_*
%=\hat u_{\dom{f\hat u_{\dom g}}}
%f_\circ g_\circ$.
%\end{enumerate}
\end{proposition}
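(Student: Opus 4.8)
The plan is to derive all four identities by unfolding the definition of the Peleg lifting, $\gamma_*=\bigsqcup_{g\sqsub_c\gamma}\hat u_{\dom\gamma}g_\circ$, distributing the leading (partial-function) factor over the join, and then sliding the power subidentity $\hat u_{\dom\gamma}$, respectively $\hat u_{\dom g}$, across the partial function in front of it so that it turns into a domain relation on the source. The one tool used for this sliding is Proposition~\ref{pfn-basic}(c): $\beta v=\dom{\beta v}\beta$ for every pfn $\beta$ and subidentity $v$.

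For (a) I would write $f\gamma_*=f\bigl(\bigsqcup_{g\sqsub_c\gamma}\hat u_{\dom\gamma}g_\circ\bigr)=\bigsqcup_{g\sqsub_c\gamma}f\hat u_{\dom\gamma}g_\circ$ and apply Proposition~\ref{pfn-basic}(c) to the pfn $f$ and the subidentity $\hat u_{\dom\gamma}$, rewriting $f\hat u_{\dom\gamma}$ as $\dom{f\hat u_{\dom\gamma}}f$; this is already the claim. For (b), use that $f$ is a pfn, so $f_*=\hat u_{\dom f}f_\circ$ (Proposition~\ref{lifting-basic-2}(b)) and $f_*$ is again a pfn (Proposition~\ref{lifting-basic-2}(c)); then $f_*\gamma_*=\bigsqcup_{g\sqsub_c\gamma}f_*\hat u_{\dom\gamma}g_\circ$, and Proposition~\ref{pfn-basic}(c) applied to the pfn $f_*$ turns a generic summand into $\dom{f_*\hat u_{\dom\gamma}}f_*g_\circ=\dom{f_*\hat u_{\dom\gamma}}\hat u_{\dom f}f_\circ g_\circ$. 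Since $\dom{f_*\hat u_{\dom\gamma}}\sqsub\dom{f_*}=\hat u_{\dom f}$ by Propositions~\ref{domain-basic}(b) and~\ref{lifting-basic-2}(e) and both are subidentities, the trailing $\hat u_{\dom f}$ is absorbed, giving (b). Part (c) is then the case $\gamma=g$ of (b): by Proposition~\ref{pfn-basic}(a) the unique pfn with $g\sqsub_c g$ is $g$ itself, so the join has a single term and (b) specialises to (c).

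The remaining identity (d) is where I expect the work to lie. As $g_*$ is a pfn, the composite $fg_*=(f\hat u_{\dom g})g_\circ$ is a pfn, so Proposition~\ref{lifting-basic-2}(b) gives $(fg_*)_*=\hat u_{\dom{fg_*}}(fg_*)_\circ$, and since $g_\circ$ is total, $\dom{fg_*}=\dom{f\hat u_{\dom g}g_\circ}=\dom{f\hat u_{\dom g}}$ by Proposition~\ref{domain-basic}(d), which already supplies the leading subidentity. Setting $w=\dom{f\hat u_{\dom g}}$, it then remains to prove $\hat u_w(fg_*)_\circ=\hat u_w f_\circ g_\circ$. Here I would use Proposition~\ref{pfn-basic}(c) to replace $f\hat u_{\dom g}$ by $wf$, so $(fg_*)_\circ=((f\hat u_{\dom g})g_\circ)_\circ=(f\hat u_{\dom g})_\circ g_\circ$ by Proposition~\ref{prop-kleisli}(a) and $(f\hat u_{\dom g})_\circ=\wp(wf{\ni}_Z)=\wp(w)f_\circ$ by functoriality of $\wp$ (Lemma~\ref{power-functor}(b)); then $\hat u_w\wp(w)=\hat u_w$ (Proposition~\ref{pw-subid-basic}(c)) collapses $\hat u_w(f\hat u_{\dom g})_\circ$ to $\hat u_w f_\circ$, whence $\hat u_w(fg_*)_\circ=\hat u_w f_\circ g_\circ$ and (d) follows. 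The main obstacle is precisely this last chain: unlike (a)--(c) it does not reduce to shuffling subidentities past partial functions but forces one to unfold the Kleisli liftings down to the membership relation, where care with power subidentities is needed. An alternative treatment of $\hat u_w(fg_*)_\circ=\hat u_w f_\circ g_\circ$ runs via Proposition~\ref{equality-power}: both sides are pfns of type $\wp(Y)\rel\wp(W)$ with domain $\hat u_w$, so it suffices to compare their composites with ${\ni}_W$, which, after repeated use of $\wp(\alpha){\ni}_Y={\ni}_X\alpha$ and of $f\hat u_{\dom g}=wf$, reduces to the identity $\hat u_w{\ni}_Y=\hat u_w{\ni}_Y w$ established inside the proof of Proposition~\ref{pw-subid-basic}(c).
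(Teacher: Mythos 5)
Your proof is correct and follows essentially the same route as the paper's: unfold $\gamma_*$, distribute the leading pfn over the join, and slide the power subidentity across it via Proposition~\ref{pfn-basic}(c), with (c) as the single-term specialisation of (b). For (d) the paper simply invokes Proposition~\ref{v-f-star}(b) after applying (a), whereas you inline that lemma's content (Kleisli functoriality plus $\hat u_w\wp(w)=\hat u_w$), which amounts to the same argument.
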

\begin{proof}
\begin{enumerate}
\item follows from 
$f\gamma_*
= \bigsqcup_{g\sqsub_c\gamma}
f\hat u_{\dom\gamma}g_\circ 
= \bigsqcup_{g\sqsub_c\gamma}
\dom{f\hat u_{\dom\gamma}}fg_\circ$ by \ref{pfn-basic} (c).
%\[\begin{array}{ccll}
%f\gamma_*
%&=& \sqcup_{g\sqsub_c\gamma}
%f\hat u_{\dom\gamma}g_\circ \\
%&=& \sqcup_{g\sqsub_c\gamma}
%\dom{f\hat u_{\dom\gamma}}fg_\circ.
%& \{~\mbox{\ref{pfn-basic}\,(c)}~
%fv=\dom{fv}f~\} \\
%\end{array}\]
\item follows from 
\[\begin{array}{ccll}
f_*\gamma_*
&=& \bigsqcup_{g\sqsub_c\gamma}
f_*\hat u_{\dom\gamma}g_\circ \\
&=& \bigsqcup_{g\sqsub_c\gamma}
\dom{f_*\hat u_{\dom\gamma}}
f_* g_\circ
& \{~\mbox{\ref{pfn-basic}\,(c)}~\} \\
&=& \bigsqcup_{g\sqsub_c\gamma}\dom
{f_*\hat u_{\dom\gamma}}
\hat u_{\dom f}f_\circ g_\circ \\
&=& \bigsqcup_{g\sqsub_c\gamma}
\dom{f_*\hat u_{\dom\gamma}}f_\circ g_\circ.
& \{~\dom{f_*\hat u_{\dom\gamma}}
\sqsub\dom{f_*}=\hat u_{\dom f}~\}
\end{array}\]
%First remark that $fu=\dom{fu}f$
%holds for a pfn $f$ and $u\sqsub\id$. \\
%For $fu=\dom{fu}fu\sqsub\dom{fu}f$ and
%$\dom{\dom{fu}f}=\dom{fu}\dom{f}
%=\dom{fu}$. \\
%
\item is a particular case of (b)
when $\gamma$ is a pfn. 
\item follows from 
%\[\begin{array}{ccll}
%\dom{fg_*}
%&=& \dom{f\hat u_{\dom g}\wp(g{\ni}_W)} \\
%&=& \dom{f\hat u_{\dom g}},
%& \{~\wp(g{\ni}_W):\mbox{tfn}~\}
%\end{array}\]
%First note $fg_*=\dom{f\hat u_{\dom g}}
%f\wp(g{\ni}_W)$ By \ref{f-star-0}\,(b).
%Hence
\[\begin{array}{ccll}
(fg_*)_*
&=& (\dom{f\hat u_{\dom g}}fg_\circ)_*
& \{~\mbox{(a)}~\}\\
&=& \hat u_{\dom{f\hat u_{\dom g}}}
(fg_\circ)_\circ
& \{~\mbox{\ref{v-f-star}\,(b)}~\}\\
&=& \hat u_{\dom{f\hat u_{\dom g}}}
f_\circ g_\circ.
& \{~\mbox{\ref{prop-kleisli}\,(a)}~\} 
\end{array}\]
\end{enumerate}
\end{proof}
\begin{proposition}\label{domain-hat-v}
Let $f:Y\rel\wp(Z)$ be a pfn and
$v\sqsub\id_Z$. Then the
identity $\dom{f_*\hat u_v}
=\hat u_{\dom{f\hat u_v}}$ holds.
\end{proposition}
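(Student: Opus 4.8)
The plan is to compute $\dom{f_*\hat u_v}$ directly and recognise it as a power subidentity.  Since $f$ is a pfn, \ref{lifting-basic-2}\,(b) gives $f_*=\hat u_{\dom f}f_\circ$, so by \ref{domain-basic}\,(e) we get $\dom{f_*\hat u_v}=\hat u_{\dom f}\dom{f_\circ\hat u_v}$.  Writing $f_\circ\hat u_v=f_\circ\sqcap\nabla_{\wp(Y)\wp(Z)}\hat u_v$ (\ref{nabla-basic}\,(a)) and using \ref{domain-basic}\,(c) together with \ref{pw-subid-basic}\,(d), which gives $(\nabla_{\wp(Y)\wp(Z)}\hat u_v)\#=\ni_Z\rhd v\nabla_{Z\wp(Y)}$, we obtain $\dom{f_\circ\hat u_v}=f_\circ(\ni_Z\rhd v\nabla_{Z\wp(Y)})\sqcap\id_{\wp(Y)}$.

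The key step is to evaluate $f_\circ(\ni_Z\rhd v\nabla_{Z\wp(Y)})$.  I would use that $f_\circ=\wp(f\ni_Z)=(\ni_Y f\ni_Z)^@$ is a power transpose and the absorption identity $g^@(\ni_Z\rhd\delta)=g\rhd\delta$, which follows from $g^@\ni_Z=g$, \ref{lhd-basic}\,(b) and \ref{lhd-basic}\,(e) since $g\rhd\delta=(g^@\ni_Z)\rhd\delta=g^@\rhd(\ni_Z\rhd\delta)=g^@(\ni_Z\rhd\delta)$.  With $g=\ni_Y f\ni_Z$, and \ref{lhd-basic}\,(b) again, this gives $f_\circ(\ni_Z\rhd v\nabla_{Z\wp(Y)})=\ni_Y\rhd\mu$ where $\mu=(f\ni_Z)\rhd v\nabla_{Z\wp(Y)}=f\rhd\hat u_v\nabla_{\wp(Z)\wp(Y)}$.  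Since $\mu$ is a residual whose second argument is left-universal on $\wp(Y)$, one shows $\mu=\dom\mu\,\nabla_{Y\wp(Y)}$ (factor $\nabla_{Z\wp(Y)}$ through a singleton, use \ref{lhd-basic}\,(g) as $\nabla_{\wp(Y)I}$ is a tfn, then \ref{pfn-basic}\,(a) and \ref{domain-basic}\,(d)); hence $\ni_Y\rhd\mu$ is left-universal too and $(\ni_Y\rhd\mu)\sqcap\id_{\wp(Y)}=\dom{\ni_Y\rhd\mu}$, using the identity $w\nabla_{XX}\sqcap\id_X=w$ for $w\sqsub\id_X$ (immediate from (DF)).  The final piece is the small lemma $\dom{\ni_Y\rhd w\nabla_{Y\wp(Y)}}=\hat u_w$ for $w\sqsub\id_Y$: factoring $\nabla_{Y\wp(Y)}$ through a singleton, expanding the domain relation explicitly, rewriting $(\ni_Y\rhd w\nabla_{YI})\#=\nabla_{IY}w\lhd\ni_Y\#$, and applying \ref{lhd-basic}\,(g) reproduces the defining expression $(\nabla_{\wp(Y)Y}w\lhd\ni_Y\#)\sqcap\id_{\wp(Y)}$ of $\hat u_w$.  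Together these give $\dom{f_\circ\hat u_v}=\hat u_{\dom\mu}$, so by \ref{pw-subid-basic}\,(a) $\dom{f_*\hat u_v}=\hat u_{\dom f}\hat u_{\dom\mu}=\hat u_{\dom f\cdot\dom\mu}$.

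It remains to identify $\dom f\cdot\dom\mu$ with $\dom{f\hat u_v}$.  Writing $\mu=f\rhd\gamma$ with $\gamma=\hat u_v\nabla_{\wp(Z)\wp(Y)}$, I would first prove $\dom{f\gamma}=\dom f\cdot\dom{f\rhd\gamma}$ for any pfn $f$, which reduces by \ref{domain-basic}\,(e) to $\dom f\,(f\rhd\gamma)=f\gamma$: the inclusion $\sqsub$ follows from the residual law $f\#(f\rhd\gamma)\sqsub\gamma$ and $\dom f\sqsub ff\#$, and the reverse from univalence of $f$, which gives $f\#f\gamma\sqsub\gamma$, i.e.\ $f\gamma\sqsub f\rhd\gamma$, whence $f\gamma=\dom{f\gamma}f\gamma\sqsub\dom f\,(f\rhd\gamma)$ by \ref{domain-basic}\,(a),(b).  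Since $\gamma$ is $\hat u_v$ post-composed with the total relation $\nabla_{\wp(Z)\wp(Y)}$, \ref{domain-basic}\,(d) gives $\dom\gamma=\hat u_v$, hence $\dom{f\gamma}=\dom{f\dom\gamma}=\dom{f\hat u_v}$ by \ref{domain-basic}\,(b).  This yields $\dom{f_*\hat u_v}=\hat u_{\dom{f\hat u_v}}$.

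The main obstacle is the middle step.  The Kleisli lifting $f_\circ$ spreads the restriction $\hat u_v$ over \emph{all} members of a set, so one has to verify that $\dom{f_\circ\hat u_v}$ is again a power subidentity, and exactly that of $\dom{f\hat u_v}$.  The algebraic leverage is the power-transpose absorption $g^@(\ni_Z\rhd\delta)=g\rhd\delta$, which turns the lifted composite into an $\ni$-residual; the left-universality of that residual's second argument then lets one read off the domain as a power subidentity.  The accompanying bookkeeping with singleton-factored universal relations, needed to invoke \ref{lhd-basic}\,(g), is routine but must be done with care.
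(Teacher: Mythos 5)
Your proof is correct, and every step I checked goes through: the absorption identity $g^@({\ni}_Z\rhd\delta)=g\rhd\delta$ is valid (via \ref{lhd-basic}\,(b) and (e) and $g^@{\ni}_Z=g$), the reduction of $\mu=f\rhd\hat u_v\nabla_{\wp(Z)\wp(Y)}$ to the left-universal form $\dom\mu\,\nabla_{Y\wp(Y)}$ by factoring through $I$ is sound, the auxiliary lemma $\dom{{\ni}_Y\rhd w\nabla_{Y\wp(Y)}}=\hat u_w$ holds, and the identity $\dom f\,(f\rhd\gamma)=f\gamma$ for a pfn $f$ is exactly where univalence enters. The route differs from the paper's in presentation more than in substance: the paper works entirely with left residuals and expressions of the shape $(\cdots\lhd{\ni}_Y\#)\sqcap\id_{\wp(Y)}$, proving two displayed identities, namely $\nabla\hat u_v\wp(f{\ni}_Z)\#=(\nabla\hat u_v\lhd f\#)\lhd{\ni}_Y\#$ and $\nabla\hat u_vf\#=\nabla f\#\sqcap(\nabla\hat u_v\lhd f\#)$, and then assembles $\dom{f_*\hat u_v}$ directly from the explicit formulas for $\dom{\cdot}$ and $\hat u_{\cdot}$. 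Your two key steps are precisely the converses of these: your power-transpose absorption applied to $f_\circ$ is the converse of the paper's identity (1), and your $\dom f\,(f\rhd\gamma)=f\gamma$ is (after intersecting with $\dom f\,\nabla$) the converse of its identity (2). What your decomposition buys is a cleaner conceptual reading — the Kleisli lifting turns the power subidentity into an ${\ni}$-residual, whose left-universality lets you read off the domain as $\hat u_{\dom\mu}$, and a separate, reusable fact about pfns identifies $\dom f\,\dom\mu$ with $\dom{f\hat u_v}$ — at the cost of two extra small lemmas (the $w\nabla\sqcap\id=w$ and $\dom{{\ni}_Y\rhd w\nabla}=\hat u_w$ bookkeeping) that the paper avoids by never leaving the $\nabla(\cdot)\#\sqcap\id$ normal form. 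One cosmetic remark: the step $\mu_0=\dom{\mu_0}\nabla_{YI}$ is most directly \ref{pfn-basic}\,(b) rather than (a), though either suffices.
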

%Proof. 
\begin{proof}
Set $\nabla=\nabla_{\wp(Y)\wp(Z)}$
for short. \\
(1) $\nabla\hat u_v\wp(f{\ni}_Z)\#
=(\nabla\hat u_v\lhd f\#)\lhd{\ni}_Y\#:$
\[\begin{array}{ccll}
\nabla\hat u_v\wp(f{\ni}_Z)\#
&=& (\nabla v\lhd{\ni}_Z\#)\wp(f{\ni}_Z)\#
& \{~\mbox{\ref{pw-subid-basic}\,(d)}~\}\\
&=& \nabla v\lhd{\ni}_Z\#\wp(f{\ni}_Z)\#
& \{~\mbox{\ref{lhd-basic}\,(g)}~\} \\
&=& \nabla v\lhd{\ni}_Z\#f\#{\ni}_Y\#
& \{~\wp~\} \\
&=& (\nabla v\lhd{\ni}_Z\#)\lhd f\#{\ni}_Y\#
& \{~\mbox{\ref{lhd-basic}\,(b)}~\} \\
&=& \nabla\hat u_v\lhd f\#{\ni}_Y\#
& \{~\mbox{\ref{pw-subid-basic}\,(d)}~\}\\
&=& (\nabla\hat u_v\lhd f\#)\lhd{\ni}_Y\#.
& \{~\mbox{\ref{lhd-basic}\,(b)}~\}
\end{array}\]
(2) $\nabla\hat u_vf\#
=\nabla f\#\sqcap
(\nabla\hat u_v\lhd f\#):$
\[\begin{array}{ccll}
\nabla\hat u_vf\#
&\sqsub& \nabla f\#\sqcap
(\nabla\hat u_vf\#f\lhd f\#)
& \{~\alpha\sqsub\alpha\beta\lhd\beta\#~\} \\
&\sqsub& \nabla f\#\sqcap
(\nabla\hat u_v\lhd f\#)
& \{~f:\mbox{pfn}~\} \\
&\sqsub& (\nabla\sqcap
(\nabla\hat u_v\lhd f\#)f)f\#
& \{~\mbox{(DF)}~\} \\
&\sqsub& \nabla\hat u_vf\#.
& \{~(\alpha\lhd\beta)\beta\#\sqsub\alpha~\}
\end{array}\]
%
% (3) $\dom{f_*\hat u_v}
% =\hat u_{\dom{f\hat u_v}}:$
By (1) and (2), %it holds that
\[\begin{array}{ccll}
\dom{f_*\hat u_v}
&=& \hat u_{\dom f}\sqcap\dom{\wp(f{\ni}_Z)
\hat u_v} \\
&=& \hat u_{\dom f}\sqcap\nabla
\hat u_v\wp(f{\ni}_Z)\#\sqcap
\id_{\wp(Y)} \\
&=& (\nabla f\#\lhd{\ni}_Y\#)\sqcap
((\nabla\hat u_v\lhd f\#)
\lhd{\ni}_Y\#)\sqcap\id_{\wp(Y)}
& \{~\mbox{(1)}~\}\\
&=& ((\nabla f\#\sqcap(\nabla\hat u_v
\lhd f\#))\lhd{\ni}_Y\#)\sqcap\id_{\wp(Y)}
& \{~\mbox{\ref{lhd-basic}\,(d)}~\}\\
&=& (\nabla\hat u_vf\#\lhd{\ni}_Y\#)
\sqcap\id_{\wp(Y)} & \{~\mbox{(2)}~\} \\
&=& \hat u_{\dom{f\hat u_v}} %.
\end{array}\]
holds. This completes the proof.  %\hfill$\square$\\
\end{proof}
%%%%%%%%%%%%%%%%%%%%%%%%%%%%

Associativity of Peleg composition for pfns now follows from the
following fact, according to Section \ref{sec-comp-lift}.
%2013/11/12
\begin{proposition}\label{associative-pfn}
If $f:Y\rel\wp(Z)$ and $g:Z\rel\wp(W)$
are pfns, then %it holds that 
$f_* g_*=(fg_*)_*$.
%Let $f:Y\rel\wp(Z)$ and $g:Z\rel\wp(W)$
%be pfns. Then $f_* g_*
%=(fg_*)_*$.
%\[f_* g_*
%=(fg_*)_*.\]
\end{proposition}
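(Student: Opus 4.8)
The plan is to obtain the result by chaining three identities already proved above, with essentially no further computation. First I would apply Proposition~\ref{f-star-0}(c), which, since $g$ is a partial function, gives the decomposition
\[
  f_* g_* = \dom{f_*\hat u_{\dom g}}\, f_\circ g_\circ .
\]
Next I would rewrite the domain factor on the right. Since $\dom g\sqsub\id_Z$ and $f$ is a partial function, Proposition~\ref{domain-hat-v} applies with $v := \dom g$ and yields $\dom{f_*\hat u_{\dom g}} = \hat u_{\dom{f\hat u_{\dom g}}}$. Substituting this into the previous line gives $f_* g_* = \hat u_{\dom{f\hat u_{\dom g}}}\, f_\circ g_\circ$. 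Finally, I would observe that the right-hand side of this last equation is precisely the expression furnished by Proposition~\ref{f-star-0}(d) for $(fg_*)_*$ — which is available because both $f$ and $g$ are partial functions. Putting the three steps together yields $f_* g_* = (fg_*)_*$.

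Because the genuinely computational work — the two decompositions in Proposition~\ref{f-star-0} (which in turn rest on Proposition~\ref{pfn-basic}(c)) and the Dedekind-formula manipulations behind Proposition~\ref{domain-hat-v} — has already been discharged in the preceding lemmas, I do not expect a substantial obstacle here: the proof is a three-line equational substitution. If anything needs care, it is only bookkeeping of the partiality hypotheses. Both clauses of Proposition~\ref{f-star-0} that I invoke require partial functions, and Proposition~\ref{domain-hat-v} requires its relation argument to be a pfn and its subidentity argument to lie below the identity; all of these hold here once we take $v = \dom g$. Moreover this explains why the matching identity $(fg_*)_* = f_*g_*$ can be expected only for partial functions and not for arbitrary multirelations: the two decompositions of Proposition~\ref{f-star-0} split off $\dom{}$-factors precisely by using that $f$ (respectively $g$) is a pfn, so that the joins over $f\sqsub_c\beta$ collapse; for general multirelations these steps break down.

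Finally, I would spell out the intended payoff. Once $f_* g_* = (fg_*)_*$ is established, associativity of Peleg composition for any triple in which one of the three participating multirelations is a partial function follows from the generic criterion of Section~\ref{sec-comp-lift}: the equation $(fg_*)_* = f_* g_*$ is exactly the instance of the Kleisli-extension identity $\lambda(\alpha\lambda(\beta)) = \lambda(\alpha)\lambda(\beta)$ at $\lambda = {}_*$, $\alpha = f$, $\beta = g$, and by the relevant direction of Lemma~\ref{lem-lift-asso} this suffices to transport associativity of relational composition to $\ast$ for all such triples — in particular those containing a partial function, and a fortiori a subidentity.
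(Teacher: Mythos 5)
Your proof is correct and is essentially identical to the paper's: both chain Proposition~\ref{f-star-0}(c), then Proposition~\ref{domain-hat-v} with $v=\dom g$, then Proposition~\ref{f-star-0}(d). The surrounding remarks on hypotheses and on the connection to Lemma~\ref{lem-lift-asso} are accurate but add nothing beyond what the paper already records.
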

%Proof.
\begin{proof} 
\[\begin{array}{ccll}
f_* g_*
&=& \dom{f_*\hat u_{\dom g}}f_\circ g_\circ
& \{~\mbox{\ref{f-star-0}\,(c)}~\} \\
&=& \hat u_{\dom{f\hat u_{\dom g}}}
f_\circ g_\circ
& \{~\mbox{\ref{domain-hat-v}}~\} \\
&=& (fg_*)_*.
& \{~\mbox{\ref{f-star-0}\,(d)}~\}
\end{array}\]
 %\hfill$\square$\\
\end{proof}
\begin{proposition}
%If $\alpha:X\rel\wp(Y)$, $f:Y\rel\wp(Z)$,  and if $g:Z\rel\wp(W)$ are 
%pfns, then $(\alpha* f)* g=\alpha*(f*g)$.
Let $\alpha:X\rel\wp(Y)$ be a relation. If $f:Y\rel\wp(Z)$ and $g:Z\rel\wp(W)$ are
pfns, then $(\alpha* f)* g=\alpha*(f*g)$.
\end{proposition}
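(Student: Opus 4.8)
The plan is to reduce this statement to the previous proposition by exploiting the lifting machinery developed in Section~\ref{sec-comp-lift}. Recall that Peleg composition is defined by $\alpha*\beta=\alpha\beta_*$, so the desired identity $(\alpha*f)*g=\alpha*(f*g)$ unfolds to $(\alpha f_*)g_*=\alpha(f g_*)_*$. Since relational composition is associative, $(\alpha f_*)g_* = \alpha(f_* g_*)$, so it suffices to show $f_* g_* = (fg_*)_*$. But that is exactly Proposition~\ref{associative-pfn}, which was just proved for pfns $f$ and $g$. Hence the entire argument is a one-line post-composition of that proposition with an arbitrary relation $\alpha$.

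Concretely, I would write:
\begin{proof}
By the definition of Peleg composition and associativity of relational composition,
\[
(\alpha*f)*g = (\alpha f_*)g_* = \alpha(f_* g_*) = \alpha(fg_*)_* = \alpha*(f*g),
\]
where the third equality is Proposition~\ref{associative-pfn}.
\end{proof}

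I do not anticipate any genuine obstacle here: all the work has already been done in Propositions~\ref{f-star-0}, \ref{domain-hat-v} and \ref{associative-pfn}, and this final statement is simply the promised consequence ``according to Section~\ref{sec-comp-lift}'' — it is the instance of Lemma~\ref{lem-lift-asso} restricted to the case where the two right-hand multirelations are partial functions. The only thing to be careful about is the bookkeeping of types: one checks that $\alpha:X\rel\wp(Y)$, $f_*:\wp(Y)\rel\wp(Z)$ and $g_*:\wp(Z)\rel\wp(W)$ compose correctly, so that both sides are relations $X\rel\wp(W)$, but this is immediate from the typing conventions for the Peleg lifting. If one wanted to be fully explicit one could instead invoke Lemma~\ref{lem-lift-asso} directly, noting that Proposition~\ref{associative-pfn} establishes the hypothesis $\lambda(f\lambda(g))=\lambda(f)\lambda(g)$ of that lemma for the Peleg lifting whenever $f$ and $g$ are pfns; but the direct computation above is shorter and self-contained.
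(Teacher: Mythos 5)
Your proposal is correct and is exactly the argument the paper intends: the proposition is stated without proof as an immediate consequence of Proposition~\ref{associative-pfn} via the mechanism of Lemma~\ref{lem-lift-asso}, and your chain $(\alpha*f)*g=(\alpha f_*)g_*=\alpha(f_*g_*)=\alpha(fg_*)_*=\alpha*(f*g)$ is the faithful unfolding of that reduction. No gaps.
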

In the general case, at least a weak associativity law
holds~\cite{Furusawa:2015:CDA}. Once more we give an algebraic proof.
\begin{corollary}\label{semi-associative}
For relations $\beta:Y\rel\wp(Z)$ and
$\gamma:Z\rel\wp(W)$ the inclusion
$\beta_*\gamma_*
\sqsub(\beta\gamma_*)_*$
holds.
\end{corollary}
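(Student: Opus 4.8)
The plan is to assemble this inclusion from the pointwise associativity of Peleg composition on partial functions (Proposition~\ref{associative-pfn}), the decomposition of a Peleg lifting into a join of liftings of partial functions (Proposition~\ref{lifting-basic-2}(d)), and monotonicity of the Peleg lifting (Proposition~\ref{lifting-basic-2}(a)). The key observation is that everything reduces to the case $f_*g_* = (fg_*)_*$ already established for pfns, together with the fact that relational composition distributes over arbitrary joins.

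First I would expand both factors using Proposition~\ref{lifting-basic-2}(d), writing $\beta_*=\bigsqcup_{f\sqsub_c\beta}f_*$ and $\gamma_*=\bigsqcup_{g\sqsub_c\gamma}g_*$, and then push the composition through the joins to obtain
\[
\beta_*\gamma_* = \bigsqcup_{f\sqsub_c\beta}\ \bigsqcup_{g\sqsub_c\gamma} f_*g_*.
\]
For each such $f$ and $g$, which are partial functions by definition of $\sqsub_c$, Proposition~\ref{associative-pfn} gives $f_*g_* = (fg_*)_*$. It then remains to bound each term $(fg_*)_*$ by $(\beta\gamma_*)_*$.

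This bound follows from monotonicity. Since $f\sqsub_c\beta$ implies $f\sqsub\beta$, we get $fg_*\sqsub\beta g_*$; and since $g\sqsub_c\gamma$ implies $g\sqsub\gamma$, Proposition~\ref{lifting-basic-2}(a) gives $g_*\sqsub\gamma_*$, hence $\beta g_*\sqsub\beta\gamma_*$. Combining, $fg_*\sqsub\beta\gamma_*$, and applying Proposition~\ref{lifting-basic-2}(a) once more yields $(fg_*)_*\sqsub(\beta\gamma_*)_*$. Taking the join over all $f\sqsub_c\beta$ and $g\sqsub_c\gamma$ then gives
\[
\beta_*\gamma_* = \bigsqcup_{f\sqsub_c\beta}\ \bigsqcup_{g\sqsub_c\gamma}(fg_*)_* \sqsub (\beta\gamma_*)_*.
\]

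I expect no serious obstacle here: the corollary is essentially a bookkeeping exercise over the partial-function case. The only points requiring a little care are the use of distributivity of composition over infinite joins when moving from $\beta_*\gamma_*$ to the double join, and making sure that the pointwise inclusion $(fg_*)_*\sqsub(\beta\gamma_*)_*$ is uniform in $f$ and $g$ so that passing to the supremum is legitimate. Both are standard in the relational calculus of Section~\ref{sec-pre}.
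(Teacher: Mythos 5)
Your proof is correct and follows essentially the same route as the paper: expand both liftings via Proposition~\ref{lifting-basic-2}(d), distribute composition over the joins, apply Proposition~\ref{associative-pfn} termwise, and bound each $(fg_*)_*$ by $(\beta\gamma_*)_*$ using monotonicity. The only difference is that you spell out the final monotonicity step (via Proposition~\ref{lifting-basic-2}(a)) which the paper leaves implicit under the annotation $f\sqsub\beta$, $g\sqsub\gamma$.
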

%Proof.
\begin{proof}
It follows from
\[\begin{array}{ccll}
\beta_*\gamma_*
&=& (\bigsqcup_{f\sqsub_c\beta}f_*)
(\bigsqcup_{g\sqsub_c\gamma}g_*)
& \{~\mbox{\ref{lifting-basic-2}\,(d)}~\} \\
&=& \bigsqcup_{f\sqsub_c\beta}
\bigsqcup_{g\sqsub_c\gamma}f_* g_*\\
&=& \bigsqcup_{f\sqsub_c\beta}
\bigsqcup_{g\sqsub_c\gamma}(fg_*)_*
& \{~\mbox{\ref{associative-pfn}}~\} \\
&\sqsub& (\beta\gamma_*)_*.
& \{~f\sqsub\beta,\,g\sqsub\gamma~\} 
\end{array}\]
\end{proof}
This establishes the inclusion $(\alpha * \beta) * \gamma\sqsub\alpha * (\beta * \gamma)$. 

The condition for associativity may be relaxed slightly from
Proposition \ref{associative-pfn}, as the following fact shows.
% 2013/11/07
\begin{proposition}\label{PelegAssocWeak}
For a relation $\beta:Y\rel\wp(Z)$ and
a pfn $g:Z\rel\wp(W)$ the identity
$\beta_* g_*
=(\beta g_*)_*$ holds.
\end{proposition}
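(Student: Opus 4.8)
The plan is to reduce the claim to the partial-function case already settled in Proposition~\ref{associative-pfn} by decomposing $\beta$ as a join of pfns $f\sqsub_c\beta$, using the linearity of the Peleg lifting in its argument (Proposition~\ref{lifting-basic-2}\,(d)). First I would expand the left-hand side: by Proposition~\ref{lifting-basic-2}\,(d), $\beta_*=\bigsqcup_{f\sqsub_c\beta}f_*$, so
\[
\beta_* g_* = \Bigl(\bigsqcup_{f\sqsub_c\beta}f_*\Bigr)g_* = \bigsqcup_{f\sqsub_c\beta}f_* g_*,
\]
and then Proposition~\ref{associative-pfn} rewrites each summand as $f_* g_*=(fg_*)_*$, giving $\beta_* g_*=\bigsqcup_{f\sqsub_c\beta}(fg_*)_*$.

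For the right-hand side I would argue that $(\beta g_*)_*=\bigsqcup_{f\sqsub_c\beta}(fg_*)_*$ as well, which closes the gap. The key observation is that $g_*$ is a pfn (Proposition~\ref{lifting-basic-2}\,(c), since $g$ is a pfn), so relational composition on the right with the pfn $g_*$ is compatible with the $\sqsub_c$ decomposition: each $fg_*$ is again a pfn, and I expect that $f\sqsub_c\beta$ implies $fg_*\sqsub_c\beta g_*$, since composition with a pfn preserves univalence and, by Proposition~\ref{domain-basic}\,(b), $\dom{fg_*}=\dom{f\dom{g_*}}$ so the domains line up appropriately once one restricts to the common domain. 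Conversely every $h\sqsub_c\beta g_*$ should factor (up to domain restriction) through some $f\sqsub_c\beta$ as $h=fg_*$ after suitable restriction, so the two indexed joins agree. Combined with the linearity identity $(\beta g_*)_*=\bigsqcup_{f\sqsub_c\beta}(fg_*)_*$ — which follows from applying Proposition~\ref{lifting-basic-2}\,(d) to the relation $\beta g_*$, once we know its $\sqsub_c$-pfns are exactly the $fg_*$ — we obtain $(\beta g_*)_*=\bigsqcup_{f\sqsub_c\beta}(fg_*)_*=\beta_* g_*$.

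The main obstacle is the bookkeeping around the domain conditions in $\sqsub_c$: the definition requires $\dom f=\dom\beta$, and after composing with $g_*$ the domain shrinks to $\dom{f\,\hat u_{\dom g}}$ rather than staying all of $\dom\beta$, so the naive claim "$f\sqsub_c\beta \Rightarrow fg_*\sqsub_c\beta g_*$" is not literally true and must be handled by passing through Proposition~\ref{f-star-0}\,(b),(d) and Proposition~\ref{domain-hat-v}, exactly as in the proof of Proposition~\ref{associative-pfn}. Concretely, a cleaner route avoiding this subtlety is to mirror the proof of Corollary~\ref{semi-associative} but keep $g$ fixed: write $\beta_* g_*=\bigsqcup_{f\sqsub_c\beta}f_* g_*=\bigsqcup_{f\sqsub_c\beta}(fg_*)_*$, then show the reverse inclusion $(\beta g_*)_*\sqsub\beta_* g_*$ by an explicit pointwise/algebraic computation using $(\beta g_*)_*=\hat u_{\dom{\beta g_*}}\bigl(\bigsqcup_{h\sqsub_c\beta g_*}h_\circ\bigr)$ together with $\dom{g_*}=\hat u_{\dom g}$ (Proposition~\ref{lifting-basic-2}\,(e)) and Proposition~\ref{domain-hat-v}, matching each $h\sqsub_c\beta g_*$ with a witness $f\sqsub_c\beta$. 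The inclusion $\beta_* g_*\sqsub(\beta g_*)_*$ is already Corollary~\ref{semi-associative}, so only this reverse inclusion genuinely needs work, and it is precisely the point where the pfn-ness of $g$ (hence of $g_*$) is essential — for general $\gamma$ the inclusion in Corollary~\ref{semi-associative} is strict.
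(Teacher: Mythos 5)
Your overall strategy coincides with the paper's: the inclusion $\beta_* g_*\sqsub(\beta g_*)_*$ is Corollary~\ref{semi-associative}, and the converse is reduced via $(\beta g_*)_*=\bigsqcup_{h\sqsub_c\beta g_*}h_*$ to finding, for each pfn $h\sqsub_c\beta g_*$, a pfn $f\sqsub\beta$ with $h\sqsub fg_*$, so that $h_*\sqsub(fg_*)_*=f_*g_*\sqsub\beta_*g_*$ by Proposition~\ref{associative-pfn} and monotonicity. However, the one step that carries all the mathematical content --- the construction of this witness $f$ and the verification that it works --- is left as an assertion (``matching each $h\sqsub_c\beta g_*$ with a witness $f\sqsub_c\beta$''). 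That is a genuine gap, not bookkeeping. The witness does not come for free: one must apply (AC$_*$) to the relation $\beta\sqcap h g_*\#$ to obtain a pfn $f$ with $f\sqsub\beta\sqcap hg_*\#$ and $\dom f=\dom{\beta\sqcap hg_*\#}$, then compute $\dom f=\dom h$ (using $\dom{\alpha\sqcap\delta}=\alpha\delta\#\sqcap\id$ and $h\sqsub\beta g_*$) and derive $h=\dom h\,h\sqsub ff\#h\sqsub fg_*h\#h\sqsub fg_*$ from univalence of $h$. None of this appears in your proposal, and without it the claim that every $h\sqsub_c\beta g_*$ ``factors through'' some $f\sqsub\beta$ is unsupported. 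Note also that the witness the paper produces satisfies only $f\sqsub\beta$ with $\dom f=\dom h$, not $f\sqsub_c\beta$ (its domain is $\dom h$, which can be strictly smaller than $\dom\beta$); monotonicity of the Peleg lifting (Proposition~\ref{lifting-basic-2}\,(a)) is what bridges $f_*g_*\sqsub\beta_*g_*$, so insisting on $f\sqsub_c\beta$ would actually get in your way.

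Your first proposed route --- proving $(\beta g_*)_*=\bigsqcup_{f\sqsub_c\beta}(fg_*)_*$ by showing the $\sqsub_c$-pfns of $\beta g_*$ are exactly the $fg_*$ --- does not work as stated, and for a reason beyond the one you flag: even ignoring the forward direction, the family $\{fg_*\mid f\sqsub_c\beta\}$ need not exhaust the pfns $h\sqsub_c\beta g_*$ without the same (AC$_*$)-based argument, so this route collapses into the second one. In short, your plan correctly locates where the pfn-ness of $g$ is used, but the proof of the key reverse inclusion is missing rather than merely deferred.
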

\begin{proof}
As $\beta_* g_*
\sqsub(\beta g_*)_*$ by Corollary
\ref{semi-associative}, 
%we need to show->
it remains to show
the converse inclusion
$(\beta g_*)_*
\sqsub\beta_* g_*$. Since
$
(\beta g_*)_*
= \bigsqcup_{h\sqsub_c\beta g_*}
h_*$, 
%Since
%\[\begin{array}{ccll}
%(\beta g_*)_*
%&=& \sqcup_{h\sqsub_c\beta g_*}
%h_*,
%\end{array}\]
it suffices to see that
$h_*\sqsub\beta_* g_*$
for each pfn $h\sqsub_c\beta g_*$.
Assume that $h\sqsub_c\beta g_*$.
By the axiom of choice (AC$_*$)
there is a pfn $f:Y\rel\wp(Z)$ such that
$f\sqsub\beta\sqcap h g_*\#$ and
$\dom f=\dom{\beta\sqcap h g_*\#}$.
Then the following holds.\\
(1) $\dom f=\dom h:$
\[\begin{array}{ccll}
\dom f &=& \dom{\beta\sqcap h g_*\#} \\
&=& \beta g_* h\#\sqcap\id_Y
& \{~\dom{\alpha\sqcap\beta}
=\alpha\beta\#\sqcap\id~\} \\
&=& \dom{\beta g_*\sqcap h} \\
&=& \dom h. & \{~h\sqsub\beta g_*~\}
\end{array}\]
(2) $h\sqsub fg_*:$
%As $g$ is a pfn, we have $g_*
%=\hat u_{\dom g}\wp(g{\ni}_W)$ by
%\ref{prop-basic}\,(b), and hence
\[\begin{array}{ccll}
h &=& \dom hh \\ &\sqsub& ff\#h
& \{~(1)~\dom h=\dom f\sqsub ff\#~\} \\
&\sqsub& f g_* h\#h
& \{~f\sqsub hg_*\#~\} \\
&\sqsub& fg_*, & \{~h:\mbox{pfn}~\}
\end{array}\]
(3) $h_*\sqsub
\beta_* g_*:$ 
\[\begin{array}{ccll}
h_* &\sqsub& (fg_*)_*
& \{~(2)~\} \\
&=& f_* g_*
& \{~\mbox{\ref{associative-pfn}}~\} \\
&\sqsub& \beta_* g_*.
& \{~f\sqsub\beta~\}
\end{array}\]
This completes the proof. 
\end{proof}
%%%

Thus, the following proposition is obtained by Lemma \ref{lem-lift-asso}.
\begin{proposition}
Let $\alpha:X\rel\wp(Y)$ and $\beta:Y\rel\wp(Z)$ be relations. If $g:Z\rel\wp(W)$ is a pfn, then $(\alpha* \beta)* g=\alpha*(\beta*g)$.
\end{proposition}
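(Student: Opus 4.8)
The plan is to peel off the left argument $\alpha$ and reduce the whole statement to Proposition~\ref{PelegAssocWeak}, which is exactly the single-instance lifting identity needed. Recall that by definition $\alpha\ast\beta = \alpha\beta_\ast$ and $\beta\ast g = \beta g_\ast$ whenever the types match, so both sides of the claimed law can be rewritten using only relational composition and the Peleg lifting:
\[
(\alpha\ast\beta)\ast g = (\alpha\beta_\ast)g_\ast = \alpha(\beta_\ast g_\ast),\qquad
\alpha\ast(\beta\ast g) = \alpha(\beta\ast g)_\ast = \alpha(\beta g_\ast)_\ast .
\]
Here the first chain uses only associativity of relational composition, and the second uses it together with the definition of $\ast$. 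Note that all the types are consistent: $\beta g_\ast : Y\rel\wp(W)$ is again a multirelation, so its Peleg lifting $(\beta g_\ast)_\ast:\wp(Y)\rel\wp(W)$ makes sense, and $\beta_\ast g_\ast$ has the same type.

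The one substantive step is then to apply Proposition~\ref{PelegAssocWeak} with the given pfn $g:Z\rel\wp(W)$, which asserts precisely $\beta_\ast g_\ast = (\beta g_\ast)_\ast$. Substituting this equality into the first displayed chain identifies it with the second, yielding $(\alpha\ast\beta)\ast g = \alpha\ast(\beta\ast g)$. This is the argument of Lemma~\ref{lem-lift-asso} specialised to a single instance: the lifting identity $\lambda(\beta\lambda(g)) = \lambda(\beta)\lambda(g)$ (with $\lambda = (-)_\ast$), valid here because $g$ is a pfn, propagates under left composition with an arbitrary $\alpha$ to the corresponding instance of the associativity law.

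I do not expect any real obstacle: the hard work has already been done in Propositions~\ref{v-f-star}, \ref{domain-hat-v}, \ref{associative-pfn} and \ref{PelegAssocWeak}. What remains is the routine unfolding above, so the proof will be a short chain of equalities whose only nontrivial link is the citation of Proposition~\ref{PelegAssocWeak}. The only thing to be careful about is keeping the converse-and-residual-free bookkeeping honest — i.e.\ that no hidden side condition of Proposition~\ref{PelegAssocWeak} (the pfn hypothesis on $g$) is dropped — but this is immediate from the statement.
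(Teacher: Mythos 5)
Your proposal is correct and is essentially the paper's own argument: the paper derives this proposition from Proposition~\ref{PelegAssocWeak} via the one-directional implication in Lemma~\ref{lem-lift-asso}, which is exactly the unfolding $(\alpha\ast\beta)\ast g=\alpha(\beta_\ast g_\ast)$ and $\alpha\ast(\beta\ast g)=\alpha(\beta g_\ast)_\ast$ that you spell out. Nothing is missing; the pfn hypothesis on $g$ is used only where you use it, namely in invoking Proposition~\ref{PelegAssocWeak}.
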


%%%%%%%%%%%%%%%%%%%%%%%%%%%%%%%%%%%%%%%%%%%%%%%%%%%%%%%%%%

\section{Associativity of Peleg Composition for Union-Closed Multirelations}\label{sec-peleg}
Finally we show that Peleg composition is generally associative for a
restricted class of union-closed multirelations.  This implies that
union-closed multirelations under Peleg composition form categories.

The following notion has been suggested by Tsumagari \cite{Tsumagari-D}. 

A relation $\gamma:Z\rel\wp(W)$ is 
\emph{union-closed} if $\dom\rho(\rho{\ni}_W)^@\sqsub\gamma$
for all relations $\rho:Z\rel\wp(W)$
such that $\rho\sqsub\gamma$.
Set-theoretically,  $\gamma:Z\rel\wp(W)$ is union-closed iff for each $a\in Z$
\[
 \mathcal{B}\neq\emptyset\mbox{ and }
  \mathcal{B}\subseteq\{B\mid\,(a,B)\in\gamma\}
 \mbox{ imply } (a,\bigcup\mathcal{B})\in\gamma. 
\]

For example, every pfn is union-closed,
since the identity $\dom\rho(\rho{\ni}_W)^@
=\rho$ holds for all pfns $\rho
:Z\rel\wp(W)$ by 
$\dom{\dom\rho
(\rho{\ni}_W)^@}=\dom\rho$ and
\begin{eqnarray*}
\dom\rho(\rho{\ni}_W)^@{\ni}_W
= \dom\rho(\rho{\ni}_W)^@{\ni}_W 
= \dom\rho\rho{\ni}_W 
= \rho{\ni}_W.
\end{eqnarray*}
%\[\begin{array}{ccll}
%\dom\rho(\rho{\ni}_W)^@{\ni}_W
%&=& \dom\rho(\rho{\ni}_W)^@{\ni}_W \\
%&=& \dom\rho\rho{\ni}_W \\
%&=& \rho{\ni}_W.
%\end{array}\]
%Hence $\dom\rho(\rho{\ni}_W)^@=\rho$.

%\subsection{2014/02/17}

\begin{proposition}\label{intermediate-pfn}
If a relation $\gamma:Z\rel\wp(W)$ is
union-closed, then for all relations
$\rho:Z\rel\wp(W)$ with $\rho\sqsub\gamma$
there exists a pfn $g:Z\rel\wp(W)$ such
that $g\sqsub_c\gamma$ and $\dom\rho g{\ni}_W=\rho{\ni}_W$. 
%\[\tuple{g\sqsub_c\gamma}\land
%\tuple{\dom\rho g{\ni}_W=\rho{\ni}_W}.\]
\end{proposition}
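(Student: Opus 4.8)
The plan is to construct the desired pfn $g$ by first refining $\rho$ to a pfn $h\sqsub_c\rho$ via the axiom of choice, then using union-closure of $\gamma$ to enlarge the "domain-joined" version of $h$ back to something with the same union-image as $\rho$, and finally refining again to a pfn. More precisely, I would begin with the observation (Proposition~\ref{union-c-pfn}) that $\rho=\bigsqcup_{h\sqsub_c\rho}h$, so that $\dom\rho(\rho{\ni}_W)^@$ is built from the union of the images of all pfns $h\sqsub_c\rho$. Since $\rho\sqsub\gamma$, each such $h$ also satisfies $h\sqsub\gamma$, and one can try to assemble a single pfn capturing all of these images simultaneously.

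The key step is to exhibit a relation $\rho'$ with $\rho'\sqsub\gamma$ whose "union closure" $\dom{\rho'}(\rho'{\ni}_W)^@$ has the same composite with ${\ni}_W$ as the target $\rho{\ni}_W$. The natural candidate is $\rho'=\rho$ itself: union-closure of $\gamma$ gives $\dom\rho(\rho{\ni}_W)^@\sqsub\gamma$ directly from the definition. Now set $\sigma=\dom\rho(\rho{\ni}_W)^@$. This $\sigma$ is already "almost" a pfn in the relevant sense: $(\rho{\ni}_W)^@$ is a tfn, so $\sigma=\dom\rho(\rho{\ni}_W)^@$ is a pfn with $\dom\sigma=\dom\rho$, and $\sigma{\ni}_W=\dom\rho(\rho{\ni}_W)^@{\ni}_W=\dom\rho\rho{\ni}_W=\rho{\ni}_W$. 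So $\sigma$ is a pfn, $\sigma\sqsub\gamma$, and $\dom\rho\,\sigma{\ni}_W=\dom\rho\rho{\ni}_W=\rho{\ni}_W$. It remains only to upgrade $\sigma\sqsub_p\gamma$ to some $g$ with $g\sqsub_c\gamma$: apply the axiom of choice (AC$_*$) to obtain a pfn $g$ with $\sigma\sqsub g\sqsub_c\gamma$. Then $\dom\rho=\dom\sigma\sqsub\dom g$, and by Proposition~\ref{pfn-basic}\,(b) we get $\sigma=\dom\sigma g=\dom\rho g$, whence $\dom\rho\,g{\ni}_W=\sigma{\ni}_W=\rho{\ni}_W$, as required.

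The main obstacle I anticipate is bookkeeping around domains: one must check carefully that the domain equalities $\dom\sigma=\dom\rho$ and $\dom{\dom\rho(\rho{\ni}_W)^@}=\dom\rho$ hold (the latter because $(\rho{\ni}_W)^@$ is total, so $\dom{\dom\rho(\rho{\ni}_W)^@}=\dom{\dom\rho}=\dom\rho$ by Proposition~\ref{domain-basic}\,(d)), and that passing from $\sigma$ to $g$ via (AC$_*$) does not destroy the identity $\dom\rho\,g{\ni}_W=\rho{\ni}_W$ — which is exactly where $\sigma\sqsub g$ together with $\sigma=\dom\rho g$ is needed, rather than merely $\sigma\sqsub g$. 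Everything else is a routine assembly of the stated lemmas on pfns, domains, and the power transpose.
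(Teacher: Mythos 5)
Your proposal is correct and is essentially identical to the paper's own proof: both take $\sigma=\dom\rho(\rho{\ni}_W)^@$, use union-closure to get $\sigma\sqsub\gamma$, apply (AC$_*$) to obtain $\sigma\sqsub g\sqsub_c\gamma$, and conclude via $\dom\rho\, g=\sigma$ that $\dom\rho\, g{\ni}_W=\rho{\ni}_W$. The paper merely leaves implicit the domain bookkeeping that you spell out.
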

\begin{proof}
As $\dom\rho(\rho{\ni}_W)^@$ is a pfn,
by the axiom of choice (AC$_*$)
there exists a pfn $g$ such that
$\dom\rho(\rho{\ni}_W)^@\sqsub g$ and
$g\sqsub_c\gamma$. Hence
\[\begin{array}{ccll}
\dom\rho g{\ni}_W
&=& \dom\rho(\rho{\ni}_W)^@{\ni}_W
& \{~\dom\rho g=\dom\rho(\rho{\ni}_W)^@~\} \\
&=& \dom\rho\rho{\ni}_W \\
&=& \rho{\ni}_W. & \{~\dom\rho\rho=\rho~\} 
\end{array}\]
 \end{proof}
For tfns $f:X\to Y$, $h:X\to X$, and relations $\alpha:X\rel Y$,
$\beta:Y\rel Z$, the following \emph{interchange law} holds:
\[
 [\tuple{f\sqsub\alpha}\land\tuple{h\sqsub f\beta}]\liff
 [\tuple{h\sqsub\alpha\beta}\land\tuple{f\sqsub h\beta\#\sqcap\alpha}]. 
\]
This interchange law is needed for the proof of the next
proposition; 
and so is the strict point axiom (PA$_*$), 
that is,
\[
\begin{array}{ll}
%\mbox{(PA)}\, &%\id_X=\sqcup_{x\in X} \,x\#x, \mbox{ or }   
%\sqcup_{x\,\dot\in\, X}x =\nabla_{IX}, \\
\mbox{(PA$_*$)}\,&\forall\rho:I\rel X.~
\tuple{\rho=\bigsqcup_{x\,\dot\in\,\rho}x}.\\
%\mbox{(AC)}\, & \forall\alpha:X\rel Y.~
%[\tuple{\id_X\sqsub\alpha\alpha\#}\to\exists
%f:X\to Y.~\tuple{f\sqsub\alpha}],\\
%\mbox{(AC$_*$)}\, &\forall\alpha:X\rel Y.~
%[\tuple{f\sqsub\alpha \land f:\mbox{pfn}}\to\exists f':\mbox{pfn}.~
%\tuple{f\sqsub f'\sqsub \alpha\land 
%\dom f=\dom\alpha}],\\
%\mbox{(Sub)}\, &\forall\rho:I\rel X~\exists j:S\to X.~
%\tuple{\rho=\nabla_{IS}j}\land\tuple{jj\#=\id_S},\\
%\mbox{(DF)}\, & \alpha \beta \sqcap \gamma \sqsubseteq  
%\alpha (\beta \sqcap\alpha\#\gamma).
\end{array}
\] 
%in addition to (AC$_*$), (Sub), and (DF).
Note that (PA$_*$) implies (PA). 
\begin{proposition}\label{PelegAssocUC}
Let $\gamma:Z\rel\wp(W)$ be a relation,
and $f:Y\rel\wp(Z)$ and $h:Y\rel\wp(W)$
pfns. If $\gamma$ is union-closed,
$h\sqsub f\gamma_*$ and $\dom h=\dom f$,
then $h_*\sqsub f_*\gamma_*$.
\end{proposition}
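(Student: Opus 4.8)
The plan is to imitate the structure of the proof of Proposition \ref{PelegAssocWeak}, where the key was to reduce an inclusion between Peleg liftings to the pfn-level associativity of Proposition \ref{associative-pfn}. Here the extra ingredient is union-closure of $\gamma$, which (via Proposition \ref{intermediate-pfn}) lets us replace the hypothesis $h\sqsub f\gamma_*$ by the stronger fact that $h$ is dominated, up to its domain, by a single pfn composite $fg_*$ with $g\sqsub_c\gamma$. Since $h$ is a pfn, Proposition \ref{lifting-basic-2}\,(b) gives $h_*=\hat u_{\dom h}h_\circ$, so it suffices to show $h_\circ$ restricted to $\hat u_{\dom h}$ is below $f_*\gamma_*$; and the right-hand side, by Proposition \ref{f-star-0}\,(b), is $\bigsqcup_{g\sqsub_c\gamma}\dom{f_*\hat u_{\dom\gamma}}f_\circ g_\circ$, a join of pfn composites. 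So the target reduces to finding, for the given $h$, a single $g\sqsub_c\gamma$ with $h_*\sqsub f_* g_*=(fg_*)_*$ (the equality by Proposition \ref{associative-pfn}), exactly as in Proposition \ref{PelegAssocWeak}.

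First I would extract, from $h\sqsub f\gamma_*$ together with $\dom h=\dom f$ and Proposition \ref{f-star-0}\,(a), a description of $h{\ni}_W$: we have $f\gamma_*=\bigsqcup_{g\sqsub_c\gamma}\dom{f\hat u_{\dom\gamma}}fg_\circ$, so $h{\ni}_W$ is covered by the corresponding join of the relations $fg_\circ{\ni}_W = f{\ni}_Z\,g{\ni}_W$. The idea is then to use the interchange law and the strict point axiom (PA$_*$) to ``transport'' $h$ back through $f$: because $f$ is a pfn and $\dom h\sqsub\dom f$, there is a pfn-like section letting us write, pointwise over $\dom h$, a relation $\rho:Z\rel\wp(W)$ with $\rho\sqsub\gamma$ and $f\rho$ (suitably domain-restricted) capturing the behaviour of $h$; concretely I expect $\rho$ to be built as $f\#h_\circ\sqcap\gamma$ or a choice-function refinement thereof, intersected so that $\rho\sqsub\gamma$. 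Then union-closure of $\gamma$ via Proposition \ref{intermediate-pfn} yields a pfn $g\sqsub_c\gamma$ with $\dom\rho\, g{\ni}_W=\rho{\ni}_W$, and one checks $h\sqsub fg_*$ (or at least $h_*\sqsub (fg_*)_*$) by the same $\dom$-chasing as in step (2) of the proof of Proposition \ref{PelegAssocWeak}, using $h=\dom h\,h$, $\dom h\sqsub ff\#$, $f\sqsub h g_*\#$, and univalence of $h$.

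Having produced such a $g$, I would conclude $h_*\sqsub(fg_*)_*=f_* g_*\sqsub f_*\gamma_*$, invoking Proposition \ref{associative-pfn} for the middle equality and Proposition \ref{lifting-basic-2}\,(a),(d) for the last inclusion, which finishes the proof.

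The main obstacle is step two: union-closure is a condition about \emph{sets} of output sets being closed under \emph{arbitrary} unions, and turning it into the single pfn $g$ that works \emph{simultaneously} for the whole domain of $h$ is where the strict point axiom and the interchange law must be combined carefully. In the set-theoretic picture, for each $a$ with $(a,C)\in h$ one has $a$-indexed families $\mathcal{B}_a$ inside $\gamma$-images of the points of the set $\{b\mid(a,b)\in f\}$ whose unions along $f$ reconstruct $C$; union-closure collapses each $\mathcal{B}_a$ to one set, and we must glue these choices into one pfn $g$ on $Z$ — this gluing is exactly what (PA$_*$) and (AC$_*$), through Proposition \ref{intermediate-pfn}, are there to supply, but getting the relational bookkeeping of the domains ($\dom h=\dom f$, $\dom\rho$, $\dom{f\hat u_{\dom\gamma}}$) to line up is the delicate part.
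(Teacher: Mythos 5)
Your overall strategy has a genuine gap at its pivot point: the claim that the target ``reduces to finding, for the given $h$, a single $g\sqsub_c\gamma$ with $h_*\sqsub f_*g_*=(fg_*)_*$, exactly as in Proposition~\ref{PelegAssocWeak}''. This reduction is invalid --- a pfn lying below the join $\bigsqcup_{g\sqsub_c\gamma}\dom{f_*\hat u_{\dom\gamma}}f_\circ g_\circ$ need not lie below any single joinand --- and in the present situation it is actually false. Take $Y=\{a,a'\}$, $Z=\{b\}$, $W=\{1,2\}$, $f=\{(a,\{b\}),(a',\{b\})\}$, $h=\{(a,\{1\}),(a',\{2\})\}$ and $\gamma=\{(b,\{1\}),(b,\{2\}),(b,\{1,2\})\}$, which is union-closed. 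All hypotheses of the proposition hold, yet for no single $g\sqsub_c\gamma$ do we have $h\sqsub fg_*$, nor even $h_*\sqsub (fg_*)_*$: the point $a$ forces $g(b)=\{1\}$ while $a'$ forces $g(b)=\{2\}$. (The conclusion $h_*\sqsub f_*\gamma_*$ nevertheless holds, the pair $(\{a,a'\},\{1,2\})$ of $h_*$ being witnessed by the third choice $g(b)=\{1,2\}$ --- which exists precisely because $\gamma$ is union-closed.) The analogy with Proposition~\ref{PelegAssocWeak} is misleading: there the pfn $g$ is fixed in advance and only $f$ has to be extracted by (AC$_*$); here $\gamma$ is a genuine relation and the choice of $g$ cannot be made uniformly in $h$.

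What the paper's proof does instead is localise the choice of $g$ to each input set. Writing $h_*=\bigsqcup_{u_B\sqsub\dom h}B\#Bh_\circ$ over the $I$-points $B$ of $\wp(Y)$ (this uses (PA$_*$)), one fixes $B$ and, for each point $y\sqsub B{\ni}_Y$, uses atomicity of $yh$ to find some $g_y\sqsub_c\gamma$ with $yh=yfg_{y\,\circ}$. These possibly conflicting choices are assembled into the relation $\rho_B=\bigsqcup_{y\sqsub B{\ni}_Y}u_{yf}\,g_y\sqsub\gamma$, and it is here --- not globally --- that union-closure enters: Proposition~\ref{intermediate-pfn} applied to $\rho_B$ produces a pfn $g_B\sqsub_c\gamma$ whose value at each $z$ is, in effect, the union of the $g_y(z)$ over those $y\sqsub B{\ni}_Y$ sending $f$ onto $z$; the interchange law is what reorganises the double join over $y$ and $z$ to make this work. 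One then verifies $Bh_\circ=Bf_\circ g_{B\,\circ}$ and concludes, since $f_*\gamma_*$ is a join over all $g\sqsub_c\gamma$ and each summand $B\#Bf_\circ g_{B\,\circ}$ lands in the joinand indexed by its own $g_B$. Your instinct that (PA$_*$), (AC$_*$) and the interchange law do the gluing is right, but the gluing happens separately for each $B$, with a different $g_B$ for each $B$; your plan of a single relation $\rho$ (and the candidate $f\#h_\circ\sqcap\gamma$, which does not even typecheck as a relation $Z\rel\wp(W)$) cannot deliver that.
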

\begin{proof}
For an $I$-point $A:I\to\wp(X)$, set $u_A=\dom{(A\ni_X)\#}$. 
Let $B:I\to\wp(Y)$ be an $I$-point (tfn)
such that $u_B\sqsub\dom h$. \\
(1) $\forall y\sqsub B{\ni}_Y\,
\exists g_y.~\tuple{g_y\sqsub_c\gamma}
\land\tuple{yh=yfg_{y\,\circ}}:$ \\[3pt]
Assume $y\sqsub B{\ni}_Y$. Then
$y\#y=\dom{y\#}\sqsub\dom{(B{\ni}_Y)\#}
=u_B\sqsub\dom h=\dom f$. This means that
$yh$ and $yf$ are $I$-points (atoms).
Thus
\[\begin{array}{rcll}
h\sqsub f\gamma_*&\to& yh\sqsub yf\gamma_*\\
%& \{~h\sqsub f\gamma_*~\} \\
&\to& yh\sqsub\bigsqcup_{g\sqsub_c\gamma}
yfg_\circ & \{~g_*\sqsub g_\circ~\} \\
%&\to& yh\sqsub\sqcup_{g\sqsub_c\gamma}
%yf\wp(g{\ni}_W) \\
&\to& \exists g_y.~\tuple{g_y\sqsub_c\gamma}
\land\tuple{yh\sqsub yfg_{y\,\circ}}
& \{~yh:\mbox{atom}~\} \\
%&\to& yh\sqsub yfg_{y\,\circ}
%&\{~g_*\sqsub g_\circ~\} \\
&\to& yh=yfg_{y\,\circ}.
&\{~yh,yfg_{y\,\circ}:\mbox{tfn}~\} \\
%&\to& yh{\ni}_W=yf{\ni}_Zg_y{\ni}_W.
\end{array}\]
(2) $\forall z\sqsub Bf_\circ{\ni}_Z.~
\mu_z=z(f{\ni}_Z)\#\sqcap B{\ni}_Y
\ne 0_{IY}:$
\[\begin{array}{ccll}
z &=& z\sqcap Bf_\circ{\ni}_Z
& \{~z\sqsub Bf_\circ{\ni}_Z~\} \\
&=& z\sqcap B{\ni}_Yf{\ni}_Z \\
&\sqsub& (z(f{\ni}_Z)\#\sqcap B{\ni}_Y)
f{\ni}_Z & \{~\mbox{(DF)}~\}\\
&=& \mu_zf{\ni}_Z. %&\{~z\ne 0_{IZ}~\}
\end{array}\]
So, since $z\ne 0_{IZ}$, $\mu_Z\ne 0_{IY}$.\\
(3) $\exists g_B.~\tuple{g_B\sqsub_c\gamma}
\land\tuple{\forall z\sqsub B
f_\circ{\ni}_Z.~zg_B{\ni}_W
=\bigsqcup_{y\sqsub\mu_z}zg_y{\ni}_W}:$ \\[3pt]
Set $\rho_B=\bigsqcup_{y\sqsub B{\ni}_Y}
u_{yf}g_y$. It is trivial that
$\rho_B\sqsub\gamma$ and $\dom{\rho_B}
=\bigsqcup_{y\sqsub B{\ni}_Y}u_{yf}
\dom\gamma$.
\[\begin{array}{ccll}
\rho_B &=& \bigsqcup_{y\sqsub B{\ni}_Y}
\bigsqcup_{z\sqsub yf{\ni}_Z}z\#zg_y
&\{~u_{uf}=\bigsqcup_{z\sqsub yf{\ni}_Z}
z\#z~\} \\
%&\stackrel*=&
&=&
\bigsqcup_{z\sqsub Bf_\circ{\ni}_Z}
\bigsqcup_{y\sqsub\mu_z}z\#zg_y. &\{~\mbox{interchange law}~\}
%\\
\end{array}\]
Hence $z\rho_B=\bigsqcup_{y\sqsub\mu_z}
zg_y$ for all $z\sqsub Bf_\circ{\ni}_Z$.
On the other hand, by
Prop. \ref{intermediate-pfn} we have
\[\exists g_B.~g_B\sqsub_c\gamma\land
\rho_B{\ni}_W=\dom{\rho_B}g_B{\ni}_W.\]
Hence for all $z\sqsub Bf_\circ{\ni}_Z$
%it holds that
\[\begin{array}{ccll}
zg_B{\ni}_W
&=& z\dom{\rho_B}g_B{\ni}_W
& \{~z\#z\sqsub\dom{\rho_B}~\} \\
&=& z\rho_B{\ni}_W
&\{~\rho_B{\ni}_W=\dom{\rho_B}g_B{\ni}_W~\}\\
&=& \bigsqcup_{y\sqsub\mu_z}zg_y{\ni}_W.
&\{~z\rho_B=\bigsqcup_{y\sqsub\mu_z}zg_y~\}\\
\end{array}\]
(4) $Bh_\circ=Bf_\circ g_{B\,\circ}:$
\[\begin{array}{ccll}
Bh_\circ{\ni}_W
&=& B{\ni}_Yh{\ni}_W
& \{~h_\circ=\wp(h{\ni}_W)~\} \\
&=& \bigsqcup_{y\sqsub B{\ni}_Y}yh{\ni}_W
& \{~\mbox{(PA$_*$)}~\} \\
&=& \bigsqcup_{y\sqsub B{\ni}_Y}
yfg_{y\,\circ}{\ni}_W & \{~(1)~\} \\
&=& \bigsqcup_{y\sqsub B{\ni}_Y}yf{\ni}_Z
g_y{\ni}_W \\
&=& \bigsqcup_{y\sqsub B{\ni}_Y}
\bigsqcup_{z\sqsub yf{\ni}_Z}zg_y{\ni}_W
& \{~\mbox{(PA$_*$)}~\} \\
%&\stackrel*=&
&=&
\bigsqcup_{z\sqsub Bf_\circ{\ni}_Z}
\bigsqcup_{y\sqsub\mu_z}zg_y{\ni}_W &\{~\mbox{interchange law}~\}\\
%&=& \sqcup_{z\sqsub\sqcup_{y\sqsub B{\ni}_Y}}
%zg{\ni}_W & \{~zg=~\} \\
&=& \bigsqcup_{z\sqsub Bf_\circ{\ni}_Z}
zg_B{\ni}_W & \{~(3)~\} \\
%&=& \bigcup g(\cup_{b\sqsub B}f(b)) \\
&=& Bf_\circ{\ni}_Zg_B{\ni}_W
& \{~\mbox{(PA$_*$)}~\} \\
&=& Bf_\circ g_{B\,\circ}{\ni}_W.
\end{array}\]
Hence $Bh_\circ=Bf_\circ g_{B\,\circ}$,
since both sides of the last identity
are tfns. \\
(5) $h_*\sqsub f_*\gamma_*:$
\[\begin{array}{ccll}
h_* &=& \dom{h_*}h_\circ \\
&=& \bigsqcup_{u_B\sqsub\dom h}B\#Bh_\circ
& \{~\dom{h_*}
=\bigsqcup_{u_B\sqsub\dom h}B\#B~\} \\
&=& \bigsqcup_{u_B\sqsub\dom h}B\#B
f_\circ g_{B\,\circ} &\{~(4)~\}\\
&\sqsub& \bigsqcup_{g\sqsub_c\gamma}
\bigsqcup_{u_B\sqsub\dom h}B\#B
f_\circ g_\circ \\
%&\{~g_{B\,\circ}\sqsub\bigsqcup_
%{g\sqsub_c\gamma}g_\circ~\}\\
&=& \bigsqcup_{g\sqsub_c\gamma}
\dom{h_*}f_\circ g_\circ
& \{~\dom{h_*}
=\bigsqcup_{u_B\sqsub\dom h}B\#B~\} \\
&=& \bigsqcup_{g\sqsub_c\gamma}
\dom{f_*\dom{\gamma_*}}f_\circ g_\circ
& \{~\dom{h_*}=\dom{f_*
\dom{\gamma_*}}~\} \\
&=& f_*\gamma_*.
& \{~\mbox{\ref{f-star-0}\,(b)}~\}
\end{array}\]
\end{proof}
%%%
%\noindent
%Verification. $\stackrel*=$
%\[\begin{array}{rcll}
%&& \tuple{z\sqsub Bf_\circ{\ni}_Z}
%\land\tuple{b\sqsub\mu_z} \\
%&\liff& \tuple{z\sqsub B{\ni}_Yf{\ni}_Z}
%\land\tuple{b\sqsub z(f{\ni}_Z)\#
%\sqcap B{\ni}_Y} \\
%&\liff& \tuple{b\sqsub B{\ni}_Y}\land
%\tuple{z\sqsub bf{\ni}_Z}.
%\end{array}\]
%%%
Assume that $h\sqsub_c\beta\gamma_*$ for relations 
$\beta:Y\rel\wp(Z)$ and $\gamma:Z\rel\wp(W)$. By 
(AC$_*$), there is a pfn $f:Y\rel\wp(Z)$ such that 
$f\sqsub\beta\sqcap h\gamma_*\#$ and 
$\dom{f}=\dom{\beta\sqcap h\gamma_*\#}$. Then, 
by a calculation that is similar to those for (1) and (2)
in the proof of \ref{PelegAssocWeak}, we have 
$\dom{h}=\dom{f}$ and $h\sqsub f\gamma_*$. 
Thus, by Proposition 
\ref{PelegAssocUC}, $h_*\sqsub\beta_*\gamma_*$ whenever 
$\gamma$ is union-closed. 
Moreover, this implies that 
$$
 (\beta\gamma_*)_* = \bigsqcup_{h\sqsub_c\beta\gamma_*}h_*
 \sqsub\bigsqcup_{f\sqsub_c\beta}f_*\gamma_* 
 =(\bigsqcup_{f\sqsub_c\beta}f_*)\gamma_*
 =\beta_*\gamma_*.
$$
Therefore, together with Corollary \ref{semi-associative}, we have
$\beta_*\gamma_*=(\beta\gamma_*)_*$ if $\gamma$ is union-closed. 
By Lemma \ref{lem-lift-asso}, this implies the following proposition.
\begin{proposition}\label{peleg-asso}
Peleg composition $*$ is associative over union-closed multirelations, namely 
$\alpha *( \beta * \gamma)=(\alpha * \beta) * \gamma$. 
\end{proposition}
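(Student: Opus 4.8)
The plan is to deduce the statement from Lemma~\ref{lem-lift-asso}. Since Peleg composition is $\alpha*\beta=\alpha\beta_*$, that lemma reduces associativity to the Kleisli-extension identity $(\beta\gamma_*)_*=\beta_*\gamma_*$, and it in fact suffices to prove this identity whenever $\gamma:Z\rel\wp(W)$ is union-closed (for arbitrary $\beta:Y\rel\wp(Z)$): by the argument in the forward direction of Lemma~\ref{lem-lift-asso}, $(\beta\gamma_*)_*=\beta_*\gamma_*$ already yields $\alpha*(\beta*\gamma)=(\alpha*\beta)*\gamma$ for all $\alpha$ and $\beta$ and every union-closed $\gamma$, which in particular covers the class of union-closed multirelations. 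One half of the identity, $\beta_*\gamma_*\sqsub(\beta\gamma_*)_*$, is Corollary~\ref{semi-associative} and requires no hypothesis on $\gamma$, so the work lies entirely in the converse inclusion $(\beta\gamma_*)_*\sqsub\beta_*\gamma_*$.

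For the converse inclusion I would first decompose $(\beta\gamma_*)_*=\bigsqcup_{h\sqsub_c\beta\gamma_*}h_*$ by Proposition~\ref{lifting-basic-2}(d), reducing the goal to showing $h_*\sqsub\beta_*\gamma_*$ for each pfn $h\sqsub_c\beta\gamma_*$. Fixing such an $h$, I would invoke the axiom of choice (AC$_*$) to obtain a pfn $f:Y\rel\wp(Z)$ with $f\sqsub\beta\sqcap h\gamma_*\#$ and $\dom f=\dom{\beta\sqcap h\gamma_*\#}$, exactly as in the proof of Proposition~\ref{PelegAssocWeak}. Rerunning the two short Dedekind-formula computations from steps~(1) and~(2) of that proof, with the pfn $g$ there replaced by the relation $\gamma$, produces the two facts $\dom h=\dom f$ and $h\sqsub f\gamma_*$.

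At this point the machinery needed is already available: since $\gamma$ is union-closed, $f$ and $h$ are pfns, $h\sqsub f\gamma_*$ and $\dom h=\dom f$, Proposition~\ref{PelegAssocUC} gives $h_*\sqsub f_*\gamma_*$. As $f\sqsub\beta$ implies $f_*\sqsub\beta_*$ by Proposition~\ref{lifting-basic-2}(a), this yields $h_*\sqsub f_*\gamma_*\sqsub\beta_*\gamma_*$, and taking the join over all $h\sqsub_c\beta\gamma_*$ gives $(\beta\gamma_*)_*\sqsub\beta_*\gamma_*$. Combined with Corollary~\ref{semi-associative} this is $(\beta\gamma_*)_*=\beta_*\gamma_*$, whence Lemma~\ref{lem-lift-asso} delivers $\alpha*(\beta*\gamma)=(\alpha*\beta)*\gamma$.

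The main obstacle is entirely packaged inside Proposition~\ref{PelegAssocUC}. Were that proposition not available, the hard part would be its proof: for each $I$-point $B$ of $\wp(Y)$ one must build a single pfn $g_B\sqsub_c\gamma$ realising simultaneously the pointwise choices $g_y$ attached to the atoms $y\sqsub B{\ni}_Y$ below $B$. Concretely, one forms $\rho_B=\bigsqcup_{y\sqsub B{\ni}_Y}u_{yf}g_y\sqsub\gamma$ (with $u_{yf}$ as in the proof of Proposition~\ref{PelegAssocUC}), uses union-closure through Proposition~\ref{intermediate-pfn} to collapse $\rho_B$ to such a $g_B$ with $\dom{\rho_B}g_B{\ni}_W=\rho_B{\ni}_W$, and then verifies $Bh_\circ=Bf_\circ g_{B\,\circ}$ using the interchange law and the strict point axiom (PA$_*$). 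This amalgamation step is exactly where union-closure is indispensable: without it the per-atom choices cannot be merged into one partial function, and Peleg composition is genuinely non-associative. A minor but essential subtlety in the bridging argument above is that Proposition~\ref{PelegAssocUC} needs the domain \emph{equality} $\dom h=\dom f$, not merely $\dom f\sqsub\dom h$; this equality is forced by $h\sqsub\beta\gamma_*$ together with the choice $\dom f=\dom{\beta\sqcap h\gamma_*\#}$, just as in the proof of Proposition~\ref{PelegAssocWeak}.
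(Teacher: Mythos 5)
Your proposal is correct and follows essentially the same route as the paper: decompose $(\beta\gamma_*)_*$ into the join of $h_*$ over $h\sqsub_c\beta\gamma_*$, use (AC$_*$) to produce $f\sqsub\beta\sqcap h\gamma_*\#$ with $\dom f=\dom{\beta\sqcap h\gamma_*\#}$, derive $\dom h=\dom f$ and $h\sqsub f\gamma_*$ as in Proposition~\ref{PelegAssocWeak}, apply Proposition~\ref{PelegAssocUC}, and combine with Corollary~\ref{semi-associative} and Lemma~\ref{lem-lift-asso}. Your remarks on where union-closure enters and on the necessity of the domain equality match the paper's reasoning exactly.
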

Proposition \ref{peleg-unit} and \ref{peleg-asso} ensure the following proposition. 
\begin{proposition}
Union-closed multirelations form a category with Peleg composition $*$ and the unit $1_{X}$ on each $X$. 
\end{proposition}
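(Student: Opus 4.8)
The plan is to verify the category axioms directly for the structure with sets as objects, union-closed multirelations $\alpha:X\rel\wp(Y)$ as arrows $X\to Y$, Peleg composition $*$ as composition, and the singleton map $1_X$ as identity on $X$. Three of the required facts are already in hand. Since $1_X=\id_X^@$ is a total function, hence a partial function, and every partial function is union-closed (as noted just before Proposition~\ref{intermediate-pfn}), each $1_X$ is an arrow. The unit laws $1_X*\alpha=\alpha$ and $\alpha*1_Y=\alpha$ hold for all suitably typed $\alpha$ by Proposition~\ref{peleg-unit}; in particular these composites are union-closed whenever $\alpha$ is. Associativity $\alpha*(\beta*\gamma)=(\alpha*\beta)*\gamma$ holds by Proposition~\ref{peleg-asso}, which applies because every arrow of the category is union-closed.

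The only substantial point left is closure of the arrow class under composition: if $\alpha:X\rel\wp(Y)$ and $\beta:Y\rel\wp(Z)$ are union-closed, then $\alpha*\beta=\alpha\beta_*$ is union-closed (this genuinely fails without union-closure of $\alpha$). Set-theoretically it is clear. Fixing $a$ and writing $\mathcal D=\{D\mid(a,D)\in\alpha\beta_*\}$, each $D\in\mathcal D$ carries a witness $(a,B_D)\in\alpha$ with $B_D\subseteq\dom\beta$ and a pfn $f_D\sqsub_c\beta$ with $D=\bigcup f_D(B_D)$; union-closure of $\alpha$ gives $(a,B)\in\alpha$ for $B=\bigcup_{D\in\mathcal D}B_D$, union-closure of $\beta$ applied at each $b\in B$ to the nonempty family $\{f_D(b)\mid D\in\mathcal D,\ b\in B_D\}$ gives $g(b)$ with $(b,g(b))\in\beta$, and extending these to a pfn $g\sqsub_c\beta$ by Proposition~\ref{intermediate-pfn} while using $f_D(b)\subseteq D$ for $b\in B_D$ yields $\bigcup g(B)=\bigcup_{D\in\mathcal D}D$, so $(B,\bigcup\mathcal D)\in\beta_*$ and hence $(a,\bigcup\mathcal D)\in\alpha\beta_*$.

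For a relation-algebraic argument I would first record that $\gamma$ is union-closed iff for every $\rho\sqsub\gamma$ there is $g\sqsub_c\gamma$ with $\dom\rho g{\ni}_W=\rho{\ni}_W$: the forward direction is Proposition~\ref{intermediate-pfn}, and conversely $\dom\rho g$ and $\dom\rho(\rho{\ni}_W)^@$ are pfns with domain $\dom\rho$ and the same composite with ${\ni}_W$, hence equal by Proposition~\ref{equality-power}, so $\dom\rho(\rho{\ni}_W)^@=\dom\rho g\sqsub\gamma$. Given $\rho\sqsub\alpha\beta_*$, I would factor it via (DF) as $\rho\sqsub\sigma\beta_*$ with $\sigma=\alpha\sqcap\rho\beta_*\#\sqsub\alpha$ and $\dom\sigma=\dom\rho$, and then rebuild a single witness pair $(B,g)$ by the $I$-point and (PA$_*$) bookkeeping from the proof of Proposition~\ref{PelegAssocUC}: collapse the $\alpha$-witnesses in each row of $\sigma$ by union-closure of $\alpha$, and glue the corresponding Kleisli liftings along that row by union-closure of $\beta$ and Proposition~\ref{intermediate-pfn}. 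The hard part is exactly this coordinated gluing: $\beta_*$ is not monotone in its first argument, since $(B,D)\in\beta_*$ and $B\subseteq B'$ need not give $(B',D)\in\beta_*$, so enlarging the $\alpha$-witness to $\bigcup_D B_D$ invalidates the individual $\beta_*$-factorisations unless the liftings $f_D$ are merged simultaneously; the two closures must be interleaved, much as steps (3) and (4) of the proof of Proposition~\ref{PelegAssocUC} interleave the per-point choices $g_y$ with the per-set choice $g_B$. With that machinery the bookkeeping is routine, and the four claims together establish the category.
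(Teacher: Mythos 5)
Your proposal is correct, and its skeleton coincides with the paper's: the paper derives this proposition in one line from Proposition~\ref{peleg-unit} (the singleton maps are units) and Proposition~\ref{peleg-asso} (associativity, which only needs the third argument union-closed and hence holds throughout the class). Where you genuinely go beyond the paper is in isolating and proving the remaining category axiom that the paper's one-line proof silently assumes: that the Peleg composite $\alpha*\beta=\alpha\beta_*$ of two union-closed multirelations is again union-closed, so that the hom-sets are actually closed under composition. Your set-theoretic argument for this is sound --- collapsing the $\alpha$-witnesses $B_D$ by union-closure of $\alpha$ and then merging the liftings $f_D$ into a single $g\sqsub_c\beta$ via Proposition~\ref{intermediate-pfn} applied to $\rho=\bigsqcup_D\{(b,f_D(b))\mid b\in B_D\}\sqsub\beta$, which gives $g(b)=\bigcup_{D:\,b\in B_D}f_D(b)$ on $B=\bigcup_D B_D$ and hence $\bigcup g(B)=\bigcup\mathcal D$ --- and your observation that the non-monotonicity of $\beta_*$ in its first argument is exactly why the two closures must be merged simultaneously is the right diagnosis of where a naive argument would break. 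Two small points: union-closure quantifies over arbitrary nonempty subfamilies $\mathcal B\subseteq\{B\mid(a,B)\in\gamma\}$, so you should run the argument for an arbitrary nonempty $\mathcal D$ rather than the full fibre (the proof is unchanged); and your characterisation of union-closure via the existence of $g\sqsub_c\gamma$ with $\dom\rho g{\ni}_W=\rho{\ni}_W$, justified through Proposition~\ref{equality-power}, is a correct and useful reformulation that the paper states only in one direction. What the paper's terseness buys is brevity; what your version buys is an actually complete verification of the category axioms, and the closure lemma is worth recording explicitly.
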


\section{Conclusion}\label{sec-con}

We have used relational calculi for studying three kinds of
composition of multirelations through suitable liftings, which have
been inspired by Kleisli extensions in the context of Kleisli
categories. We have introduced relational definitions of the Kleisli
and Peleg lifting.  Then, we have shown that Kleisli composition is
associative but need not have units, and that the singleton map serves
as the unit of Peleg composition.  We have also shown some basic
properties of Parikh composition without restriction to up-closed
multirelations, in contrast to Martin and Curtis \cite{Martin2013}. It
is known that Peleg composition need not be associative
\cite{Furusawa:2015:CDA}.  Introducing the notion of union-closed
multirelations, we have shown that Peleg composition becomes
associative if the third argument is union-closed.  It is obvious that
the singleton map is union-closed.  The set of union-closed
multirelations thus forms a category under Peleg composition.  The
main contribution of this work is thus the translation from complex
non-standard reasoning about multirelations to well known tools,
namely reasoning with complex higher-order set-theoretic definitions
or a non-associative operation of sequential composition can be
replaced by standard relational reasoning, and categories of
multirelations can be defined and standard category-theoretic tools
applied.

Binary relations of type $X \times Y$ have often been studied as
nondeterministic functions of type $X\to \wp(Y)$ in the category
$\mathit{Set}$ of sets and functions.  It is well known that the
Kleisli category of the resulting powerset monad is isomorphic to the
category of sets and binary relations under standard composition.  A
similar correspondence exists between the category of up-closed
multirelations of type $X\times \wp(Y)$ with respect to Parikh
composition and that of certain doubly nondeterministic functions of
type $X\to \wp^2(Y)$ \cite{Dawson07,HansenKL14,MartinCurtis08}.  The
well known isomorphisms between (multi)relations and classes of
predicate transformers can be explained elegantly in this setting of
monadic computation.

The constructions in this article, however, require greater
generality, because relations of type $X\times \wp(Y)$ as arrows in
$\mathit{Rel}$, which have motivated the lifting operations in this
article, do not fall within the standard monadic setting. This is
evidenced by the fact that natural transformations $\eta$, as they
arise in Kleisli triples $(\wp,\eta,\mu)$, need not exist for Kleisli
and Parikh composition, whereas definitions of $\mu$ in this setting,
for instance for Peleg composition, seem far from obvious.  A more
detailed comparison of lifting constructions in monads and
multirelations thus presents a very interesting avenue for further
investigation.

\section*{Acknowledgment}
The authors are grateful to the anonymous referees, as well as those
for the previous RAMiCS 2015 version, for their careful reading and
helpful comments.  They acknowledge support by the Royal
Society and JSPS KAKENHI grant number 25330016 and 16K21557 for this
research.  They are grateful to Koki Nishizawa and Toshinori Takai for
enlightening discussions.  The fourth author would like to thank
Ichiro Hasuo and members of his group at the University of Tokyo for
their generous support.

%%%%

\end{document}